\documentclass[12pt]{article}

\newcommand{\outputDir}{output}

\usepackage[margin=.9in]{geometry}
\usepackage{XCharter} 
\usepackage[english]{babel}
\usepackage{ae,aecompl,amsmath,amsbsy,amssymb,eurosym,amssymb,amsthm,bm}
\usepackage{mathtools}

\usepackage{comment}
\usepackage{color}
\usepackage[dvipsnames]{xcolor}
\usepackage{pdflscape}
\usepackage{enumerate}
\usepackage{rotating}
\usepackage{adjustbox,algorithm2e}

\usepackage{setspace}
\usepackage{url}
\usepackage{xr} 
\usepackage{xr-hyper}
\usepackage[colorlinks]{hyperref}
\hypersetup{
  colorlinks=true,
  linkcolor=blue,
  citecolor=blue,      
  urlcolor=cyan,
  filecolor=orange 
}
\usepackage{tikz}
\usetikzlibrary{positioning,calc}

\usepackage{fancyhdr}
\usepackage{bbm}
\usepackage{authblk}
\usepackage[multiple]{footmisc}

\usepackage[round]{natbib}

\usepackage[]{caption}
\usepackage{graphicx}
\usepackage{longtable}
\usepackage{float}
\usepackage{tikz}
\usepackage{multirow}
\usepackage[flushleft]{threeparttable}
\usepackage{subcaption}
\usepackage{booktabs}
\usepackage{tabularx}
\newcolumntype{s}{>{\hsize=.33\hsize}X}

\usepackage{tabularray}
\usepackage{float}
\usepackage{graphicx}
\usepackage{codehigh}
\usepackage[normalem]{ulem}
\UseTblrLibrary{booktabs}
\UseTblrLibrary{siunitx}

\NewTableCommand{\tinytableDefineColor}[3]{\definecolor{
#1}{#2}{#3}}
\usepackage{setspace}
\doublespacing

\usepackage{tikz}
\usetikzlibrary{matrix, positioning,patterns}


\makeatletter
\newcommand*{\addFileDependency}[1]{
  \typeout{(#1)}
  \@addtofilelist{#1}
  \IfFileExists{#1}{}{\typeout{No file #1.}}
}
\makeatother

\makeatletter
\@ifclassloaded{beamer}{
  
}
\makeatother

\newtheorem{assumption}{Assumption}[section]

\newtheorem{lemma}{Lemma}
\newtheorem{prop}{Proposition}
\newtheorem{corollary}{Corollary}

\theoremstyle{definition}

\newtheorem{remark}{Remark}[section]

\newcommand{\E}{\mathbb{E}}
\newcommand{\Var}{\mathbb{V}}
\newcommand{\Cov}{\mathbb{C}\text{ov}}

\newcommand{\argmin}{\mathop{\mathrm{argmin}}}

\newcommand{\plim}{\text{plim}}

\newcommand{\norm}[1]{\left\|#1\right\|}
\newcommand{\normtwo}[1]{\norm{#1}_2}
\newcommand{\normF}[1]{\norm{#1}_F}


\DeclareMathOperator{\vectorize}{vec} 
\DeclareMathOperator{\im}{Im} 

\begin{document}
\onehalfspacing

\title{\textbf{On Causal Inference with Model-Based Outcomes}\thanks{{\small We thank Manuel Arellano, Stephane Bonhomme, Kirill Borusyak, Nezih Guner, Lance Lochner, Patrick Kline, Victor Sancibrian, Pedro Sant'Anna, Davide Viviano, and Dean Yang for their feedback, which has greatly improved the paper. We also thank seminar participants at Harvard, UCLA, Carlos III, CEMFI, IFAU, Tel-Aviv U., Hebrew U., and Warwick, and participants in the 15th Research Workshop Banco de Espa\~na - CEMFI, ELSE 2024, SAE 2024, Econometric Society Winter meeting 2024, Barcelona GSE Summer Forum 2024, ASSA 2025, Causal Inference with Panel Data workshop 2025 at Emory, and JSM 2025 for their helpful comments. This paper incorporates and extends the ideas developed in the previously circulated draft ``Using Event Studies as an Outcome in Causal Analysis'' by the same authors. We gratefully acknowledge financial support from the Maria de Maeztu Unit of Excellence CEMFI MDM-2016-0684 Grants CEX2020-001104-M, PID2021-128992NB-I00, PID2022-143184NA-I00, and RYC2022-037963-I; Funded by MICIU/AEI/10.13039/501100011033, ESF+,  and by “ERDF/EU”, which covered data access to the Dutch Central Bureau of Statistics and travel expenses to present the project. 
All views and any errors are our own.}} }

\author{Dmitry Arkhangelsky\thanks{{\small  CEMFI, \url{darkhangel@cemfi.es}, corresponding author.}} 
\vspace{-1.0em}
\and Kazuharu Yanagimoto\thanks{{\small CEMFI, \url{kazuharu.yanagimoto@cemfi.edu.es}.}}
\vspace{-1.0em}
\and Tom Zohar \thanks{{\small CEMFI, CESifo, \url{tom.zohar@cemfi.es}.}}}

\vspace{-2em}
\date{
\today }
\maketitle\thispagestyle{empty}

\vspace{-2em}

\begin{abstract}
\footnotesize
	We study the estimation of causal effects on group-level parameters identified from microdata (e.g., child penalties). We demonstrate that standard one-step methods (such as pooled OLS and IV regressions) are generally inconsistent due to an endogenous weighting bias, where the policy affects the implicit weights (e.g., altering fertility rates). In contrast, we advocate for a two-step Minimum Distance (MD) framework that explicitly separates parameter identification from policy evaluation. This approach eliminates the endogenous weighting bias and requires explicitly confronting sample selection when groups are small, thereby improving transparency. We show that the MD estimator performs well when parameters can be estimated for most groups, and propose a robust alternative that uses auxiliary information in settings with limited data. To illustrate the importance of this methodological choice, we evaluate the effect of the 2005 Dutch childcare reform on child penalties and find that the conventional one-step approach yields estimates that are substantially larger than those from the two-step method.
\end{abstract}

\vspace{1em}
\noindent 
\textbf{Keywords:} Causal Inference, Model-Based Outcomes, GMM, Minimum Distance, Child Penalty.
\vspace{0.5em} 
\noindent
\clearpage

\section{Introduction}
\label{sec:intro}
A significant body of empirical work assesses policies implemented at a group level (e.g., firm, school, local labor market, or municipality) and their impact on aggregate group-level outcomes. Rather than simple averages of individual data, research interest often centers on specific parameters estimated from micro-level models; examples include firm-specific wage premia \citep{Abowd1999}, bank-specific credit supply \citep{khwaja2008tracing}, firm-level productivity \citep{amiti2007trade}, or the ``child penalty'' reflecting labor market consequences following childbirth \citep{Kleven2019a}. These model-based parameters represent key group-level features that may be shaped by the policy of interest. Common estimation strategies involve either one-step approaches, such as micro-level regressions with policy interactions, or two-step methods, where group-specific parameters are estimated first before being related to the policy in a second stage.

Estimating causal effects on such model-based outcomes faces a core complication: the policy typically shifts the entire distribution of the underlying micro-level data. For instance, labor policies can alter firm-level employment rates relevant for wage premia, while childcare expansion can change the average fertility rate in a given municipality. This distributional response fundamentally challenges causal analysis, raising key questions: How to define causal effects when the underlying population distribution is policy-dependent? And under what conditions do standard one-step or two-step estimation strategies identify such effects?

This paper addresses these questions by providing a systematic approach to causal inference for policy effects on model-based group-level parameters. We make three main contributions.

First, we develop a formal econometric framework that explicitly defines group-specific parameters through linear population moment conditions (i.e., a model). We posit that each counterfactual policy level yields a distinct potential distribution of micro-level data. Since the parameters of interest are functionals of these distributions, this mapping induces a causal structure on the group-level objects themselves (i.e., model-based outcomes). To define the causal effect of interest, we introduce an "oracle" estimand: the effect that would be recovered by a given policy evaluation strategy if the group-specific parameters were directly observed rather than estimated. This precise definition allows us to separate the structural impact of the policy on the parameters of interest from its influence on the implicit method-specific weights used in one-step estimation.

Second, we demonstrate a fundamental problem with standard one-step estimation methods, such as OLS with policy interactions. By analyzing these approaches within the Generalized Method of Moments (GMM) framework, we uncover a novel source of inconsistency arising from endogenous weighting: by altering the aggregate features of the micro-level data, the policy generically affects the implicit GMM weights. This correlation between the policy and the effective weights renders the estimator inconsistent for the target causal effect. This mechanism is analogous to a "bad control" problem \citep{Angrist2009}, but one that operates through the estimator's weighting matrix rather than the covariates.

Third, we analyze the properties of two-step Minimum Distance (MD) estimators. While the relationship between one-step and two-step estimators in linear regression models has a long history in econometrics (e.g., \citealp{amemiya1978note,Arellano2012,Chamberlain1992,de1986random,Graham2012,kline2024firm,Muris2022}), this literature has focused on statistical challenges. Instead, we re-evaluate this choice through the lens of causal analysis, focusing on a broader class of models with linear moment conditions. This change in perspective produces a new mechanism---policy-induced endogenous weighting---that is mechanically absent in prior work. We argue that the two-step estimator is strictly preferable for policy evaluation because it decouples parameter identification from the estimation of policy effects, allowing for transparent, exogenous weighting that eliminates the bias of the one-step estimator.

Crucially, this robustness requires explicitly confronting data limitations that one-step methods obscure. When group sizes are small, group-specific parameters cannot be estimated for some of the groups. This feature is not merely a theoretical concern but a practical reality, even with administrative datasets comprising millions of observations, as researchers focus on narrowly defined subpopulations. While one-step procedures hide these data failures— through opaque weighting or imputation—the two-step estimator makes the selection problem transparent. We demonstrate that when the share of discarded groups is small (i.e., most groups are sufficiently large), this concern vanishes alongside the within-group estimation noise. In this regime, the two-step estimator is asymptotically equivalent to the infeasible "oracle" that observes the true parameters; this allows researchers to proceed as if the outcomes were known, thereby permitting the use of standard inference techniques (e.g., cluster-robust standard errors). Conversely, when most groups are small, the explicit nature of the two-step approach allows researchers to diagnose the selection problem and, if necessary, stabilize the estimator using auxiliary population information.

To demonstrate the practical relevance of our theoretical findings, we present an empirical evaluation of the 2005 Dutch childcare expansion on "child penalty" measures. This policy setting provides a suitable case study in which the reform likely influenced both the outcome of interest (parental labor supply) and the distribution of the underlying individual-level data (fertility decisions). We find that a conventional one-step estimation strategy (similar to those in \cite{Kleven2024,Rabate2021}) yields estimates of the policy's impact that are substantially larger than those obtained using our robust two-step approach. While we do not claim this divergence is driven exclusively by the bias, the economic magnitude of the discrepancy highlights that results in such settings are highly sensitive to the estimator's implicit weighting scheme. Given this sensitivity, we argue that the transparency and flexibility of the two-step estimator make it the safer choice for applied policy evaluation.

Our analysis of one-step estimators relates to the recent literature on DiD methods dealing with heterogeneous treatment effects (e.g., \citealp{Borusyak2024}, \citealp{Callaway2021}, \citealp{dechaisemartin2020}, \citealp{goodman2021difference}, \citealp{Sun2021}; see \cite{Arkhangelsky2024b} for a survey) and related literature on heterogeneity in linear regressions \citep{Goldsmith-Pinkham2024}. While the mechanism we identify—policy-dependent weighting—is conceptually related to the concerns raised in \cite{Goldsmith-Pinkham2024}, our setting presents a distinct challenge. The standard remedy of saturating the model with interactions is structurally infeasible here because the group-level policy is perfectly collinear with group fixed effects. Consequently, the bias we identify necessitates a different solution, thereby motivating our focus on the two-step approach.

Our findings also connect to the literature on GMM estimation challenges, particularly concerning weighting matrices (e.g., \citealp{altonji1996small,newey2004higher}). Unlike work focusing on finite-sample biases that arise from using estimated optimal weights, the GMM inconsistency we document can occur even with simple, pre-specified weights. It is driven by the policy affecting the population moments within the GMM objective function. Similarly, the related asymptotic inconsistency we find for two-step estimators in small groups stems from policy-induced sample selection affecting the first stage, rather than challenges specific to estimating optimal second-stage weights.

Finally, our proposal of using auxiliary population moments to stabilize the two-step estimator connects to design-based identification strategies. It parallels design-based estimation strategies, where known assignment probabilities are used to address composition issues, and resonates with recent approaches that incorporate known population quantities directly into estimation (e.g., \citealp{Arkhangelsky2024c,Borusyak2023,Borusyak2024a}). More broadly, as demonstrated by \cite{imbens1994combining}, aggregate information generically improves micro-based estimation; in our model, the availability of such information mitigates inconsistency arising from finite group sizes.

The remainder of the paper is organized as follows: we begin by developing our framework and illustrate it with a motivating example (Section~\ref{sec:framework}). We then analyze the properties of one-step GMM estimators (Section~\ref{sec:identification}), and two-step Minimum Distance estimators (Section~\ref{sec:md_estimator}), highlighting pitfalls related to endogenous weighting and sample selection. Section~\ref{sec:empirics} provides an empirical illustration using Dutch administrative data. Finally, Section~\ref{sec:conclusion} concludes. All the technical proofs are collected in Appendix~\ref{app:theory}.

\section{Framework}
\label{sec:framework}
In this section, we develop an econometric framework for estimating causal effects on model-based group-level parameters. We proceed in two steps. First, we present a motivating example that guides our analysis.\footnote{To streamline the presentation, in the main text we focus on a single example, with Appendix~\ref{app:examples} discussing additional examples relevant for applications in labor, banking, and international trade.} Second, we formalize the data-generating process and define the target parameters. This allows us to rigorously define the ``oracle'' estimator that researchers aim to mimic and to contrast the properties of two feasible estimation strategies: a two-step Minimum Distance (MD) estimator and a one-step Generalized Method of Moments (GMM) estimator.

A central feature of our analysis is the hierarchical structure of the data: the policy variation of interest operates at the aggregate level $g$, yet identification relies on unit-level data indexed by $(g,i)$. We adopt this explicit double-index notation to highlight two features of the economic environment that a usual single-index notation obscures. First, identification requires us to distinguish the assumptions regarding policy assignment (the ``aggregate'' variation, $g$) from those governing individual choices (the ``micro'' variation, $i$). Second, the explicit group index allows us to confront the statistical reality of group-level dependence and varying sample sizes. Data within groups are naturally dependent because units share common policy shocks. Furthermore, the magnitude of the group sizes dictates the relevant statistical regime: when groups are large, within-group estimation uncertainty is negligible, but when groups are small, this uncertainty is first-order, and parameters may not be identified at all. Our analysis directly addresses these challenges.

\subsection{Motivating Example: Child Penalties}\label{subsec:motivating_example}

Consider the estimation of the ``child penalty'' -- the effect of motherhood on women's labor market outcomes (e.g., \citealp{Kleven2019a}). Let $Y_{g,i,t}$ be the outcome (e.g., earnings) for individual $i$ in municipality $g$ at time $t \in \{1, 2\}$. Let $E_{g,i}$ be an indicator equal to one if the individual experiences a first birth between periods 1 and 2. A standard DiD specification relates outcomes to individual fixed effects $\gamma_{g,i}$, group-time effects $\delta_{g,t}$, and the event:
\begin{align*}
    Y_{g,i,t} = \gamma_{g,i} + \delta_{g,t} + \tau_g E_{g,i} \mathbf{1}\{t=2\} + \epsilon_{g,i,t}.
\end{align*}
The parameter of interest, $\tau_g$, represents the average child penalty in municipality $g$. Identification of $\tau_g$ relies on the regression in differences: 
\begin{equation*}
    \Delta Y_{g,i} = \Delta \delta_g + \tau_g E_{g,i} + \Delta \epsilon_{g,i}.
\end{equation*}

Suppose we wish to evaluate the effect of a randomly assigned municipality-level policy $W_g$ (e.g., childcare provision) on the magnitude of the child penalty. Consider a thought experiment where the true group-specific penalties $\tau_g$ were directly observable. In this idealized setting, the policy analysis would be trivial: one would simply regress the penalty on the policy:
\begin{align*}
    \tau_g = \alpha + \beta W_g + \nu_g.
\end{align*}
With $W_g$ being randomly assigned, $\beta$ captures an average causal effect of childcare availability on the career costs of parenthood. The problem with this analysis is, of course, that $\tau_g$ is not directly observed in the data.

Lacking observations of $\tau_g$, a common empirical strategy is to substitute the model $\tau_g = \alpha + \beta W_g$ directly into the micro-level equation. This leads to a one-step estimating equation where the researcher pools data from all municipalities:
\begin{align*}
 \Delta Y_{g,i}  = \Delta\delta_g +(\alpha + \beta W_g) E_{g,i} + \varepsilon_{g,i}.
\end{align*}
By estimating this equation via OLS, this strategy implicitly weights each municipality by the variance of the treatment indicator $E_{g,i}$ (i.e., the fertility rate). If the policy $W_g$ (childcare) encourages fertility, it systematically shifts these weights toward treated municipalities, effectively destroying the original random assignment of $W_{g}$ and yielding an inconsistent estimator.

This weighting concern is not specific to OLS estimation; it extends naturally to Instrumental Variable (IV) settings. In many contexts, fertility decisions $E_{g,i}$ are viewed as endogenous to labor market potential (e.g., \citealp{lundborg2017can}). To address this, a researcher might instrument fertility using a shock $Z_{g,i}$ (e.g., IVF success). The ``One-Step'' strategy for estimating the policy effect $\beta$ remains conceptually identical: the researcher estimates the pooled equation above using Two-Stage Least Squares (TSLS). Here, the endogenous interaction term is $(\alpha + \beta W_g) E_{g,i}$, and the researcher employs $Z_{g,i}$ and the interaction $Z_{g,i} W_g$ as instruments. 

However, just as the OLS estimator is weighted by the variance of fertility, the IV estimator is implicitly weighted by the covariance between the instrument and the treatment (the strength of the first stage). If the policy $W_g$ affects the compliance rate---for instance, by making additional IVF rounds more affordable---the effective weights shift endogenously, yielding an inconsistent estimator even if $W_g$ is randomly assigned.

An alternative strategy, which we formalize as Minimum Distance (MD), aligns more closely with the thought experiment described above and accommodates both OLS and IV identification. One first estimates $\hat{\tau}_g$ within each municipality (using either OLS or IV as appropriate) and then regresses these estimates on $W_g$. As we show below, this separates parameter identification from policy evaluation, offering robustness to the composition and compliance effects that one-step approaches lack.

\begin{remark}[The role of heterogeneity] \label{rem:did_heterogeneity}
For simplicity, the exposition above assumes a constant effect $\tau_g$, thereby demonstrating that the biases we describe are not attributable to unit-level heterogeneity. If true effects vary across units, the group-specific moments capture an average effect (ATE in OLS, or LATE in IV). A policy $W_g$ might then influence this average not only through changes in individual effects $\tau_{g,i}$ but also by shifting the population composition relevant for the average. Appendix~\ref{sec:aggregate_interpretation} discusses these direct and indirect channels.
\end{remark}

\subsection{Formal Setup}\label{subsec:formal_setup}

We now describe a general framework behind the motivating example. Consider a collection $\mathcal{G}$ of $G$ groups, indexed by $g$. Within each group $g$, we observe data $D_{g,i} \in \mathcal{D}$ for $i=1, \dots, n_g$ units, drawn i.i.d. from a group-specific distribution $F_g$. A group-level policy $W_g \in \mathcal{W} \subseteq\mathbb{R}^p$ is assumed to causally affect this data-generating process via potential outcome distributions $F_g(w)$ for $w \in \mathcal{W}$. The observed distribution satisfies 
\begin{align*}
    F_g = F_g(W_g),
\end{align*}
effectively imposing a group-level Stable Unit Treatment Value Assumption (SUTVA).

We focus on a $k$-dimensional parameter vector $\bm{\theta}_g\in \mathbb{R}^k$. We define this parameter implicitly via $k$ population moment conditions:
\begin{equation*} 
\E_{F_g}[h(D_{g,i}, \bm{\theta}_g)] = \mathbf{0}_k,
\end{equation*}
where $h: \mathcal{D} \times \Theta \to \mathbb{R}^k$ is a known function. This definition anchors $\bm{\theta}_g$ directly to the underlying data distribution.\footnote{Throughout the text, we index within-group expectations using $F_g$. Formally, any such expectation is a random variable due to its dependence on $W_g$, which induces a random distribution $F_g(W_g)$. We use expectations without the $F_g$ subscript to average over group-level uncertainty, which always includes $W_g$.}
Since $F_g$ depends causally on $W_g$, this naturally generates potential outcomes for the parameter:
\begin{align*}
\bm{\theta}_g(w)\,\text{ solves }\,\E_{F_g(w)}[h(D_{g,i}, \bm{\theta}_g(w))] = \mathbf{0}_k.
\end{align*}
Defining the outcome $\bm{\theta}_g(w)$ via moment conditions clarifies its economic interpretation, which has important implications for downstream causal analysis (see Appendix~\ref{sec:aggregate_interpretation}). We focus on moment functions linear in the parameter:
\begin{align*}
h(D_{g,i}, \bm{\theta}) = h_{1}(D_{g,i}) - h_{2}(D_{g,i})\bm{\theta},
\end{align*}
which allows us to include both OLS and IV-type strategies discussed in the previous section. 
Letting $H_{j,g}(w) := \E_{F_{g}(w)}[ h_{j}(D_{g,i})]$ and assuming $H_{2,g}(w)$ is invertible, the potential outcome for $\bm{\theta}_g$ is
\begin{align*}
    \bm{\theta}_g(w) = H_{2,g}(w)^{-1} H_{1,g}(w).
\end{align*}
The interpretation of $\bm{\theta}_g(w)$ is, of course, application-specific; in our empirical example, it corresponds to a child penalty, identified either through OLS or IV restrictions on earnings and fertility decisions.\footnote{While we will assume that $H_{2,g}(w)$ is invertible throughout the paper, the same might not be true for its sample counterpart---an issue we will discuss at length in Section~\ref{sec:md_estimator}.}

To anchor our analysis, we consider an idealized benchmark estimator representing what applied researchers might compute if the true group-specific parameters $\bm{\theta}_g = \bm{\theta}_g(W_g)$ were directly observable. This benchmark, which we term the ``oracle'' estimator, takes the form of linear regression motivated by common practice in policy evaluation:
\begin{equation}\label{eq:main_object}
(B^{\star}, \bm{\alpha}^{\star}, \{\bm{\lambda}^{\star}_g\}) := \argmin_{\substack{B, \bm{\alpha}, \{\bm{\lambda}_g\} \\ \text{s.t. } \Gamma^\top \bm{\alpha} = 0, B \in \mathcal{B}_0}} \sum_{g=1}^{G}\left\|\bm{\theta}_g - (\bm{\alpha} + \Gamma \bm{\lambda}_g + B W_{g})\right\|_2^2.
\end{equation}
Here, $B$ ($k \times p$) captures the policy effects of primary interest which are a priori constrained to lie in a fixed linear subspace $\mathcal{B}_0$, $\bm{\alpha}$ ($k \times 1$) is an intercept term, $\Gamma$ ($k \times q$) captures the user-specified dimensions of unobserved group heterogeneity, and $\bm{\lambda}_g$ ($q \times 1$) represents group-specific coefficients (fixed effects) along these dimensions. The constraint $\Gamma^\top \bm{\alpha} = 0$ is a normalization imposed to guarantee a unique solution and does not affect the value of $B^{\star}$.  To ensure unique identification of $B^{\star}$, the parameter subspace $\mathcal{B}_0$ and the matrix $\Gamma$ must satisfy a structural alignment condition (Assumption~\ref{as:identification_subspace}). This requirement is automatically satisfied by standard specifications used in applied work.

Specification~\eqref{eq:main_object} encompasses standard approaches used for policy evaluation. For instance, if $\bm{\theta}_g$ is one-dimensional and the policy is randomly assigned, setting $\Gamma$ to zero yields the oracle specification discussed in the motivating example. Alternatively, if the analysis relies on a version of the parallel trends assumption rather than random assignment (which will be the case in our empirical analysis in Section~\ref{sec:empirics}), then $\Gamma\ne 0$ allows for group-level fixed effects.  In particular, if components of $\bm{\theta}_{g}$ correspond to different time periods, then \eqref{eq:main_object} includes a standard two-way equation,
\begin{equation*}
    \theta_{g,t} = \alpha_t + \lambda_g + \beta W_{g,t} + \nu_{g,t},
\end{equation*}
with $\Gamma = (1,\dots,1)^{\top}$ and $\mathcal{B}_0$ being a linear subspace spanned by the identity matrix (enforcing a constant $\beta$ across periods). In this case, the constraint $\Gamma^\top \bm{\alpha} = \sum_{t} \alpha_t = 0$ is a normalization that separates the time and group fixed effects and does not affect the value of $\beta$. Our framework also naturally extends to IV rather than OLS policy regressions, with Appendix~\ref{sec:extensions} illustrating this option.

While the oracle regression \eqref{eq:main_object} is infeasible in practice, it defines the target quantity of interest for our analysis. That is, we assume that \eqref{eq:main_object} is the procedure the researcher intends to implement but cannot because $\bm{\theta}_g$ is unknown. The causal interpretation of $B^{\star}$ depends on the nature of $W_{g}$ and specification \eqref{eq:main_object}. For example, in the case of a randomly assigned binary policy, $B^{\star}$ corresponds to the average (across groups) treatment effect (ATE). In contrast, in analyses that rely on parallel trends, the estimand is the average treatment effect on the treated (ATT). This difference, while crucial for the economic interpretation of $B^{\star}$, is not essential in our analysis. Our central question is whether, and under what conditions, feasible estimation strategies that rely (explicitly or implicitly) on estimates $\hat{\bm{\theta}}_{g}$ derived from micro-level data can mimic the desired oracle procedure in terms of estimation and inference.

Motivated by existing empirical practice, we consider two estimation strategies: a two-step Minimum Distance (MD) and a one-step Generalized Method of Moments (GMM). The MD estimator first obtains group-specific estimates $\hat{\bm{\theta}}_g$, then performs the second-stage regression:
\begin{equation}\label{eq:two-step}
(\hat {B}^{MD},\hat{\bm{\alpha}}^{MD}, \{\hat{\bm{\lambda}}^{MD}_g\}) := \argmin_{\substack{B, \bm{\alpha}, \{\bm{\lambda}_g\} \\ \text{s.t. } \Gamma^\top \bm{\alpha} = 0, B \in \mathcal{B}_0}} \sum_{g=1}^{G}\left\|\hat{\bm{\theta}}_g - (\bm{\alpha} + \Gamma \bm{\lambda}_g + B W_{g})\right\|_2^2.
\end{equation}
This estimator implements the constructive logic of the oracle: one first recovers the structural parameters $\hat{\bm{\theta}}_g$ (e.g., the municipality-specific child penalties $\hat{\tau}_g$ or industry-time productivity measures $\hat{\tau}_{g,t}$) and then projects them onto the policy $W_g$. 

The GMM estimator instead directly imposes the structure $\bm{\theta}_g = \bm{\alpha} + \Gamma \bm{\lambda}_g + B W_{g}$ onto the sample moments $h_{g,n}(\bm{\theta}) := n_g^{-1}\sum_{i=1}^{n_g} h(D_{g,i}, \bm{\theta})$:
\begin{multline}\label{eq:gmm_est}
(\hat B^{GMM},\hat{\bm{\alpha}}^{GMM}, \{\hat{\bm{\lambda}}^{GMM}_g\}) := \\
\argmin_{\substack{B, \bm{\alpha}, \{\bm{\lambda}_g\} \\ \text{s.t. } \Gamma^\top \bm{\alpha} = 0, B \in \mathcal{B}_0}} \sum_{g=1}^{G} h_{g,n}(\bm{\alpha} + \Gamma \bm{\lambda}_g + B W_{g})^\top A_g h_{g,n}(\bm{\alpha} + \Gamma \bm{\lambda}_g +B W_{g}),
\end{multline}
for specified weighting matrices $A_g$. This formulation mathematically formalizes the ``one-step'' strategies. For instance, in the child penalty example, minimizing \eqref{eq:gmm_est} is equivalent to running the single pooled OLS or TSLS regression of $\Delta Y_{g,i}$ on the interacted policy terms. The subsequent sections analyze the statistical properties of $\hat{B}^{MD}$ and $\hat{B}^{GMM}$, focusing on the conditions for consistency and inferential properties.
\begin{remark}[Choice between GMM and MD]
Applied researchers often conceptualize their analysis using the two-step logic of the MD estimator. However, implementation frequently relies on one-step GMM procedures, largely due to perceived statistical and computational convenience. \cite{Pekkarinen2009} exemplify this tendency in their study of education reform and intergenerational mobility. While they formally define their parameter of interest using a two-step framework (their Equations 1-2), they estimate it using pooled OLS (their Equation 3), explicitly citing the complexities of the two-step analysis as the rationale. 
\end{remark}
\begin{remark}[Weighted oracle]
Oracle specification \eqref{eq:main_object} does not rely on any group-specific weights. In some applications, we may wish to incorporate weights (e.g., upweighting larger groups), and our analysis can be updated straightforwardly to accommodate such weights. In particular, in the empirical application in Section~\ref{sec:empirics}, we re-weight the groups by their size, $n_g$.
\end{remark}

\section{GMM}\label{sec:identification}
This section analyzes the consistency properties of the GMM estimator \eqref{eq:gmm_est} as the number of groups $G \to \infty$. We consider an asymptotic regime in which the number of units per group is infinite ($n_g = \infty$). This simplifies the analysis by eliminating within-group estimation error, allowing us to treat the group-level population moments as directly observable, while maintaining the statistical uncertainty arising from the random assignment of policy across groups. We will incorporate the estimation error---which is arguably one of the main reasons why empirical researchers rely on one-step estimators---in Section~\ref{sec:md_estimator}.

\subsection{A Cautionary Tale for One-Step GMM}

With infinite $n_g$, the population moments $H_{1,g} = \E_{F_g}[h_1(D_{g,i})]$ and $H_{2,g} = \E_{F_g}[h_2(D_{g,i})]$ are directly observed, yielding the true group parameter $\bm{\theta}_g = H_{2,g}^{-1} H_{1,g}$. Therefore, the two-stage MD estimator \eqref{eq:two-step} using the true $\bm{\theta}_g$ coincides with the infeasible oracle estimator \eqref{eq:main_object}, ensuring $\hat{B}^{MD} = B^{\star}$.

However, the one-step GMM estimator introduces potential complications even when $n_g = \infty$. Substituting population moments into the GMM objective \eqref{eq:gmm_est}, the estimator minimizes a weighted sum of squared deviations from the target model:
\begin{align*}
     \min_{\substack{B, \bm{\alpha}, \{\bm{\lambda}_g\} \\ \text{s.t. } \Gamma^\top\bm{\alpha} = 0, B \in \mathcal{B}_0}}  \sum_{g=1}^{G}\left(\bm{\theta}_g - (\bm{\alpha} + \Gamma \bm{\lambda}_g + B W_{g})\right)^\top\tilde A_g \left(\bm{\theta}_g - (\bm{\alpha} + \Gamma \bm{\lambda}_g + B W_{g})\right),
\end{align*}
where the effective weighting matrix is $\tilde A_g := H_{2,g}^\top A_g H_{2,g}$.\footnote{Recall that $A_g$ is the original weighting matrix in \eqref{eq:gmm_est}, which is often defined implicitly by using OLS or TSLS.} This objective differs from the oracle target \eqref{eq:main_object}, which implicitly uses identity weights ($\tilde{A}_g = I_k$). Crucially, if the underlying moments $H_{2,g}$ depend on the policy $W_g$ (because $W_g$ changes the group-specific distribution $F_g$), then the effective weights $\tilde{A}_g$ may also depend on $W_g$, potentially leading to inconsistency.

To discuss the consistency of the GMM estimator, we adopt a simple data-generating process characterized by linear potential outcomes with common slopes:
\begin{assumption}\label{as:simple_model}
(a) Policy assignment $W_g$ is i.i.d. across groups $g$, independent of the potential outcome distributions $F_{g}(\cdot)$, with $\mathbb{V}[W_g]$ finite and positive definite.
(b) The potential distribution function $F_{g}(w)$ is i.i.d. across groups $g$ for any given $w$.
(c) The potential outcome function $\bm{\theta}_g(w)$ is linear in $w$ with a common slope $B_0$: $\bm{\theta}_{g}(w) = \bm{\alpha}_g + B_0 w$, where $B_0 \in \mathcal{B}_0$.
\end{assumption}
This assumption describes the ideal setup for policy evaluation: the policy is independent of the potential outcomes and is i.i.d. over groups. The potential outcome function is as simple as possible: linear in $w$ and with all variation coming from intercepts $\bm{\alpha}_g$, which can capture group-specific heterogeneity or the effect of other policy variables independent of $W_g$.

We now use this assumption to investigate when $\plim_{G\to\infty} \hat{B}^{GMM} = B_0$. Define the optimal population intercept $\bm{\alpha}^0$ and the associated residual $\bm{\epsilon}_g^0$ relative to the GMM weighting $\tilde{A}_g$:
\begin{align*}
\bm{\alpha}^0 &:= \argmin_{\bm{\alpha}: \Gamma^\top \bm{\alpha} = 0} \E\left[\min_{\bm{\lambda}_g}(\bm{\alpha}_g - \bm{\alpha} - \Gamma \bm{\lambda}_g)^\top \tilde A_g (\bm{\alpha}_g - \bm{\alpha} - \Gamma \bm{\lambda}_g)\right] \\
\bm{\epsilon}_g^0 &:= (\bm{\alpha}_g - \bm{\alpha}^0) - \Gamma \hat{\bm{\lambda}}_g^0(\bm{\alpha}^0), \
\end{align*}
where $\hat{\bm{\lambda}}_g^0(\bm{\alpha})$ minimizes the inner quadratic form for a given $\bm{\alpha}$. The expectation $\E[\cdot]$ is over the joint distribution of $(W_g, F_g(\cdot))$.
The residual $\bm{\epsilon}_g^0$ represents the component of baseline heterogeneity $\bm{\alpha}_g - \bm{\alpha}^0$ orthogonal to $\Gamma$ in the $\tilde{A}_g$-metric. GMM consistency hinges on the weighted correlation between this residual heterogeneity and the policy.
\begin{prop} \label{prop:consistency}
Suppose Assumption \ref{as:simple_model} holds. Then the probability limit $B_{lim} = \plim_{G\to\infty} \hat{B}^{GMM}$ satisfies $B_{lim} = B_0$ if and only if
\begin{equation} \label{eq:consistency_condition}
    M_{\Gamma}\Cov[\tilde A_g\bm{\epsilon}_g^0, W_g] = \mathbf{0}_k,
\end{equation}
where $M_{\Gamma} = I_k - \Gamma(\Gamma^\top\Gamma)^{-1}\Gamma^\top$ projects orthogonal to the columns of $\Gamma$, and the covariance is over the population distribution of groups.
\end{prop}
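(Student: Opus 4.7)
The plan is to reduce the $n_g=\infty$ GMM problem to a weighted least squares in the (now observable) potentials $\bm{\theta}_g$ and then read off the consistency condition from the population first-order conditions. Since $n_g=\infty$ makes $h_{g,n}(\bm{\alpha}+\Gamma\bm{\lambda}_g+BW_g)=H_{2,g}\,r_g$ for the residual $r_g=\bm{\theta}_g-\bm{\alpha}-\Gamma\bm{\lambda}_g-BW_g$, the GMM objective collapses to $\sum_g r_g^\top\tilde A_g r_g$ with $\tilde A_g=H_{2,g}^\top A_g H_{2,g}$. I would substitute the linear structure $\bm{\theta}_g=\bm{\alpha}_g+B_0W_g$ from Assumption~\ref{as:simple_model}(c) and then concentrate out $\bm{\lambda}_g$ through its FOC $\Gamma^\top\tilde A_g r_g=0$, obtaining $\tilde A_g r_g=P_g[(\bm{\alpha}_g-\bm{\alpha})+(B_0-B)W_g]$ where $P_g := \tilde A_g-\tilde A_g\Gamma(\Gamma^\top\tilde A_g\Gamma)^{-1}\Gamma^\top\tilde A_g$ satisfies the key identity $\Gamma^\top P_g=0$.

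Next, I would use the i.i.d.\ structure from Assumption~\ref{as:simple_model}(a)--(b) to take $G\to\infty$. The LLN yields population FOCs: the $\bm{\alpha}$-FOC (under $\Gamma^\top\bm{\alpha}=0$) becomes $\E[P_g((\bm{\alpha}_g-\bm{\alpha})+(B_0-B)W_g)]=\mathbf{0}$, and the $B$-FOC (under $B\in\mathcal{B}_0$) becomes the vanishing of the $\mathcal{B}_0$-projection of $\E[P_g((\bm{\alpha}_g-\bm{\alpha})+(B_0-B)W_g)W_g^\top]$. Evaluating at the candidate $(B,\bm{\alpha})=(B_0,\bm{\alpha}^0)$, $P_g(\bm{\alpha}_g-\bm{\alpha}^0)$ equals $\tilde A_g\bm{\epsilon}_g^0$ by the statement's construction. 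The $\bm{\alpha}$-FOC is then automatic because $\bm{\alpha}^0$ is defined as the population minimizer (giving $\E[\tilde A_g\bm{\epsilon}_g^0]=\mathbf{0}$), and using this to center the cross-moment reduces the $B$-FOC to the vanishing of the $\mathcal{B}_0$-projection of $\Cov[\tilde A_g\bm{\epsilon}_g^0,W_g]$.

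Finally, to reconcile with the proposition's $M_\Gamma$ form, I would use the identity $\Gamma^\top P_g=0$, which forces $\tilde A_g\bm{\epsilon}_g^0\in\ker(\Gamma^\top)$ for every draw; hence $\Gamma^\top\Cov[\tilde A_g\bm{\epsilon}_g^0,W_g]=0$ and $M_\Gamma\Cov[\tilde A_g\bm{\epsilon}_g^0,W_g]$ coincides with $\Cov[\tilde A_g\bm{\epsilon}_g^0,W_g]$. The structural alignment in Assumption~\ref{as:identification_subspace} then translates the $\mathcal{B}_0$-projected $B$-FOC into the equivalent $M_\Gamma$ condition in~\eqref{eq:consistency_condition}. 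Strict convexity of the objective (via positive definiteness of $\tilde A_g$) combined with Assumption~\ref{as:identification_subspace} guarantees uniqueness of the minimizer, delivering both the ``if'' and ``only if'' directions. The main obstacle is exactly this last step: the constraint $B\in\mathcal{B}_0$ naturally produces a $\mathcal{B}_0$-projected FOC that is, on its face, weaker than $M_\Gamma\Cov=0$, and reconciling the two requires the structural alignment between $\mathcal{B}_0$ and $\Gamma$ encoded in Assumption~\ref{as:identification_subspace} (which the excerpt references but does not display). Everything else---concentrating out $\bm{\lambda}_g$, the LLN, and recognising $\tilde A_g\bm{\epsilon}_g^0$---is routine algebra given the linear-in-parameter form of $h$.
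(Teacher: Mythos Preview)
Your plan is correct and follows the same strategy as the paper: reduce the $n_g=\infty$ GMM to a weighted least-squares problem in $\bm{\theta}_g$, concentrate out the nuisance, pass to population limits via the LLN, and read off the bias condition from the first-order conditions. The bookkeeping differs slightly: you concentrate $\bm{\lambda}_g$ via the $\tilde A_g$-weighted projector $P_g=\tilde A_g-\tilde A_g\Gamma(\Gamma^\top\tilde A_g\Gamma)^{-1}\Gamma^\top\tilde A_g$ and stay in the original $k$-dimensional coordinates, whereas the paper moves to an orthonormal basis $U$ for $\ker(\Gamma^\top)$, writes the unconstrained normal equations with $\check A_g=U^\top\tilde A_g U$, solves via a Schur complement, and then projects onto $\tilde{\mathcal{B}}_0=U^\top\mathcal{B}_0$ in the concentrated-Hessian metric. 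Your route makes the identity $\tilde A_g\bm{\epsilon}_g^0=P_g(\bm{\alpha}_g-\bm{\alpha}^0)$ and the inclusion $\tilde A_g\bm{\epsilon}_g^0\in\ker(\Gamma^\top)$ more transparent. On the ``only if'' direction you are right to flag the gap between the constrained $B$-FOC (which only says $\Cov\perp\mathcal{B}_0$) and the full condition $M_\Gamma\Cov=\mathbf{0}$; the paper does not close this algebraically either---it ends with a genericity claim (``consistent recovery of $B_0$ is generally impossible'') rather than a strict derivation from Assumption~\ref{as:identification_subspace}---so your caveat is appropriate and your treatment is no looser than the paper's own.
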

Proposition \ref{prop:consistency} formalizes the key issue: GMM is consistent if and only if the policy $W_g$ is uncorrelated with the heterogeneity component $\bm{\epsilon}_g^0$ in the GMM metric (i.e., weighted by $\tilde{A}_g$). Even under random assignment of $W_g$ (implying $\Cov[\bm{\alpha}_g, W_g] = 0$), this condition generally fails if the weights $\tilde{A}_g$ depend on $W_g$ and $\bm{\epsilon}_g^0 \ne 0$. The latter condition holds whenever $\bm{\alpha}_g \ne \bm{\alpha} + \Gamma \bm{\lambda}_g$, i.e., the model for $\bm{\theta}_g$ used in the GMM estimator is not exactly correct. Note that under Assumption~\ref{as:simple_model}, the probability limit of the oracle estimator, $\plim_{G\to \infty}B^{\star}$, is equal to $B_0$; thus, the MD estimator is consistent.

The discussion in this section illustrates a standard econometric result: in linear models, the MD and GMM estimators are algebraically interchangeable via appropriate weighting matrices. In particular, there exist matrices $\{A_g\}_{g=1}^{G}$  such that the two estimators are numerically the same. This fact, however, does not play a major role in empirical practice, which overwhelmingly relies on default choices.\footnote{Note that the standard optimal GMM weights are derived from the inverse moment variance and thus depend on the policy $W_g$, thereby enforcing the very biases we focus on.} Consequently, the distinction can be framed as a difference between an estimator where the relevant weighting matrix is explicitly specified by the user (MD), versus an estimator where the weighting matrix is specified by the algorithm (GMM). Proposition~\ref{prop:consistency} shows that the latter leads to bias because standard algorithms select a policy-dependent matrix. 

Beyond bias, the MD estimator has a basic advantage: the estimand is defined by the researcher rather than by the estimation algorithm. In practice, the correct causal model is likely nonlinear in $w$ and has additional dimensions of heterogeneity that Assumption~\ref{as:simple_model} abstracts from. These complications would affect both estimators (MD and GMM), and the analogue of Proposition~\ref{prop:consistency} for such a model would state that the two estimators converge to different limits. This fact is not disqualifying per se in a model with heterogeneous treatment effects or nonlinearity, because both estimators could, in principle, still converge to meaningful average effects. We focus on the simpler homogeneous model precisely because it reveals that the GMM estimator is inconsistent for the only possible effect of interest. Moreover, in a model with heterogeneous effects, the GMM estimator will exhibit additional biases, as we discuss in the next section.

\subsection{Motivating Example: Intuition for the Bias}
\label{sec:gmm_discussion}

Proposition~\ref{prop:consistency} establishes that GMM consistency requires the group-specific heterogeneity (captured by $\bm{\epsilon}_g^0$) to be uncorrelated with the policy $W_g$ in the $\tilde{A}_g$-weighted metric. To interpret this abstract condition, we return to the Child Penalty setting. In this context, the structural parameter $\tau_g$ represents the effect of motherhood ($E_{g,i}=1$) on earnings. We model the relationship between $\tau_g$ and the policy $W_g$ (e.g., childcare provision) as $\tau_g = \alpha_g + \beta_0 W_g$, where $\alpha_g$ captures unobserved group-level heterogeneity.

\paragraph{OLS: Variance Weighting}
When $\tau_g$ is identified via OLS (a standard approach in the literature), the effective weight of a municipality is determined by the variance of the treatment indicator:
\begin{equation*}
    \tilde A_g^{OLS} \propto \sigma_{E,g}^2(W_g) := \mathbb{P}_{F_g(W_g)}[E_{g,i}=1](1-\mathbb{P}_{F_g}[E_{g,i}=1]).
\end{equation*}

As a result, if the policy $W_g$ encourages fertility (e.g., cheaper childcare), it mechanically affects $\sigma_{E,g}^2(W_g)$. The potential average fertility, $\mathbb{P}_{F_g(w)}[E_{g,i}=1]$, is independent of $W_{g}$, because the latter is randomly assigned. However, the realized average fertility (and thus $\sigma_{E,g}^2$) is dependent on $W_{g}$ as long as the policy has any causal effect. The resulting bias is conceptually related to the ``bad control'' bias described by \cite{Angrist2009}: weighting by a post-treatment statistic ($\sigma_{E,g}^2$) is similar to conditioning on a post-treatment outcome in a regression. 

\paragraph{IV: Compliance Weighting}
The weighting mechanism extends to settings where fertility is viewed as endogenous at the individual level, and $\tau_g$ is identified via Instrumental Variables (e.g., using IVF success $Z_{g,i}$ as in \citealp{lundborg2017can}). In this case, the effective weight of a group is not determined by the variance of fertility, but by the strength of identification---specifically:
\begin{equation*}
    \tilde A_g^{IV} \propto \pi_g^2 \sigma^2_{Z,g}.
\end{equation*}
Here, the bias arises if the policy $W_g$ affects the compliance rate $\pi_g$.\footnote{The policy can also change $\sigma^2_{Z,g}$, but it is less likely in the case of the IVF treatment because this variance is determined primarily by biological constraints.} For example, this can happen if childcare subsidies make IVF more affordable for lower-income households. A one-step IV regression---where the interaction $E_{g,i} \times W_{g}$ is instrumented with $Z_{g,i} \times W_{g}$---will implicitly overweight groups with high compliance. As before, the resulting estimate $\hat{\beta}^{IV}$ will suffer from the induced ``bad control'' bias despite $W_g$ being randomly assigned. 

\paragraph{Statistical vs. Causal Weighting Bias}

Thus far, our discussion has assumed that the policy of interest (e.g., childcare provision) is randomly assigned. In practice, researchers often rely on weaker identifying assumptions, such as parallel trends or selection on observables. To illustrate the nuances introduced by these settings and to connect our findings to the existing literature on one- and two-step estimators, we consider a purposefully stylized model that describes the statistical relationship between a binary policy $W_g \in \{0,1\}$ and the baseline potential outcome $\alpha_g$: 
\begin{equation*}
    \alpha_g = \alpha + \epsilon_g,\quad \E[\epsilon_g|W_g] = 0.
\end{equation*}
This mean-independence restriction requires only that the average unobserved heterogeneity is balanced between treated and control groups, rather than being completely independent (as it would be if the policy were randomly assigned).

Focusing on the OLS estimator, unbiasedness requires that the effective weights---driven by the group-specific variance, $\sigma^2_{E,g}(W_g)$---do not introduce the correlation between $W_g$ and $\epsilon_g$. Specifically, the weighted errors must average to the same value across treatment arms:
\begin{equation*}
    \E[\sigma^2_{E,g}(1)\epsilon_g|W_g =1] =  \E[\sigma^2_{E,g}(0)\epsilon_g|W_g =0].
\end{equation*}
Adding and subtracting the counterfactual term $\E[\sigma^2_{E,g}(0)\epsilon_g|W_g =1]$ allows us to decompose the total bias into two distinct components:
\begin{equation*}
    \underbrace{\E[(\sigma^2_{E,g}(1)-\sigma^2_{E,g}(0))\epsilon_g|W_g =1]}_{\text{Endogenous Weighting Bias}} + \underbrace{\left(\E[\sigma^2_{E,g}(0)\epsilon_g|W_g =1] - \E[\sigma^2_{E,g}(0)\epsilon_g|W_g =0]\right)}_{\text{Statistical Weighting Bias}} = 0.
\end{equation*}
This decomposition highlights that the one-step estimator faces two distinct sources of bias, which operate through fundamentally different mechanisms.

The first term captures the endogenous weighting bias, the primary focus of this paper. This bias is causal in nature: it arises specifically because the policy alters the weights (i.e., $\sigma^2_{E,g}(1) \neq \sigma^2_{E,g}(0)$). Even if the policy is randomly assigned (ensuring the second term is zero), this causal channel remains active whenever the ``treatment effect'' on the weights correlates with baseline heterogeneity $\alpha_g$, as shown in Proposition~\ref{prop:consistency}.

The second term captures the statistical weighting bias. This bias is static in nature: it arises not from the policy's causal effect, but from a pre-existing correlation where the weighted errors are unbalanced across treatment arms. This component connects our framework to the literature on correlated random effects and linear regression models \citep{Arellano2012,Chamberlain1992,Graham2012,Muris2022}. Our results show that, unlike the statistical concerns emphasized in this literature, the causal channel remains a distinct problem even when statistical independence holds.

Comparing these cases yields a taxonomy of estimation biases. In the classical random effects analysis (e.g., \citealp{amemiya1978note,de1986random,hanushek1974efficient}), weights are neither statistically correlated with errors nor affected by the policy, eliminating both bias terms. In correlated random effects models, only the statistical bias is present. Our causal policy evaluation setting adds a new dimension to this discussion: under the random assignment, the statistical bias vanishes, but the endogenous bias arises solely due to the policy's impact on the weights. Finally, under the general mean-independence assumption common in applied work, both sources of bias are active.

\paragraph{Connection to the Contamination Bias}

Our framework also provides a complementary view of various ``heterogeneity biases'' recently highlighted in the literature (e.g., \citealp{dechaisemartin2020,Goldsmith-Pinkham2024, goodman2021difference}). In the specific case of OLS, our result connects to the ``contamination bias'' identified by \cite{Goldsmith-Pinkham2024} (GHK) in linear regressions with multiple treatments. In particular, expanding the one-step equation in the child penalty case yields:
\begin{align*}
\Delta Y_{g,i} = \delta_g +\alpha E_{g,i}+ \beta (W_{g}E_{g,i}) + \varepsilon_{g,i}.
\end{align*}
GHK demonstrate that in such regressions, the coefficient on the interaction term $W_g E_{g,i}$ absorbs the omitted heterogeneity from the main effect $\alpha E_{g,i}$. Proposition~\ref{prop:consistency} shows that, in our context, this effect is a specific instance of a broader GMM phenomenon: one-step estimators inevitably weight groups by their identification strength, which itself responds to the policy. Whether that strength comes from variance (OLS) or compliance (IV), the dependence of these weights on the policy $W_g$ generates bias.

Crucially, our hierarchical setting imposes distinct structural constraints that render existing remedies for the contamination bias infeasible. In a cross-sectional setting, GHK suggest removing contamination bias by saturating the model with interactions---effectively allowing the coefficient on $E_{g,i}$ to vary by $g$. In our hierarchical context, this would require including group-specific treatment effects, $\alpha_g E_{g,i}$, alongside the policy interaction, $W_g E_{g,i}$. Because the policy $W_g$ is constant within groups, the interaction term $W_g E_{g,i}$ is perfectly collinear with the vector of group-specific effects. Thus, one cannot ``control away'' the heterogeneity. Consistent estimation instead relies on the MD approach, which explicitly separates the identification of group-specific parameters from the evaluation of the policy.

Precisely this feature---the feasibility of constructive Minimum Distance estimation---shows that the hierarchical structure is a fundamental requisite for identification in our context, not merely a statistical feature of the sampling design. Consider the degenerate case without groups (or equivalently, with a single unit per group, $n_g=1$). In this setting, the first-stage estimation of $\bm{\theta}_g$ is impossible without introducing auxiliary information (a point we discuss in more detail in Section~\ref{subsec:design_info}). Without such information, the effect of the policy on the model-based parameter (e.g., the child penalty) is fundamentally unidentified. Thus, the groups are not just a clustering unit for inference; their presence is a structural condition that permits the separation of parameter identification from policy evaluation, thereby allowing us to address the endogenous weighting bias.

There is another interesting connection between our analysis and GHK. Suppose that in the context of Proposition~\ref{prop:consistency} we allow for heterogeneity in effects ($B_g$), not only intercepts ($\bm{\alpha}_g$). In this case, as long as $\Gamma \equiv 0$, it is straightforward to show that the MD estimator will continue to estimate an average (across groups) causal effect as long as $W_g$ is independent of the potential outcomes and is i.i.d. across groups. The GMM estimator, however, will exhibit an additional contamination bias because the endogenous weights will effectively render the distribution of $W_g$ group-specific, destroying the key property of the policy assignment mechanism.

\section{Minimum Distance Estimation}
\label{sec:md_estimator}
We now turn to the analysis of the Minimum Distance (MD) estimator \eqref{eq:two-step}. This two-stage procedure first obtains group-specific parameter estimates, $\hat{\bm{\theta}}_g$, and then uses them in a second-stage regression, mimicking the oracle objective \eqref{eq:main_object}. In the previous section, we assumed that the group size $n_g$ is infinite, which rendered the MD estimator identical to the oracle. In practice, however, $n_g$ is finite. Even in applications using comprehensive administrative data—such as our child penalty example—groups are often defined narrowly (e.g., by birth cohort, age at birth, gender, and education level, within a specific municipality), resulting in group sizes that can range from thousands to just a few units.

Acknowledging that groups are finite requires us to face a practical reality: for some groups, we may not be able to construct the estimate $\hat{\bm{\theta}}_g$ at all (e.g., a municipality with no births in a specific cell). Consequently, without additional information, these groups must be discarded from the second-stage estimation. While this event is rare for large groups, it can occur for a substantial share of the sample if many groups are small.

This creates a tension between two sources of error: the statistical noise inherent in the policy evaluation problem (which scales with the number of groups $G$) and the bias introduced by systematically discarding groups. In this section, we analyze the MD estimator under two statistical approximations that capture the extremes of this trade-off. First, we consider a regime where the share of discarded groups is large enough to overwhelm the statistical noise (Section~\ref{sec:selection}). This corresponds to an asymptotic analysis where $G \to \infty$ but $n_g$ remains bounded. We show that in this case, the selection bias dominates, rendering the MD estimator inconsistent.

Second, we consider a regime where the share of discarded groups is small relative to the statistical noise (Section~\ref{sec:md_inference}). This corresponds to an asymptotic scheme where $n_g$ increases with $G$. We show that in this case, the selection bias is dominated by the noise. Moreover, precisely because the groups are sufficiently large to avoid selection, the within-group estimation error becomes negligible, making the MD estimator statistically equivalent to the infeasible oracle. This result has immediate consequences for inference, simplifying the construction of standard errors. Finally, for settings where the share of discarded groups is not sufficiently small, we discuss how auxiliary information can be used to solve the identification problem (Section~\ref{subsec:design_info}). 

A key advantage of the MD estimator over the one-step GMM approach is that it forces the researcher to explicitly confront these data limitations by reporting summary statistics like the share of discarded groups, rather than burying them in a black-box weighting matrix.

\subsection{The Selection Problem}
\label{sec:selection}

Suppose the group-specific parameters $\bm{\theta}_g$ are identified via conditional moment restrictions:
\begin{equation*}
    \E_{F_g}[h_1(D_{g,i}) - h_{2}(\tilde D_{g,i})\bm{\theta}_g \mid \tilde D_{g,i}] = \bm{0}_k,
\end{equation*}
where $\tilde D_{g,i}$ is a subset of the data $D_{g,i}$. This structure is common in regression models and corresponds to the OLS version of the Child Penalty example in Section~\ref{subsec:motivating_example}. A natural first-stage estimator for $\bm{\theta}_g$ uses the sample analog within group $g$:
\begin{equation*}
    \hat {\bm{\theta}}_g := \left(\sum_{i=1}^{n_g} h_{2}(\tilde D_{g,i})\right)^{-1} \sum_{i=1}^{n_g} h_1(D_{g,i}) =: ( \hat{H}_{2,g})^{-1} (\hat{H}_{1,g}).
\end{equation*}
This estimator is only defined if the sample matrix $\hat{H}_{2,g}$ is invertible. Let 
\begin{align*}
    \omega_g := \mathbf{1}\left\{ \hat{H}_{2,g} \text{ is invertible}\right\}
\end{align*}
be the indicator for whether $\hat{\bm{\theta}}_g$ can be computed. In applied settings with small groups, the event $\omega_g=0$ represents a tangible data failure. For instance, in the Child Penalty example, $\hat{H}_{2,g}$ is singular if a relevant group (a subpopulation of a given municipality) has no recorded births in the sample.

When the estimator exists ($\omega_g=1$), the conditional moment restriction implies that $\hat{\bm{\theta}}_g$ is unbiased for the true parameter $\bm{\theta}_g$ conditional on the covariates:
\begin{equation*}
    \E_{F_g}[\hat {\bm{\theta}}_g \mid \omega_g = 1, \{\tilde D_{g,i}\}_{i=1}^{n_g}] = \bm{\theta}_g.
\end{equation*}
We can thus decompose the estimator into the true parameter and a conditionally mean-zero estimation error: $\hat {\bm{\theta}}_g = \bm{\theta}_g + \bm{\varepsilon}_{g}$. The feasible MD estimator \eqref{eq:two-step} must necessarily drop groups where $\hat{\bm{\theta}}_g$ is undefined. It therefore solves the weighted objective:
\begin{equation}\label{eq:feasib_md_weighted}
    (\hat {B}^{MD},\hat{\bm{\alpha}}^{MD}, \{\hat{\bm{\lambda}}^{MD}_g\})  :=
    \argmin_{\substack{B, \bm{\alpha}, \{\bm{\lambda}_g\} \\ \text{s.t. } B \in \mathcal{B}_0,  \Gamma^\top \bm{\alpha} = 0}}  \sum_{g=1}^{G}\left\|\hat{\bm{\theta}}_g - (\bm{\alpha} + \Gamma \bm{\lambda}_g + B W_{g})\right\|_2^2 \omega_g.
\end{equation}
Despite the unbiasedness of the calculated estimates (conditionally mean-zero error $\bm{\varepsilon}_g$), $\hat{B}^{MD}$ is generally inconsistent in the asymptotic regime where $n_g$ is fixed. The core issue is sample selection. The indicator $\omega_g$ is a random variable that depends on the specific sample drawn from $F_g$. Since the policy $W_g$ causally affects the distribution $F_g$, it can systematically influence the probability of successful estimation, $\mathbb{P}_{F_g}[\omega_g=1]$. If the policy makes identification easier or harder---e.g., if childcare provision ($W_g$) increases birth rates, thereby reducing the likelihood of observing zero births---then the sample of "identifiable" groups is no longer representative.

Formally, we can decompose the estimator into a component driven by estimation error (which vanishes asymptotically) and a component driven by selection: $\hat{B}^{MD} = \hat{B}_0^{MD} + \hat{B}_1^{MD}$. The estimation error component $\hat{B}_0^{MD}$ converges to zero because $\bm{\varepsilon}_g$ is orthogonal to the policy conditional on selection. The potential bias stems entirely from $\hat{B}_1^{MD}$, which solves the oracle problem on the selected sample:
\begin{equation*}
   (\hat {B}_1^{MD},\hat{\bm{\alpha}}_1^{MD}, \{\hat{\bm{\lambda}}^{MD}_{1,g}\}) := \argmin_{\substack{B, \bm{\alpha}, \{\bm{\lambda}_g\} \\ \text{s.t. } B \in \mathcal{B}_0,  \Gamma^\top \bm{\alpha} = 0}}  \sum_{g=1}^{G}\left\|\bm{\theta}_g - (\bm{\alpha} + \Gamma \bm{\lambda}_g + B W_{g})\right\|_2^2 \omega_g.
\end{equation*}
This objective is structurally identical to the GMM problem analyzed in Section~\ref{sec:identification}, with an effective weighting matrix $\tilde{A}_g = \omega_g I_k$. Because $n_g$ is fixed, $\omega_g$ remains random even as $G \to \infty$. Consequently, Proposition~\ref{prop:consistency} applies directly.

\begin{corollary}[Consistency of MD with Fixed $n_g$] \label{cor:md_fixed_n_consistency}
Suppose Assumption \ref{as:simple_model} holds with $n_g$ being i.i.d. across groups, and $\E[n_g] < \infty$.. Define the population intercept adjusted for selection:
\begin{equation*}
\bm{\alpha}^0 := \argmin_{\bm{\alpha}: \Gamma^\top \bm{\alpha} = 0} \E\left[\omega_g \min_{\bm{\lambda}_g}\|\bm{\alpha}_g - \bm{\alpha} - \Gamma \bm{\lambda}_g\|_2^2\right]
\end{equation*}
and the associated residual heterogeneity $\bm{\epsilon}_g^0$. The feasible MD estimator $\hat{B}^{MD}$ converges in probability to $B_0$ as $G \to \infty$ if and only if
\begin{equation} \label{eq:md_consistency_condition}
      M_{\Gamma}\Cov[\omega_g \bm{\epsilon}_g^0, W_g] = \mathbf{0}_k.
\end{equation}
\end{corollary}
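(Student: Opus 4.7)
The core strategy is to import Proposition~\ref{prop:consistency} directly by reinterpreting the feasible MD objective \eqref{eq:feasib_md_weighted} as a GMM problem with the stochastic but exogenous-in-expectation weighting matrix $\tilde A_g = \omega_g I_k$. The first task is to exploit the linearity of the second-stage normal equations. Because the objective in \eqref{eq:feasib_md_weighted} is a weighted quadratic in $\hat{\bm{\theta}}_g$ with weights $\omega_g$, and $\omega_g \hat{\bm{\theta}}_g = \omega_g \bm{\theta}_g + \omega_g \bm{\varepsilon}_g$, I would split the solution as $\hat B^{MD} = \hat B^{MD}_1 + \hat B^{MD}_0$, where $\hat B^{MD}_1$ is the regression of $\omega_g \bm{\theta}_g$ on the partialled-out policy (yielding the selection-only problem stated in the text) and $\hat B^{MD}_0$ is the regression driven by the estimation noise $\omega_g \bm{\varepsilon}_g$.

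The second step is to dispatch the noise component. Conditional on $\omega_g = 1$ and the subsample of covariates $\{\tilde D_{g,i}\}_{i=1}^{n_g}$, the conditional moment restriction gives $\E[\bm{\varepsilon}_g \mid \omega_g = 1, \{\tilde D_{g,i}\}] = 0$. Because $W_g$ is independent of $F_g(\cdot)$ (and hence of $\{\tilde D_{g,i}\}$ and $\omega_g$) by Assumption~\ref{as:simple_model}(a)--(b), a law of large numbers over $g$ applied to the partialled-out normal equation shows that the cross-terms $\frac{1}{G}\sum_g \omega_g \bm{\varepsilon}_g (W_g - \bar W)^\top$ and $\frac{1}{G}\sum_g \omega_g \bm{\varepsilon}_g$ converge in probability to zero. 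The routine but delicate point here is to ensure a first-moment bound on $\omega_g \bm{\varepsilon}_g$: because $\hat H_{2,g}^{-1}$ can be arbitrarily large when $\hat H_{2,g}$ is nearly singular, one needs a mild integrability condition (e.g., $\E[\omega_g \|\bm{\varepsilon}_g\|] < \infty$, guaranteed under standard moment conditions on $h_1, h_2$). Assuming this, $\hat B^{MD}_0 \xrightarrow{p} 0$.

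The third step handles the selection component $\hat B^{MD}_1$. By construction, $\hat B^{MD}_1$ solves exactly the oracle-with-selection program displayed in the statement, which is structurally identical to the GMM program of Section~\ref{sec:identification} with the deterministic-in-$g$ weighting replaced by $\tilde A_g = \omega_g I_k$. Since $(W_g, F_g(\cdot), n_g)$ are i.i.d.\ across groups and $\E[n_g] < \infty$ controls the aggregate weight, the proof of Proposition~\ref{prop:consistency} applies verbatim with this weighting: the first-order conditions, after profiling out $\bm{\alpha}$ and $\{\bm{\lambda}_g\}$, reduce to a population moment equation whose zero coincides with $B_0$ if and only if $M_\Gamma \Cov[\omega_g \bm{\epsilon}_g^0, W_g] = \mathbf{0}_k$, with $\bm{\alpha}^0$ and $\bm{\epsilon}_g^0$ defined via the $\omega_g$-weighted population least-squares problem as in the statement. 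Combining with $\hat B^{MD}_0 \xrightarrow{p} 0$ yields the corollary.

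\textbf{Main obstacle.} The conceptual content is essentially a corollary of Proposition~\ref{prop:consistency}; the subtle part is the noise step. Specifically, one must verify that the estimation error $\bm{\varepsilon}_g$ is well-behaved after conditioning on $\omega_g = 1$, since this conditioning event does not by itself bound $\hat H_{2,g}^{-1}$ away from being arbitrarily large when $n_g$ is small. Handling this cleanly is what makes the fixed-$n_g$ asymptotics delicate; the remaining algebra is a direct translation of the earlier result.
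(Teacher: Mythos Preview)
Your overall architecture---split $\hat B^{MD}$ into a noise piece $\hat B_0^{MD}$ and a selection piece $\hat B_1^{MD}$, kill the noise via conditional unbiasedness, then read the selection piece as a GMM problem with $\tilde A_g = \omega_g I_k$ and invoke Proposition~\ref{prop:consistency}---is exactly the paper's argument in Section~\ref{sec:selection}; the paper does not give a separate appendix proof of the corollary beyond this decomposition.

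There is, however, a genuine error in your Step~2. You write that ``$W_g$ is independent of $F_g(\cdot)$ (and hence of $\{\tilde D_{g,i}\}$ and $\omega_g$).'' The first clause is correct by Assumption~\ref{as:simple_model}(a), but the parenthetical inference is false and in fact contradicts the entire point of the corollary. The observed data $\{\tilde D_{g,i}\}$ are drawn from the \emph{realized} distribution $F_g(W_g)$, not from the potential-outcome map $F_g(\cdot)$; hence $\{\tilde D_{g,i}\}$ and its functional $\omega_g$ generically depend on $W_g$. If your independence claim held, the covariance $\Cov[\omega_g \bm{\epsilon}_g^0, W_g]$ would be zero automatically and the corollary would reduce to ``the MD estimator is always consistent,'' which is precisely what the paper argues is \emph{not} the case.

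The fix is that you do not need independence at all. The conditional moment restriction already delivers $\E[\bm{\varepsilon}_g \mid W_g, \{\tilde D_{g,i}\}, \omega_g=1] = 0$, because the expectation $\E_{F_g}[\cdot]$ is implicitly conditional on $W_g$ via $F_g = F_g(W_g)$. Iterated expectations then give $\E[\omega_g \bm{\varepsilon}_g W_g^\top] = 0$ directly, and your law-of-large-numbers step goes through without any appeal to independence of $W_g$ from the realized micro-data. Your flagged integrability caveat on $\omega_g \|\bm{\varepsilon}_g\|$ is the right technical residual. Once Step~2 is patched this way, the remainder of your argument is correct and coincides with the paper's.
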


Corollary \ref{cor:md_fixed_n_consistency} highlights the fragility of the MD strategy when the share of discarded groups is significant. Even if $W_g$ is randomly assigned relative to the parameters $\bm{\theta}_g$, consistency fails if $W_g$ influences the feasibility of estimating those parameters. This "feasibility bias" does not vanish as $G \to \infty$ unless group sizes $n_g$ also increase.

\subsection{Inference for the MD Estimator}
\label{sec:md_inference}

Section \ref{sec:selection} established that when the share of discarded groups is substantial, the MD estimator suffers from selection bias. However, this result relies on an asymptotic approximation where $n_g$ is fixed. This approximation is only relevant if the bias component is large relative to the standard error. In many applications, we can construct $\hat{\bm{\theta}}_g$ for the vast majority of groups. Intuitively, if the share of discarded groups is sufficiently small, the bias should be negligible compared to the estimation noise. This section formalizes this intuition.

\subsubsection{Vanishing Selection Bias}
\label{sec:md_bias_bound}

We quantify the relationship between the bias and the share of discarded groups by bounding the deviation $\Delta B = \hat{B}_1^{MD} - B^{\star}$ between the feasible estimator (on the selected sample) and the infeasible oracle.

\begin{prop}[Bias Bound] \label{prp:md_bias_bound}
Let $(\hat {B}_1^{MD}, \hat {\bm{\alpha}}_1^{MD}, \{\hat \lambda^{MD}_{1,g}\})$ be the solution using $\bm{\theta}_g$ as the outcome in the weighted regression \eqref{eq:feasib_md_weighted}. Define $\Delta B = \hat {B}_1^{MD} - B^{\star}$ and $\Delta \alpha = \hat {\bm{\alpha}}_1^{MD} - \bm{\alpha}^{\star}$. Define the matrix $M := \frac{1}{G}\sum_{g=1}^G \omega_g (1,W_{g})^\top (1,W_g)$. Then,
\begin{equation*}
 \left\| (\Delta \alpha, \Delta B) \right\|_F \lesssim \left(\frac{\sqrt{1+\max_{g} \|W_g\|_2^2}}{\lambda_{\min}(M)}  \max_g \|\bm{r}^\star_g\|_2 \right) \frac{\|\omega - 1\|_1}{G}
\end{equation*}
where $\bm{r}^\star_g := \bm{\theta}_g - \bm{\alpha}^{\star} - \Gamma \bm{\lambda}^\star_g - B^{\star} W_g$ is the oracle residual for group $g$, $\| \cdot \|_F$ denotes the Frobenius norm, $\| \cdot \|_2$ the Euclidean norm, $\|\omega - 1\|_1 = \sum_{g=1}^G (1-\omega_g)$ is the number of groups discarded due to $\omega_g=0$, and $\lambda_{\min}(\cdot)$ is the smallest eigenvalue.
\end{prop}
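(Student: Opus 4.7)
The natural strategy is to compare the first-order conditions (FOCs) of the feasible weighted problem and the oracle problem, exploit the oracle's orthogonality condition, and show that the gap is driven entirely by the ``missing'' contributions from discarded groups. The key payoff of working with the oracle is that its FOC kills the leading-order term of the perturbation, so only groups with $\omega_g=0$ appear on the right-hand side.

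\textbf{Step 1 (Profile out $\bm{\lambda}_g$).} For both the feasible and oracle problems, the inner minimization over $\bm{\lambda}_g$ is an unconstrained least squares and yields $\Gamma\hat{\bm{\lambda}}_g = P_\Gamma(\bm{\theta}_g - \bm{\alpha} - BW_g)$ whenever $\omega_g=1$, and can be set arbitrarily otherwise. Using $\Gamma^\top\bm{\alpha}=0$, the profiled residual is $M_\Gamma(\bm{\theta}_g - \bm{\alpha} - BW_g)$, and the profiled feasible objective becomes $L_1(\bm{\alpha},B) = \sum_g \omega_g \|M_\Gamma(\bm{\theta}_g - \bm{\alpha} - BW_g)\|_2^2$, with the oracle $L^\star$ identical but with $\omega_g\equiv 1$.

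\textbf{Step 2 (Perturbation equation).} Stack $\Psi=(\bm{\alpha},\mathrm{vec}(B))$ and let $X_g$ denote the effective design such that $\bm{\alpha} + BW_g = X_g\Psi$, constrained to the admissible subspace $\mathcal{S}$ defined by $\Gamma^\top\bm{\alpha}=0$ and $B\in\mathcal{B}_0$. The oracle FOC reads $\sum_g X_g^\top M_\Gamma \bm{r}^\star_g = 0$ on $\mathcal{S}$, where $\bm{r}^\star_g \in \Gamma^\perp$ by construction. The feasible FOC is $\sum_g \omega_g X_g^\top M_\Gamma(\bm{\theta}_g - \hat{\bm{\alpha}}_1^{MD} - \hat B_1^{MD}W_g) = 0$ on $\mathcal{S}$. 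Substituting $\bm{\theta}_g = \bm{\alpha}^\star + \Gamma\bm{\lambda}^\star_g + B^\star W_g + \bm{r}^\star_g$ into the feasible FOC and subtracting the oracle FOC yields the linear system
\begin{equation*}
H_\omega\,\Delta\Psi \;=\; \sum_{g=1}^G (1-\omega_g)\, X_g^\top M_\Gamma \bm{r}^\star_g,
\end{equation*}
where $H_\omega := \sum_g \omega_g X_g^\top M_\Gamma X_g$ is the profiled Gram matrix on $\mathcal{S}$ and $\Delta\Psi$ collects $(\Delta\alpha,\Delta B)$.

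\textbf{Step 3 (Norm bounds).} For the right-hand side, $\|X_g^\top M_\Gamma \bm{r}^\star_g\|_2 \lesssim \sqrt{1+\|W_g\|_2^2}\cdot\|\bm{r}^\star_g\|_2$ since $X_g$ stacks $I_k$ and $W_g \otimes I_k$ (up to the action of $M_\Gamma$ and the inclusion into $\mathcal{B}_0$). Summing and bounding uniformly,
\begin{equation*}
\Bigl\|\sum_g (1-\omega_g) X_g^\top M_\Gamma \bm{r}^\star_g\Bigr\|_2 \;\le\; \|\omega-1\|_1\cdot \sqrt{1+\max_g\|W_g\|_2^2}\cdot \max_g\|\bm{r}^\star_g\|_2.
\end{equation*}
For the inverse Hessian, the Kronecker structure $X_g = \bigl[I_k \;\; W_g^\top\otimes I_k\bigr]$ (restricted to $\mathcal{S}$) together with $\Gamma^\top\bm{\alpha}=0$, which makes $M_\Gamma$ act as the identity on the $\bm{\alpha}$-block, yields $H_\omega \succeq G\,\lambda_{\min}(M)\cdot I$ on $\mathcal{S}$; the identification condition in Assumption~\ref{as:identification_subspace} ensures this lower bound is strict on the constraint subspace containing $\mathcal{B}_0$. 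Thus $\|H_\omega^{-1}\|_{\text{op}} \le 1/(G\lambda_{\min}(M))$ on $\mathcal{S}$.

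\textbf{Step 4 (Conclusion and main obstacle).} Combining Steps 2 and 3, $\|\Delta\Psi\|_2 = \|(\Delta\alpha,\Delta B)\|_F$ satisfies the stated bound. The main obstacle is Step 3: carefully translating the scalar Kronecker identity $\sum_g \omega_g (1,W_g)^\top(1,W_g) = GM$ into a lower eigenvalue bound for $H_\omega$ \emph{on the constraint subspace} $\mathcal{S}$ defined by $\Gamma^\top\bm{\alpha}=0$ and $B\in\mathcal{B}_0$. This requires verifying that the structural alignment in Assumption~\ref{as:identification_subspace} prevents the constraint set from picking up directions along which $M_\Gamma X_g$ degenerates, so that the eigenvalue $\lambda_{\min}(M)$ genuinely controls the conditioning of the constrained problem.
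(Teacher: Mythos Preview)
Your proposal is correct and follows essentially the same route as the paper: profile out $\bm{\lambda}_g$ via $M_\Gamma$, compare the weighted and oracle first-order conditions so that only the discarded groups survive on the right-hand side, and then bound the inverse of the profiled Gram matrix using the Kronecker link to $M$. The one point to tighten is your Step~3 claim that $H_\omega \succeq G\,\lambda_{\min}(M)\cdot I$ on $\mathcal{S}$: because $M_\Gamma$ does not act as the identity on the $B$-block, the correct lower bound is $G\,\lambda_{\min}(M)\min(1,\kappa^2)\cdot I$ with $\kappa$ from Assumption~\ref{as:identification_subspace}, which the paper handles by first passing to projected coordinates $\tilde{B}=P_{\Gamma^\perp}B$ and then lifting back---this is exactly the obstacle you flag in Step~4, and the extra $1/\min(1,\kappa)$ is absorbed by the $\lesssim$.
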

The upper bound in Proposition \ref{prp:md_bias_bound} is proportional to the share of discarded groups, $ \frac{\|\omega - 1\|_1}{G}$. The coefficient of proportionality depends on the unobserved oracle residuals and captures the uncertainty attributable to policy variation. In well-behaved problems, this measure is comparable to the asymptotic standard error (i.e., scaled by $\sqrt{G}$). This implies a simple heuristic for applied researchers: the potential bias is likely negligible if the share of discarded groups is small relative to $\frac{1}{\sqrt{G}}$.

\subsubsection{Asymptotic Equivalence and Standard Errors}
\label{sec:md_asymptotic_equivalence}

The bound above suggests an alternative asymptotic regime where the selection bias vanishes: one where the group sizes $n_g$ are allowed to increase with the total number of groups $G$. This corresponds to a setting where the share of discarded groups becomes small relative to the noise.
Crucially, when groups are large enough to eliminate the selection bias, the within-group estimation noise ($\bm{\varepsilon}_g$) also averages out. Consequently, the feasible MD estimator $\hat{B}^{MD}$ behaves asymptotically exactly like the oracle estimator $B^{\star}$ that uses the true parameters $\bm{\theta}_g$.

\begin{prop}[Asymptotic Normality] \label{prop:md_inference}
Suppose $\plim_{G\to\infty} B^\star = B_0$ for some deterministic $B_0$ and the oracle estimator is asymptotically normal, $\sqrt{G}(B^\star - B_0) \xrightarrow{d} N(0, V_{B^\star})$ for some asymptotic variance matrix $V_{B^\star}$. Suppose $\mathbb{P}_{F_g}[\omega_g=0] \lesssim \exp(-c n_g)$ and $\sum_{g=1}^G \exp(-c n_g) \ll \sqrt{G}$. Suppose that conditionally on $\{\omega_h, \{\tilde D_{h,i}\}, W_h\}_{h=1}^G$, the error terms $\bm{\varepsilon}_g$ are independent across $g$ and the variance $\|n_g\E[\bm{\varepsilon}_g \bm{\varepsilon}_g^\top \mid \omega_g=1, \{\tilde D_{g,i}\}, W_g]\|_{F}$ is uniformly bounded. Finally, suppose $\frac{1}{G}\sum_{g=1}^G W_g$ and $\frac{1}{G}\sum_{g=1}^G W_{g}W_{g}^\top$ converge in probability to well-behaved deterministic limits. Then the feasible MD estimator $\hat{B}^{MD}$ defined in \eqref{eq:feasib_md_weighted} satisfies:
\begin{equation*}
    \sqrt{G}(\hat{B}^{MD} - B_0) \xrightarrow{d} N(0, V_{B^\star}).
\end{equation*}
\end{prop}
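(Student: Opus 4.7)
}
The plan is to decompose the deviation of the feasible estimator from its target as
\begin{equation*}
\hat{B}^{MD} - B_0 \;=\; \underbrace{(B^{\star} - B_0)}_{\text{(i) oracle CLT}} \;+\; \underbrace{(\hat{B}_1^{MD} - B^{\star})}_{\text{(ii) selection bias}} \;+\; \underbrace{(\hat{B}^{MD} - \hat{B}_1^{MD})}_{\text{(iii) estimation noise}},
\end{equation*}
where $\hat{B}_1^{MD}$ is the infeasible object that solves the oracle weighted regression \eqref{eq:feasib_md_weighted} with the true $\bm{\theta}_g$ plugged in on the selected sample. Term (i), multiplied by $\sqrt{G}$, converges to $N(0,V_{B^\star})$ by hypothesis, so it suffices to show that terms (ii) and (iii) are each $o_p(1/\sqrt{G})$, after which Slutsky delivers the conclusion.

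For term (ii) I would invoke Proposition~\ref{prp:md_bias_bound}, which bounds the norm of the selection-induced deviation by the product of a deterministic factor involving $\max_g\|W_g\|_2$, $\lambda_{\min}(M)^{-1}$, and $\max_g\|\bm{r}^\star_g\|_2$, times the proportion of discarded groups $\|\omega-1\|_1/G$. The expected number of discarded groups is $\sum_g \mathbb{P}_{F_g}[\omega_g=0]\lesssim \sum_g\exp(-cn_g)\ll\sqrt{G}$, so Markov's inequality gives $\|\omega-1\|_1=o_p(\sqrt{G})$, i.e.\ $\|\omega-1\|_1/G=o_p(1/\sqrt{G})$. The stochastic prefactor must be shown to be $O_p(1)$ (up to polylog factors that can be absorbed): the assumed convergence of $G^{-1}\sum_g W_g$ and $G^{-1}\sum_g W_gW_g^\top$, combined with $\|\omega-1\|_1/G\to 0$, forces $M$ to converge to the same positive-definite limit as its full-sample counterpart, so $\lambda_{\min}(M)$ is bounded away from zero w.p.a.\ 1; moment conditions on $W_g$ and $\bm{r}^\star_g$ control the $\max_g$ terms.

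For term (iii) I write $\hat{B}^{MD} - \hat{B}_1^{MD}$ as the linear functional of the estimation errors $\{\bm{\varepsilon}_g\omega_g\}$ induced by the (partialled-out) projection onto $\mathcal{B}_0$, after eliminating $(\bm{\alpha},\{\bm{\lambda}_g\})$ via the first-order conditions of the weighted least-squares problem. Conditional on $\{\omega_h,\{\tilde D_{h,i}\},W_h\}$, this functional has mean zero and, by the independence and uniform variance bound $\|n_g \mathbb{E}[\bm{\varepsilon}_g\bm{\varepsilon}_g^\top\mid\cdot]\|_F\lesssim 1$, conditional variance of order $G^{-2}\sum_g n_g^{-1}\|\widetilde W_g\|^2$, where $\widetilde W_g$ is the partialled-out policy. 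Since the assumptions rule out very small $n_g$ (the tail condition $\sum_g\exp(-cn_g)\ll\sqrt{G}$ forces $n_g$ to grow, so $G^{-1}\sum_g n_g^{-1}\to 0$) and the second moments of $\widetilde W_g$ are controlled by the assumed design-matrix convergence, the conditional variance of $\sqrt{G}(\hat{B}^{MD}-\hat{B}_1^{MD})$ is $o_p(1)$, and Chebyshev yields the required $o_p(1)$ statement. Combining (i)--(iii) proves the claim.

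The main obstacle will be step (iii), specifically ensuring that the partialling-out operation does not inflate the noise in a way that interacts badly with selection: the projection coefficients depend on the random set $\{g:\omega_g=1\}$, so the weighted Gram matrix must be shown to be stable along the selection-perturbed sequence. This is closed by the same argument used to control $\lambda_{\min}(M)$ in step (ii), and by verifying that $G^{-1}\sum_g\omega_g W_g$ and $G^{-1}\sum_g\omega_g W_gW_g^\top$ share the limits of their full-sample analogues, which follows because the share of discarded groups vanishes at a $1/\sqrt{G}$ rate.
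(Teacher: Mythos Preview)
Your proposal is correct and mirrors the paper's proof almost exactly: the same three-term decomposition, the use of Proposition~\ref{prp:md_bias_bound} plus Markov's inequality for the selection term, and the conditional-variance Chebyshev argument for the estimation-noise term, including the key implication that $\sum_g e^{-cn_g}\ll\sqrt{G}$ forces $G^{-1}\sum_g n_g^{-1}\to 0$ (which the paper isolates as a separate lemma). Your flagged obstacle about stability of the selection-perturbed Gram matrix is also handled in the paper precisely as you suggest.
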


This equivalence has key implications for estimating the asymptotic variance $V_{B^\star}$ required for constructing confidence intervals and conducting hypothesis tests using $\hat{B}^{MD}$. The result implies that standard second-stage variance estimation approaches remain valid, provided they are appropriate for the oracle estimator.

Consider, for instance, the commonly used Eicker-Huber-White (EHW) robust variance estimator. Its validity for the oracle regression depends on underlying assumptions about the data-generating process (e.g., independence or specific dependence structures across groups $g$). If these assumptions hold, and the EHW estimator applied to the infeasible oracle regression consistently estimates $V_{B^\star}$, then Proposition \ref{prop:md_inference} implies that the feasible EHW estimator, computed using the feasible second-stage residuals $\hat{\bm{r}}_g = \hat{\bm{\theta}}_g - (\hat{\bm{\alpha}}^{MD} + \Gamma \hat{\bm{\lambda}}^{MD}_g + \hat{B}^{MD} W_{g})$, also consistently estimates $V_{B^\star}$.

The first-stage estimation noise embedded within the feasible residuals  does not distort the properties of such standard second-stage variance estimators, provided the groups are large enough. The contribution of this noise is asymptotically negligible for variance estimation, just as it is for the point estimate. Therefore, researchers can employ robust variance estimation techniques developed for linear models (like EHW or potentially its cluster-robust variants) on the second-stage MD regression, contingent only on the appropriateness of the chosen technique for the underlying structure of the oracle problem itself. No explicit correction for the first-stage estimation is needed asymptotically under these conditions.

The flip side of Proposition~\ref{prop:md_inference} is that if researchers are concerned that the MD estimator is too ``noisy'' due to small groups, they should implicitly also be concerned about the ``bias'' problem caused by selection, as the two issues vanish in the same asymptotic regime.

\begin{remark}[No Unbiased Estimation]\label{rem:uncondition_moments}
This result relies on the first-stage estimates $\hat{\bm{\theta}}_g$ being conditionally unbiased. In models without this property (e.g., TSLS), first-stage bias can persist even as selection bias vanishes. In such cases, mitigating the bias requires either bias-correction techniques (e.g., jackknife) or stronger assumptions on $n_g$.
\end{remark}

\subsection{Restoring Consistency with Auxiliary Information}
\label{subsec:design_info}

What if the share of discarded groups is not sufficiently small? In this scenario, the bias identified in Section \ref{sec:selection} dominates, and we cannot rely on the standard MD estimator. However, the problem is not that the parameter $\bm{\theta}_g$ is fundamentally unidentified, but rather that we lack the information to construct a valid estimator for it in every sample. We now show that if external information about the population moment structure is available, consistency can be restored even when the share of small groups is large. 

\paragraph{Intuition:} Consider the Child Penalty model with a Difference-in-Differences specification. As noted earlier, the standard estimator fails if a municipality has no first births in the sample, forcing us to drop the group. As discussed previously, this scenario is increasingly likely as we restrict the analysis to more narrowly defined subpopulations.\footnote{In our empirical application, which relies on administrative data with millions of observations, we cannot construct the parameters for around $5\%$ of the narrowly defined groups.} However, suppose we know the population probability of having a first birth in municipality $g$, denoted $\pi_g = \mathbb{P}[E_{g,i}=1]$, perhaps from administrative census data or a separate fertility model. With this auxiliary information, we can construct a modified estimator that does not require in-sample variation in treatment:
\begin{equation*}
\hat{\tau}_g^{alt} = \frac{1}{n_g} \sum_{i=1}^{n_g} \frac{(E_{g,i} - \pi_g)}{\pi_g(1-\pi_g)} \Delta Y_{g,i}.
\end{equation*}
This estimator takes a familiar Inverse Probability Weighting (IPW) form. Crucially, it is well-defined and unbiased for $\tau_g$ even if every woman in the specific sample $g$ happens to have $E_{g,i}=0$. Because $\hat{\tau}_g^{alt}$ exists for every group, the selection indicator is effectively $\omega_g=1$ for all $g$. The MD estimator based on these estimates averages out the noise across many groups $G$, ensuring consistency for the policy effect $B_0$ without any sample selection bias.

\paragraph{General Solution:} This logic extends to the general framework. The selection problem arises because the sample Jacobian matrix $\hat{H}_{2,g}$ may be singular. If the population matrix $H_{2,g} := \E_{F_g}[h_2(\tilde D_{g,i})]$ is known, estimated from a large auxiliary dataset, or explicitly modeled, we can bypass the sample invertibility condition entirely.

Define the alternative first-stage estimator:
\begin{equation*}
    \hat{\bm{\theta}}^{alt}_g := H_{2,g}^{-1} \left( \frac{1}{n_g}\sum_{i=1}^{n_g} h_{1}(D_{g,i}) \right) = H_{2,g}^{-1} \hat{H}_{1,g}.
\end{equation*}
Since $H_{2,g}$ is a population quantity assumed to be invertible, $\hat{\bm{\theta}}^{alt}_g$ is always well-defined. Furthermore, it is unconditionally unbiased: $\E_{F_g}[\hat{\bm{\theta}}^{alt}_g] = H_{2,g}^{-1} \E_{F_g}[\hat{H}_{1,g}] = \bm{\theta}_g$. The MD estimator constructed using these design-based estimates,
\begin{align*}
    (\hat{B}^{MD,alt}, \hat{\bm{\alpha}}^{MD,alt}, \{\hat{\bm{\lambda}}^{MD,alt}_{g}\}):= \argmin_{\substack{B, \bm{\alpha}, \{\bm{\lambda}_g\} \\ \text{s.t. } B \in \mathcal{B}_0,  \Gamma^\top \bm{\alpha} = 0}}  \sum_{g=1}^{G}\left\|\hat{\bm{\theta}}^{alt}_g - (\bm{\alpha} + \Gamma \bm{\lambda}_g + B W_{g})\right\|_2^2,
\end{align*}
converges in probability to $B_0$ as $G \to \infty$, regardless of the group size $n_g$.

This result demonstrates that the inconsistency of the standard MD estimator is not an inevitable consequence of small samples; it is a consequence of relying on unstable estimated weights. When these weights are fixed using auxiliary information---analogous to design-based inference methods (e.g., \citealp{Arkhangelsky2024c, Borusyak2023})---the hierarchical structure allows for consistent estimation of policy effects even when micro-estimates are extremely noisy.

\section{Empirical application}
\label{sec:empirics}
This section empirically demonstrates the practical consequences of our theoretical findings. We evaluate the impact of the 2005 Dutch childcare expansion on ``child penalty'' measures \citep{Kleven2019a}. This setting is particularly relevant to illustrate the endogenous weighting bias -- the policy plausibly affects both labor market outcomes and fertility decisions. Moreover, this mechanism provides a potential explanation for the conflicting evidence on the effects of such policies.\footnote{See \cite{Kleven2024} Section III.D for a review.}

A natural approach to demonstrating the associated biases would be to directly replicate an existing study. However, because our method requires both aggregate policy variation and individual-level outcomes, such replication is infeasible, as existing studies rely on restricted-access administrative data. Instead, we leverage a different comprehensive administrative dataset covering the entire Dutch population — comprising millions of observations — to conduct a new empirical analysis that closely mirrors the specifications used in the existing literature. 

We show that using a conventional one-step GMM/OLS estimator yields substantially different conclusions compared with our preferred two-step MD approach. While we do not claim that this divergence is driven exclusively by the bias, its economic magnitude demonstrates that the results are highly sensitive to the implicit weighting scheme imposed by the one-step method. Given this sensitivity, we argue that the MD estimator—which permits transparent, exogenous weighting—provides a strictly more robust basis for policy evaluation.

Our analysis proceeds by first defining the model-based outcomes (Section~\ref{sec:measurement}) and the policy intervention (Section~\ref{sec:childcareReform}). We then present a stylized comparison of GMM and MD estimators, followed by a richer MD specification that showcases its advantages (Section~\ref{sec:policy_analysis}).

\subsection{Measuring the Child Penalty as a Model-Based Outcome}
    \label{sec:measurement}

To conduct our empirical analysis, we use administrative data from the Central Bureau of Statistics of the Netherlands (CBS) on the universe of Dutch residents. Different data sources, such as municipal registers or tax records, are matched through unique, anonymized identifiers for individuals or households. Appendix \ref{sec:data} presents the main variables used and sample construction. 

Our analysis employs a dynamic event-study specification building on the DiD examples in Section \ref{sec:framework}. We use data on individual $i$ residing in municipality $g$ in the year of the birth of the first child. For each individual, we observe covariates $X_{g,i}$--gender and birth cohort ($B_{g,i}$)--and the timing of their first childbirth relative to their birth year, $E_{g,i}$ (relative event time). We model $Y_{g,i,t}$ (earnings or employment) within relevant age ranges as:
\begin{equation}\label{eq:main_cp_spec}
\begin{aligned}
&Y_{g,i,t} = \gamma_{g,i} + \delta_{g,t}(X_{g,i}) + \sum_{h \ge h_0} \tau_{g,i,h} \mathbf{1}\{B_{g,i} + E_{g,i} = t - h\} + \varepsilon_{g,i,t},\\
&\mathbb{E}_{F_{g}}[\varepsilon_{g,i,t}|X_{g,i}, E_{g,i}] = 0.
\end{aligned}
\end{equation}
Here $\gamma_{g,i}$ is an individual fixed effect, $\delta_{g,t}(X_{g,i})$ represents covariate-specific group-time effects, and $\tau_{g,i,h}$ is the individual-specific effect at horizon $h$ relative to childbirth (the ``child penalty''). 

Our primary parameter of interest, the group-level average child penalty, is the model-based outcome that is the focus of our paper:
\begin{align*}
\tau_{g,h}(x,e) := \mathbb{E}_{F_{g}}[\tau_{g,i,h}|X_{g,i} = x, E_{g,i} = e].
\end{align*}
This parameter corresponds to $\bm{\theta}_g$ in our general framework. In our analysis we focus on $\tau_{g,h}(x, e)$ for $h \in \{-2,\dots 3\}$ and $e \in \{27, \dots, 33\}$. For our subsequent policy analysis, we assume that the childcare expansion policy (detailed below) does not affect the distribution of baseline covariates $X_{g,i}$ (gender and birth cohort). 

\paragraph{Diagnostic checks}
Before proceeding to the policy evaluation, we examine $\hat \tau_{g,h}(x,e)$, particularly for pre-event horizons, to assess model validity.  A key concern for MD estimators with finite group sizes is the selection problem discussed in Section~\ref{sec:md_estimator}. To assess whether this concern is valid in our context, we plot in Figure \ref{fig:hist_ngx} the histogram of $n_{g}(x)$---the total number of individuals in group $g$ with given gender and birth cohort. This figure shows substantial variation in group sizes, with some groups very small, underscoring that the problems discussed in Section~\ref{sec:md_estimator} are relevant even for administrative datasets with millions of observations. Still, we find that we can construct estimates for 95\% of the $(g,x)$ cells in our data.\footnote{We use an $n_g(x)$-weighted share, because our subsequent analysis uses such weighting. Proposition \ref{prp:md_bias_bound} naturally extends to accommodate this.} The fact that the share of excluded groups is small suggests that the selection bias quantified in the bound from Proposition~\ref{prp:md_bias_bound} is likely to be negligible relative to the estimation error. This provides confidence in the validity of the MD approach. 

Finally, Figure \ref{fig:agg_CP}, which displays the estimated average child penalties by gender and age at first birth, reveals significant heterogeneity but finds limited evidence of anticipation effects (effects for $h<0$ are near zero), supporting the credibility of the event-study design.

\subsection{Institutional background -- Dutch childcare provision}
    \label{sec:childcareReform}
    
\paragraph{The 2005 Dutch Childcare Reform}
The 2005 Dutch Childcare Act fundamentally reformed early childhood care provision, transforming its financing and structure (see Appendix~\ref{sec:childcareAct}). The reform replaced fragmented local subsidies with a national, demand-driven, tripartite funding model (parents, employers, government), substantially reducing out-of-pocket costs for many families. Concurrently, the Act liberalized the supply side, allowing for-profit entry under a streamlined national regulatory framework.  This combination of stimulated demand and relaxed entry conditions triggered rapid growth in formal childcare capacity. The substantial reduction in costs and expansion of supply make it plausible that the reform influenced not only parental labor supply but also the timing and incidence of childbirth ($E_{g,i}$).
    
\paragraph{Childcare index}
We measure local childcare availability using a municipality-level Childcare Capacity Index ($CCI_{g,t}$), defined as the ratio of childcare workers ($J_{g,t}$) to the mandated staffing level \citep{Decree1996}:
\begin{equation}
\label{eq:CCI}
CCI_{g, t} := \frac{J_{g, t}}{\sum_{l=0}^{5} N_{g, t, l} / R_l},
\end{equation}
where $N_{g,t,l}$ is the number of children aged $l \in \{0,\dots,5\}$. A value $CCI_{g,t}=1$ signifies that the childcare workforce exactly meets the regulated staffing requirement. Recognizing that this index reflects an equilibrium outcome, our empirical strategy controls for time trends interacted with baseline demographic demand ($S_g := log(\sum_{l=0}^{5} N_{g, 1999, l} / R_l)$). Identification of treatment intensity then relies on the remaining spatial and temporal variation in $CCI_{g,t}$ following the nationwide policy implementation. Figure \ref{fig:cci_dist} shows the wide variation in $CCI$ across space and time, confirming that the reform had a powerful and heterogeneous impact across municipalities.

\begin{figure}[!htbp]
    \centering
    \caption{Childcare supply expansion}
    \begin{subfigure}[b]{0.7\textwidth}
        \centering
        \caption{Distribution of childcare index by municipality}   
        \label{fig:cci_dist}
        \includegraphics[width=.95\linewidth, keepaspectratio=true]{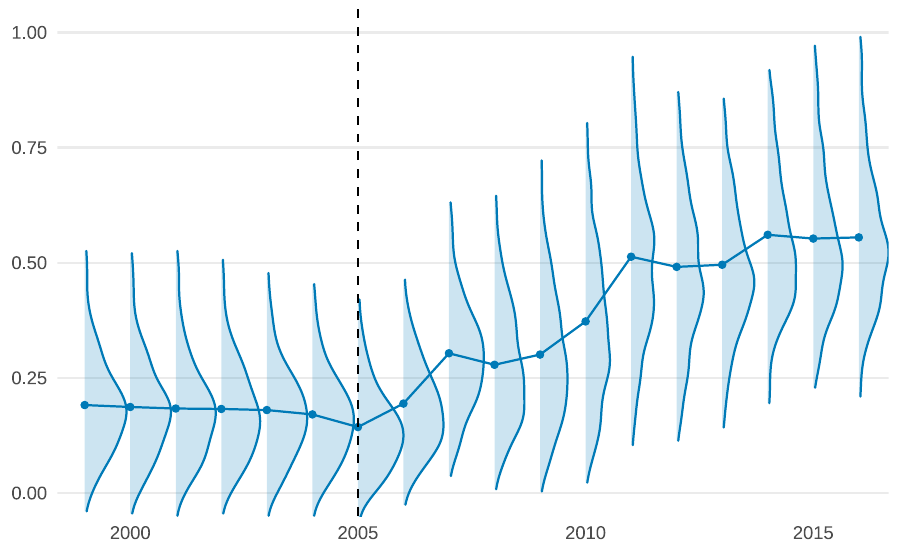} 
    \end{subfigure}
    \hfill
    \begin{subfigure}[b]{0.7\textwidth}   
        \centering
        \caption{Simplified $2 \times 2$ DiD design}   
        \label{fig:cci_2x2}
        \includegraphics[width=0.95\textwidth]{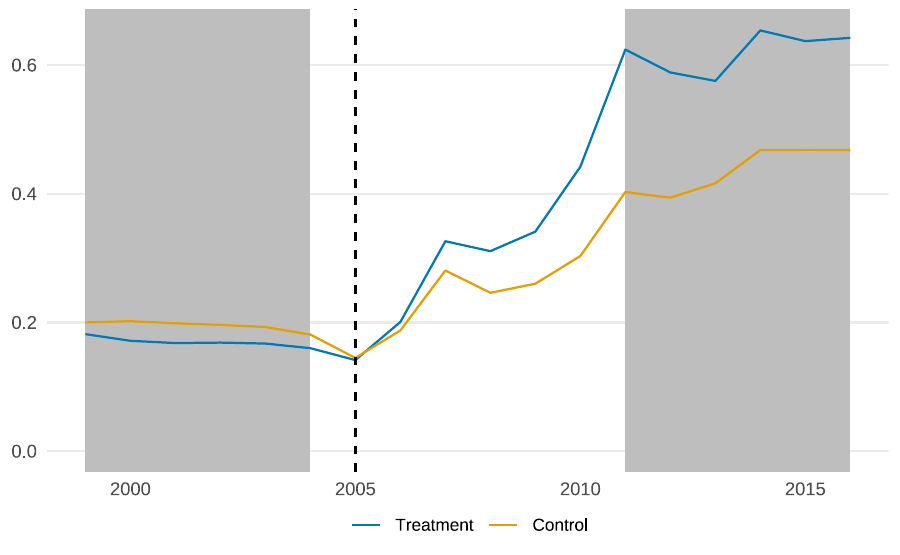}
    \end{subfigure}
    
    \caption*{
        \footnotesize \textit{Notes:} These figures present the variation in childcare supply per preschool-aged children across municipalities from 1999 to 2016. Our childcare capacity index ($CCI$) for each municipality is calculated by dividing the number of childcare jobs in a given municipality $g$ and year $t$ ($N^{jobs}_{g,t}$) by the number of required number of childminders in the same locality (see equation~\eqref{eq:CCI}). The vertical line illustrates the timing of the 2005 Dutch childcare expansion reform. Panel (a) illustrates the substantial variation in childcare availability between different municipalities and the large increase due to the 2005 childcare expansion reform. Dots represent the mean $CCI$ in a given year, whereas the shaded area represents the distribution of $CCI$ across municipalities in that year. Panel (b) illustrates the equivalent simplified $2 \times 2$ DiD design, where the time variation is binary (gray area) and treatment is binary (see Section \ref{sec:empirics_GMM_vs_MD}). Treatment is defined as municipalities with above median expansion of $CCI$ between the baseline period (1999-2004) and the post-expansion period (2011-2016).
    }
    
\end{figure}

\subsection{Policy evaluation}\label{sec:policy_analysis}
Our empirical analysis of the childcare expansion proceeds in two stages. Initially, we present a stylized comparison of GMM and MD approaches. This first exercise serves not only as a practical illustration of the pitfalls discussed in Sections~\ref{sec:identification} and \ref{sec:md_estimator} but also reflects a common empirical strategy, thereby underscoring the relevance of the biases we identify. Subsequently, we undertake a more nuanced MD analysis that directly incorporates our framework's insights regarding policy-induced compositional effects, demonstrating its advantages for applied research.

\subsubsection{Comparison between GMM and MD}
\label{sec:empirics_GMM_vs_MD}
To empirically demonstrate the GMM/OLS challenges identified in Section \ref{sec:identification}, we first examine a common evaluation strategy for policies with continuous local intensity ($CCI_{g,t}$): binarizing the treatment variable. For our illustration, we construct a binary treatment indicator, $W_g$, based on the median expansion in childcare capacity:
\begin{align*}
    W_{g} := \mathbf{1}\{\overline{CCI}_{g}^{\text{post}} - \overline{CCI}_{g}^{\text{pre}} > Med(\overline{CCI}_{g}^{\text{post}} - \overline{CCI}_{g}^{\text{pre}}) \},
\end{align*}
where $\overline{CCI}_{g}^{\text{pre}}$ and $\overline{CCI}_{g}^{\text{post}}$ are average $CCI$ levels in group $g$ during 1999-2004 and 2011-2016, respectively.\footnote{The childcare index is relatively stable in these periods, with most of the expansion happening in $2005-2010$, making them appropriate for defining $W_g$. }

\paragraph{GMM estimator}
Figure \ref{fig:cci_2x2} shows apparent parallel evolution of the average childcare index for the two groups, but also reveals that ``control'' groups ($W_g=0$) still responded to the nationwide policy. This highlights the information loss from binarization. Nonetheless, this simplification allows for a straightforward OLS specification that is common in applied work (e.g., \citealp{Kleven2024,Lim2023,Rabate2021}):
\begin{equation}\label{eq:GMM_CP}
\begin{aligned}
    Y_{g,i,t} &= \gamma_{g,i} + \delta_t + \lambda_{t-B_{g,i}} + \sum_{h \ge h_0 } \alpha_{h}\mathbf{1}\left\{t-B_{g,i} - E_{g,i} = h \right\} \\
&+ \sum_{h \ge h_0} \rho_{h}\mathbf{1}\left\{t-B_{g,i} - E_{g,i} = h \right\}W_{g} \\
&+ \sum_{h \ge h_0} \xi_{h}\mathbf{1}\left\{t-B_{g,i} - E_{g,i} = h \right\}\mathbf{1}\{E_{g,i}+ B_{g,i}>2005\} \\
&+ \sum_{h \ge h_0} \beta_h\mathbf{1}\left\{t-B_{g,i} - E_{g,i} = h \right\} W_{g}\mathbf{1}\{E_{g,i} + B_{g,i}>2005 \} + \nu_{g,i,t}. 
\end{aligned}
\end{equation}
Here, $\beta_h$ captures the policy's effect. This OLS specification is precisely the type of one-step estimator vulnerable to the endogenous weighting bias identified in Section~\ref{sec:identification}. Mechanically, the OLS estimate of $\beta_h$ is a weighted average of group-specific effects, where the implicit weights are determined by the within-group distribution of $E_{g,i}$. If the policy $W_g$ affects the distribution of fertility timing, it alters these implicit weights, creating a spurious correlation between the policy and the weighting matrix that renders the estimator inconsistent.

\paragraph{MD estimator}
An alternative, MD approach relies on the first-step estimates $\hat \tau_{g,h}(x,e)$, relating them directly to the same binarized policy:
\begin{equation}
\label{eq:MD_CP}
    \hat \tau_{g,h}(x,e) = \alpha_h + \rho_h W_g + \xi_{h} \mathbf{1}\{ e + b > 2005\} + \beta_h \mathbf{1}\{ e + b > 2005\}W_g + \upsilon_{g,h}(x,e).
\end{equation}
Here, $x$ includes covariates (education and birth cohort $b$). We run this regression separately for men and women, weighting by $n_g(x)$.
Unlike the implicit and potentially endogenous weights in the GMM specification, here we apply explicit weights, $n_g(x)$. This choice of exogenous weights is the key feature of the MD approach that mechanically purges the endogenous weighting bias.

\paragraph{Comparing empirical results}
Figure~\ref{fig:GMMvsMD_cciEffect} reveals a large difference between the two approaches. The conventional one-step GMM/OLS estimator (Panel a) suggests the policy had large effects, increasing mothers' post-childbirth earnings by up to 12,000 euros and labor force participation by 13 percentage points. In contrast, the MD approach (Panel b), which is robust to the endogenous weighting bias, finds that these effects are much smaller and statistically indistinct from the effects for men.

This divergence is a direct empirical manifestation of the theoretical problem identified in this paper. The one-step GMM/OLS specification \eqref{eq:GMM_CP} implicitly uses the conditional distribution of fertility timing ($E_{g,i}$) to weigh observations, and this distribution may be affected by the policy. While this is a population-level concern, the problem is exacerbated in practice by finite-sample issues. For the many groups with a small number of individuals $n_g(x)$ (Figure~\ref{fig:hist_ngx}), the observed distribution of $E_{g,i}$ can be highly variable due to random sampling alone. Not all of this variation in the implicit weights can or should be attributed to a causal effect of the policy, but the GMM/OLS estimator uses it regardless, without raising any warnings.

This is precisely why the MD specification is more attractive empirically. It forces the researcher to confront this variation---whether it stems from policy-induced effects or idiosyncratic noise---and make an explicit choice about weighting. By using pre-determined weights like $n_g(x)$, the MD estimator \eqref{eq:MD_CP} avoids relying on potentially noisy, endogenous, and data-driven implicit weights. The fact that this single methodological choice changes the results so profoundly underscores the practical importance of our recommendation for transparent, two-step estimation.

\begin{remark}[On Normalizing Child Penalties]
The child penalty literature often normalizes estimates (e.g., by baseline earnings). While entirely feasible within our framework, such normalization would produce an additional finite-sample bias discussed in Remark~\ref{rem:uncondition_moments}. As a result, we opt to analyze unnormalized child penalties, isolating the core methodological issues of GMM endogenous weighting from important but separate statistical problems. 
\end{remark}

\begin{figure}[!htbp]
    \centering
    \caption{GMM vs MD: Effect of the childcare provision expansion on CP}
    \label{fig:GMMvsMD_cciEffect}
    \begin{subfigure}[b]{0.85\textwidth}
        \centering
        \caption{GMM}
        \label{fig:GMM_cciEffect}
        \includegraphics[width=.95\linewidth, keepaspectratio=true]{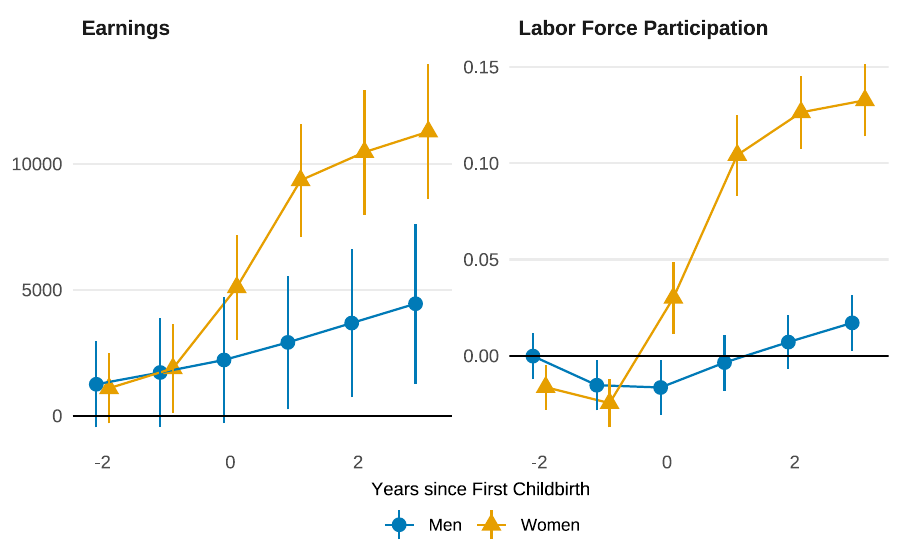}
    \end{subfigure}
    \vskip\baselineskip
    \begin{subfigure}[b]{0.85\textwidth}
        \centering
        \caption{MD}
        \label{fig:MD_cciEffect}
        \includegraphics[width=.95\linewidth, keepaspectratio=true]{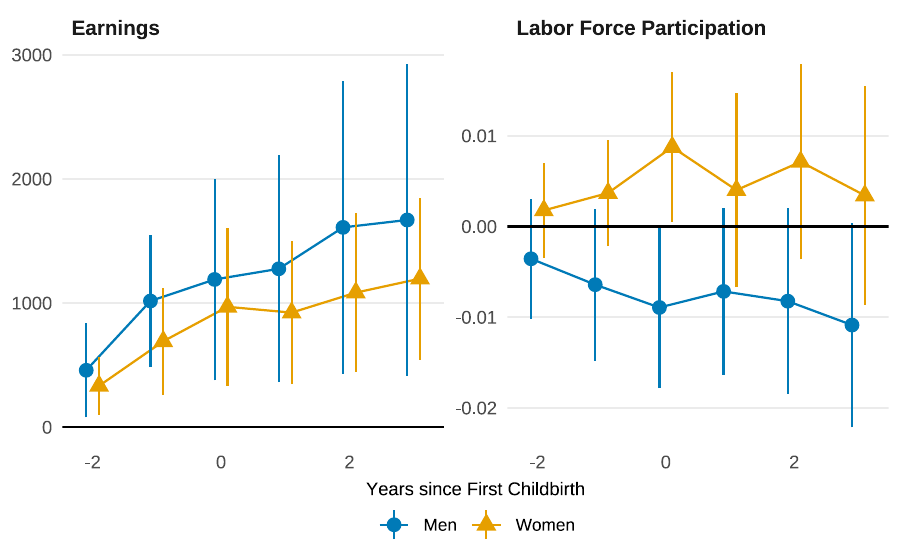}
    \end{subfigure}
     \caption*{
        \footnotesize \textit{Notes:} This figure presents the effect of the childcare provision expansion on child penalties (CP) in earnings, estimated separately using the GMM and the MD approaches (see Section \ref{sec:empirics_GMM_vs_MD} for more details). Figure \ref{fig:GMM_cciEffect} presents the results of the GMM approach commonly used in the literature (specification \eqref{eq:GMM_CP}). Figure \ref{fig:MD_cciEffect} presents the results of the MD approach we suggest (specification \eqref{eq:MD_CP}). We split the estimation between men (blue) and women (orange) and present the results for both the intensive margin (earnings, as shown in the panel above) and the extensive margin (participation, as shown in the panel below). Each dot presents the corresponding coefficient and its marginal 95\% confidence interval based on standard errors clustered by municipality $g$.
    }
\end{figure}

\subsubsection{A Richer MD Specification for Policy Evaluation}
\label{sec:rich_md}

Moving beyond the stylized comparison, we now leverage the flexibility of the MD approach to conduct a more nuanced evaluation. This allows us to use the continuous nature of the policy intensity, $CCI_{g,t}$, and investigate the compositional effects discussed in Section \ref{sec:aggregate_interpretation}. We estimate the following specification, using $n_{g}(x)$ weights:
\begin{align}
    \hat \tau_{g,h}(x,e) = \alpha_{g,h}(x) + \lambda^{(0)}_h(x,e) + \lambda^{(1)}_h(x,e)S_g + \sum_{j=-2}^{h}\beta_{h,j} CCI_{g,b +e +j} + \upsilon_{g,h}(x,e).
    \label{eq:rich_md}
\end{align}
To control flexibly for a wide array of potential confounders this specification allows for granular fixed effects by population group ($x$): group and time-since-first-birth ($\alpha_{g,h}(x)$), differential time fixed effects by first-birth-year ($\lambda^{(0)}_h(x,e)$), and time trends by first-birth-year interacted with baseline demographic demand ($\lambda^{(1)}_h(x,e)S_g$). The key term of interest is the summation, which captures the rich, dynamic effects of the policy. 
Crucially, incorporating $CCI$ levels from periods prior to childbirth ($j<0$) provides a powerful specification test. It allows us to probe for policy-induced compositional shifts that might manifest as pre-event ``effects'' on the average child penalty. If $\beta_{h,j}$ for $j<0$ were significant, it may suggest that changes in childcare capacity are correlated with our outcome even before they can have a direct causal effect, likely via changes in who becomes a mother in a given municipality.

Table \ref{tab:reg_flex_f} presents the estimated effects for mothers. We find no evidence of significant effects for $j<0$; the coefficients are small and statistically insignificant. This null result suggests that the compositional confounding is not a first-order issue in this richer specification. For the post-birth periods, we find that an expansion of childcare capacity in the year of childbirth has a persistent positive impact on mothers’ labor force participation (the extensive margin). We also observe positive effects on mothers' earnings from expansions one year after birth, but none for the extensive margin, suggesting some role for the intensive margin when formal maternity leave ends. These patterns are not mirrored for fathers (Appendix Table \ref{tab:reg_flex_m}), whose participation and earnings are largely unchanged, consistent with strong labor-market attachment of fathers commonly found in the literature. However, fathers' earnings respond positively to contemporaneous increases in childcare capacity, suggesting an intensive margin response. These plausible, nuanced results showcase the added value of the MD approach's flexibility in uncovering dynamic effects that would have been hard to assess in the one-step GMM approach.

\begin{table}[!htbp]
    \centering
    \caption{Rich MD specification -- the childcare provision expansion on CP (mothers)}
    \label{tab:reg_flex_f}
    \begin{minipage}{.7\linewidth}
        \subcaption{Earnings}
        \label{tab:reg_flex_earn_f}
        \resizebox{\columnwidth}{!}{%
\resizebox{\ifdim\width>\linewidth 1\linewidth\else\width\fi}{!}{
\begin{talltblr}[         
entry=none,label=none,
note{}={+ p < 0.1, * p < 0.05, ** p < 0.01},
]                     
{                     
colspec={Q[]Q[]Q[]Q[]Q[]Q[]Q[]},
column{2,3,4,5,6,7}={}{halign=c,},
column{1}={}{halign=l,},
hline{14}={1,2,3,4,5,6,7}{solid, black, 0.05em},
}                     
\toprule
& $\hat{\tau}_{g, -2}$ & $\hat{\tau}_{g, -1}$ & $\hat{\tau}_{g, 0}$ & $\hat{\tau}_{g, 1}$ & $\hat{\tau}_{g, 2}$ & $\hat{\tau}_{g, 3}$ \\ \midrule 
$CCI_{g, b+e-2}$ & 283.5 & 384.8 & 798.3 & 355.2 & 398.6 & -25.3 \\
& (403.1) & (631.4) & (602.3) & (714.4) & (735.4) & (895.9) \\
$CCI_{g, b+e-1}$ &  & 735.5 & 778.0 & 866.9 & 838.7 & 1091.3 \\
&  & (544.3) & (541.7) & (753.6) & (778.7) & (690.2) \\
$CCI_{g, b+e}$ &  &  & 394.9 & 191.7 & 109.2 & 287.8 \\
&  &  & (485.9) & (485.3) & (716.6) & (803.4) \\
$CCI_{g, b+e+1}$ &  &  &  & 1510.9* & 1522.9* & 1969.4* \\
&  &  &  & (618.4) & (634.0) & (892.2) \\
$CCI_{g, b+e+2}$ &  &  &  &  & 407.1 & -59.1 \\
&  &  &  &  & (630.2) & (635.2) \\
$CCI_{g, b+e+3}$ &  &  &  &  &  & 1330.8 \\
&  &  &  &  &  & (889.7) \\
N & 10,941 & 10,941 & 10,941 & 10,941 & 10,941 & 10,941 \\
$R^2$ & 0.163 & 0.258 & 0.314 & 0.369 & 0.385 & 0.369 \\
FE: Municipality $g$ & X & X & X & X & X & X \\
FE: $B_{g, i} \times E_{g, i}$ & X & X & X & X & X & X \\
FE: $(B_{g, i} \times E_{g, i})S_g$ & X & X & X & X & X & X \\
\bottomrule
\end{talltblr}
}

        }
    \end{minipage}

    \begin{minipage}{.7\linewidth}
        \subcaption{Participation}
        \label{tab:reg_flex_particip_f}
        \resizebox{\columnwidth}{!}{%
\resizebox{\ifdim\width>\linewidth 1\linewidth\else\width\fi}{!}{
\begin{talltblr}[         
entry=none,label=none,
note{}={+ p < 0.1, * p < 0.05, ** p < 0.01},
]                     
{                     
colspec={Q[]Q[]Q[]Q[]Q[]Q[]Q[]},
column{2,3,4,5,6,7}={}{halign=c,},
column{1}={}{halign=l,},
hline{14}={1,2,3,4,5,6,7}{solid, black, 0.05em},
}                     
\toprule
& $\hat{\tau}_{g, -2}$ & $\hat{\tau}_{g, -1}$ & $\hat{\tau}_{g, 0}$ & $\hat{\tau}_{g, 1}$ & $\hat{\tau}_{g, 2}$ & $\hat{\tau}_{g, 3}$ \\ \midrule 
$CCI_{g, b+e-2}$ & -0.012 & -0.026+ & -0.007 & -0.006 & -0.021 & -0.018 \\
& (0.010) & (0.014) & (0.015) & (0.017) & (0.019) & (0.020) \\
$CCI_{g, b+e-1}$ &  & 0.027* & 0.003 & 0.005 & -0.002 & -0.006 \\
&  & (0.011) & (0.011) & (0.013) & (0.017) & (0.016) \\
$CCI_{g, b+e}$ &  &  & 0.031** & 0.031* & 0.031* & 0.038* \\
&  &  & (0.011) & (0.012) & (0.014) & (0.016) \\
$CCI_{g, b+e+1}$ &  &  &  & 0.009 & -0.000 & -0.007 \\
&  &  &  & (0.012) & (0.016) & (0.017) \\
$CCI_{g, b+e+2}$ &  &  &  &  & 0.005 & 0.001 \\
&  &  &  &  & (0.014) & (0.012) \\
$CCI_{g, b+e+3}$ &  &  &  &  &  & 0.006 \\
&  &  &  &  &  & (0.015) \\
N & 10,941 & 10,941 & 10,941 & 10,941 & 10,941 & 10,941 \\
$R^2$ & 0.050 & 0.056 & 0.068 & 0.109 & 0.121 & 0.128 \\
FE: Municipality $g$ & X & X & X & X & X & X \\
FE: $B_{g, i} \times E_{g, i}$ & X & X & X & X & X & X \\
FE: $(B_{g, i} \times E_{g, i})S_g$ & X & X & X & X & X & X \\
\bottomrule
\end{talltblr}
}

        }
    \end{minipage}

    \vspace{0.5em}
    \caption*{
        \footnotesize \textit{Notes:} These tables present the effect of the childcare provision expansion on child penalties (CP) of mothers in earnings (above) and labor force participation (below). See Section \ref{sec:rich_md} for more details and Equation~\eqref{eq:rich_md} for the specification. Standard errors clustered by municipality $g$ are in parentheses.
     }
\end{table}

\section{Conclusion}
\label{sec:conclusion}
This paper develops a framework for analyzing causal effects of group-level policies on model-based outcomes, defined via micro-level moments. We highlight how policy-induced changes in the distribution of the micro-level data can bias common estimators. We find that one-step GMM estimators, including OLS with policy interactions, generally yield inconsistent estimates due to an endogenous weighting problem. Two-stage MD estimators, while avoiding this specific issue, can still be inconsistent if the groups are small. Our analysis, however, shows that the MD estimator is consistent when the share of discarded groups for which the first-step estimator is undefined is sufficiently small. Alternatively, consistency can be restored by using auxiliary data, such as known population moments. Future research could extend our framework to nonlinear moment conditions, explore random effects models in settings with small groups where MD estimators face challenges, and investigate models with complex interdependencies across groups, allowing for the analysis of network data.

\clearpage
\bibliography{sections/references}
\bibliographystyle{econ-aea}


\clearpage

\begin{center}
    \Large Supplementary Appendix 
\end{center}
\appendix

\counterwithin{figure}{section}
\counterwithin{table}{section}
\counterwithin{equation}{section}

\section{Proofs}
    \label{app:theory}
    First, we state the necessary algebraic restriction on the oracle that ensures that the estimator is well-defined:
\begin{assumption}\label{as:identification_subspace}
Let $\mathcal{B}_0 \subset \mathbb{R}^{k \times p}$ be the linear subspace of feasible policy effects. Let $P_{\Gamma^\perp} = I_k - \Gamma(\Gamma^\top \Gamma)^{-1}\Gamma^\top$ be the orthogonal projection matrix onto the null space of $\Gamma^\top$. We assume that the projection map restricted to $\mathcal{B}_0$ is injective. Specifically, let
\begin{equation*}
    \kappa := \inf_{B \in \mathcal{B}_0, B \neq \mathbf{0}} \frac{\| P_{\Gamma^\perp} B \|_F}{\| B \|_F}.
\end{equation*}
We assume $\kappa > 0$. The constant $\kappa$ represents the minimal singular value of the restricted projection and measures the preservation of the signal after removing group fixed effects.
\end{assumption}

We now explicitly verify Assumption~\ref{as:identification_subspace} for the example in Section~\ref{subsec:formal_setup} where $\Gamma$ describes group-level fixed effects and the policy effect $B$ is homogeneous across outcomes.

\begin{lemma}[Identification with Scalar Effects]
Suppose the dimension of the outcome is $k > 1$. Let $\Gamma = \bm{\iota}_k$ be a $k \times 1$ vector of ones, and let $\mathcal{B}_0 = \{ \beta I_k : \beta \in \mathbb{R} \}$ be the subspace of scalar matrices. Then Assumption~\ref{as:identification_subspace} holds with the constant $\kappa = \sqrt{\frac{k-1}{k}}$.
\end{lemma}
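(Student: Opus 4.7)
The plan is to directly compute the ratio $\|P_{\Gamma^\perp} B\|_F / \|B\|_F$ for an arbitrary nonzero $B \in \mathcal{B}_0$, exploit the fact that every $B$ in this subspace is a scalar multiple of the identity so the ratio simplifies, and then read off $\kappa$. No variational argument is needed because the subspace is one-dimensional.

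First, I would write the projection matrix explicitly. Since $\Gamma = \bm{\iota}_k$ and $\Gamma^\top \Gamma = k$, we have
\begin{equation*}
P_{\Gamma^\perp} = I_k - \tfrac{1}{k}\bm{\iota}_k \bm{\iota}_k^\top.
\end{equation*}
This matrix is a symmetric idempotent of rank $k-1$ (it is the centering projection onto the orthogonal complement of the all-ones direction).

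Next, I would parametrize $B \in \mathcal{B}_0$ as $B = \beta I_k$ with $\beta \neq 0$. Then $\|B\|_F^2 = \beta^2 \operatorname{tr}(I_k) = \beta^2 k$, and
\begin{equation*}
P_{\Gamma^\perp} B = \beta P_{\Gamma^\perp}.
\end{equation*}
Because $P_{\Gamma^\perp}$ is a symmetric idempotent, its Frobenius norm squared equals its trace, which equals its rank, so $\|P_{\Gamma^\perp}\|_F^2 = k-1$. Hence $\|P_{\Gamma^\perp} B\|_F^2 = \beta^2 (k-1)$.

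Finally, I would take the ratio and observe that it is independent of $\beta$:
\begin{equation*}
\frac{\|P_{\Gamma^\perp} B\|_F}{\|B\|_F} = \sqrt{\frac{k-1}{k}}.
\end{equation*}
Since $\mathcal{B}_0$ is one-dimensional, the infimum over $B \in \mathcal{B}_0 \setminus \{0\}$ equals this common value, giving $\kappa = \sqrt{(k-1)/k}$. For $k > 1$ this is strictly positive, so Assumption~\ref{as:identification_subspace} holds. There is no real obstacle here; the only thing to double-check is the identity $\|P\|_F^2 = \operatorname{rank}(P)$ for a symmetric idempotent $P$, which follows from $\|P\|_F^2 = \operatorname{tr}(P^\top P) = \operatorname{tr}(P^2) = \operatorname{tr}(P)$ combined with the fact that the trace of a projection equals its rank.
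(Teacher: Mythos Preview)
Your proof is correct and follows essentially the same approach as the paper: both compute $P_{\Gamma^\perp}$ explicitly, parametrize $B=\beta I_k$, use idempotency to get $\|P_{\Gamma^\perp}\|_F^2=\operatorname{tr}(P_{\Gamma^\perp})=k-1$, and read off the ratio. Your explicit observation that the one-dimensionality of $\mathcal{B}_0$ makes the infimum trivial is a nice clarifying touch, but otherwise the arguments coincide.
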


\begin{proof}
The orthogonal projection matrix onto the null space of $\Gamma^\top$ is the de-meaning matrix (or centering matrix):
\begin{align*}
    P_{\Gamma^\perp} = I_k - \Gamma(\Gamma^\top \Gamma)^{-1}\Gamma^\top = I_k - \frac{1}{k}\bm{\iota}_k\bm{\iota}_k^\top.
\end{align*}
Consider an arbitrary non-zero element $B \in \mathcal{B}_0$, which can be written as $B = \beta I_k$ for some scalar $\beta \neq 0$. The Frobenius norm of $B$ is:
\begin{align*}
    \|B\|_F = \sqrt{\text{Tr}(B^\top B)} = \sqrt{\text{Tr}(\beta^2 I_k)} = |\beta| \sqrt{k}.
\end{align*}
Next, we compute the projection of $B$. Since $P_{\Gamma^\perp} B = P_{\Gamma^\perp} (\beta I_k) = \beta P_{\Gamma^\perp}$, the Frobenius norm of the projected matrix is:
\begin{align*}
    \|P_{\Gamma^\perp} B\|_F = |\beta| \|P_{\Gamma^\perp}\|_F.
\end{align*}
Since $P_{\Gamma^\perp}$ is a projection matrix, it is idempotent ($P_{\Gamma^\perp} P_{\Gamma^\perp} = P_{\Gamma^\perp}$), so its squared Frobenius norm equals its trace:
\begin{align*}
    \|P_{\Gamma^\perp}\|_F^2 = \text{Tr}(P_{\Gamma^\perp}^\top P_{\Gamma^\perp}) = \text{Tr}(P_{\Gamma^\perp}) = \text{Tr}\left(I_k - \frac{1}{k}\bm{\iota}_k\bm{\iota}_k^\top\right) = k - 1.
\end{align*}
Thus, $\|P_{\Gamma^\perp} B\|_F = |\beta| \sqrt{k-1}$.
Finally, we compute the ratio $\kappa$:
\begin{align*}
    \frac{\|P_{\Gamma^\perp} B\|_F}{\|B\|_F} = \frac{|\beta|\sqrt{k-1}}{|\beta|\sqrt{k}} = \sqrt{\frac{k-1}{k}}.
\end{align*}
Since $k > 1$, we have $\kappa > 0$, confirming that the mapping is injective.
\end{proof}

\subsection{Proof of Proposition \ref{prop:consistency}}

\begin{proof}
The GMM estimator $(\hat{\bm{\alpha}}, \hat{B})$ minimizes the objective function:
\begin{align*}
    L(\bm{\alpha}, B) = \sum_{g=1}^G (\bm{\theta}_g - \bm{\alpha} - B W_g)^\top \tilde{A}_g (\bm{\theta}_g - \bm{\alpha} - B W_g)
\end{align*}
subject to $\Gamma^\top \bm{\alpha} = \mathbf{0}$ and $B \in \mathcal{B}_0$. We assume that the subspace $\mathcal{B}_0$ satisfies the identification condition that $\mathcal{B}_0 \cap \im(\Gamma) = \{\mathbf{0}\}$, implying that any $B \in \mathcal{B}_0$ is uniquely determined by its projection onto $\Gamma^\perp$.

Since $\tilde{A}_g$ is symmetric and positive definite, the matrix $\check{A}_g:=U^\top \tilde{A}_g U$ is also positive definite, where $U$ is a $k \times k'$ matrix with orthonormal columns ($U^\top U = I_{k'}$) forming a basis for $\im(P_{\Gamma^\perp})$. 
We transform the problem into the lower-dimensional space spanned by $U$. Let $\tilde{\bm{\alpha}} := U^\top \bm{\alpha}$ and $\tilde{B} := U^\top B$. Due to Assumption~\ref{as:identification_subspace}, the mapping $B \mapsto \tilde{B}$ is a bijection between $\mathcal{B}_0$ and its image $\tilde{\mathcal{B}}_0 \subset \mathbb{R}^{k' \times p}$.

We first derive the solution for the ``unconstrained'' problem where $\tilde{B}$ can be any matrix in $\mathbb{R}^{k' \times p}$ (corresponding to $B$ ranging over the entire subspace $\Gamma^\perp$). Let $(\hat{\tilde{\bm{\alpha}}}^{unc}, \hat{\tilde{B}}^{unc})$ denote this unconstrained estimator. It satisfies the normal equations:
\begin{align}
\left(\sum_{g=1}^G \check{A}_g\right) \hat{\tilde{\bm{\alpha}}}^{unc} + \left(\sum_{g=1}^G \check{A}_g \hat{\tilde{B}}^{unc} W_g\right) &= \sum_{g=1}^G U^\top \tilde{A}_g \bm{\theta}_g \label{eq:norm_eq1_detailed_final} \\
\left(\sum_{g=1}^G W_g \otimes \check{A}_g\right) \hat{\tilde{\bm{\alpha}}}^{unc} + \left(\sum_{g=1}^G W_g W_g^\top \otimes \check{A}_g\right) \vectorize(\hat{\tilde{B}}^{unc}) &= \vectorize\left(\sum_{g=1}^G U^\top \tilde{A}_g \bm{\theta}_g W_g^\top\right) \label{eq:norm_eq2_detailed_final}
\end{align}
Under Assumption~\ref{as:simple_model}(c), the potential outcome is $\bm{\theta}_g(W_g) = \bm{\alpha}_g + B_0 W_g$, with $B_0 \in \mathcal{B}_0$. We define the transformed true parameter as $\tilde{B}_0 = U^\top B_0$. Similarly, the population intercept $\bm{\alpha}^0$ satisfies $\Gamma^\top \bm{\alpha}^0 = \mathbf{0}$, so $\bm{\alpha}^0 = U \tilde{\bm{\alpha}}^0$ for some $\tilde{\bm{\alpha}}^0$. We define the residual $\bm{\epsilon}_g^0 = \bm{\alpha}_g - \bm{\alpha}^0$. By the definition of $\bm{\alpha}^0$ as the GLS population intercept, it satisfies $\E[\check{A}_g(U^\top\bm{\alpha}_g - \tilde{\bm{\alpha}}^0)] = \mathbf{0}$, which implies $\E[U^\top \tilde{A}_g \bm{\epsilon}_g^0] = \mathbf{0}$.

We substitute the projected model $U^\top \bm{\theta}_g = U^\top \bm{\alpha}^0 + U^\top \bm{\epsilon}_g^0 + \tilde{B}_0 W_g$ into the right-hand side (RHS) of the normal equations:
\begin{align*}
    \text{RHS of \eqref{eq:norm_eq1_detailed_final}} &= \sum U^\top \tilde{A}_g \bm{\epsilon}_g^0 + \left(\sum \check{A}_g\right) \tilde{\bm{\alpha}}^0 + \left(\sum \check{A}_g \tilde{B}_0 W_g\right),\\
       \text{RHS of \eqref{eq:norm_eq2_detailed_final}} &= \vectorize\left(\sum U^\top \tilde{A}_g \bm{\epsilon}_g^0 W_g^\top\right) + \vectorize\left(\sum \check{A}_g \tilde{\bm{\alpha}}^0 W_g^\top\right) + \vectorize\left(\sum \check{A}_g \tilde{B}_0 W_g W_g^\top\right).
\end{align*}
Using the identity $\vectorize(A B C) = (C^\top \otimes A) \vectorize(B)$, note that $\vectorize(\check{A}_g \tilde{\bm{\alpha}}^0 W_g^\top) = (W_g \otimes \check{A}_g)\tilde{\bm{\alpha}}^0$ and $\vectorize(\check{A}_g \tilde{B}_0 W_g W_g^\top) = (W_g W_g^\top \otimes \check{A}_g)\vectorize(\tilde{B}_0)$.

Let $\Delta\tilde{\bm{\alpha}}^{unc} = \hat{\tilde{\bm{\alpha}}}^{unc} - \tilde{\bm{\alpha}}^0$ and $\Delta\tilde{B}^{unc} = \hat{\tilde{B}}^{unc} - \tilde{B}_0$. Substituting these into the left-hand side (LHS) of the normal equations and equating LHS to RHS causes the terms involving $\tilde{\bm{\alpha}}^0$ and $\tilde{B}_0$ to cancel out. This leaves a system for the estimation error:
\begin{align*}
\left(\sum_{g=1}^G \check{A}_g\right) \Delta\tilde{\bm{\alpha}}^{unc} + \left(\sum_{g=1}^G W_g^\top \otimes \check{A}_g\right) \vectorize(\Delta\tilde{B}^{unc}) &= \sum_{g=1}^G U^\top \tilde{A}_g \bm{\epsilon}_g^0 \\
\left(\sum_{g=1}^G W_g \otimes \check{A}_g\right) \Delta\tilde{\bm{\alpha}}^{unc} + \left(\sum_{g=1}^G W_g W_g^\top \otimes \check{A}_g\right) \vectorize(\Delta\tilde{B}^{unc}) &= \vectorize\left(\sum_{g=1}^G U^\top \tilde{A}_g \bm{\epsilon}_g^0 W_g^\top\right)
\end{align*}
Let $\tilde{H}_{11} = \sum \check{A}_g$, $\tilde{H}_{12} = \sum (W_g^\top \otimes \check{A}_g)$, $\tilde{H}_{21} = \sum (W_g \otimes \check{A}_g)$, and $\tilde{H}_{22} = \sum (W_g W_g^\top \otimes \check{A}_g)$. Let $C_1^{\epsilon} = \sum U^\top \tilde{A}_g \bm{\epsilon}_g^0$ and $C_2^{\epsilon} = \vectorize(\sum U^\top \tilde{A}_g \bm{\epsilon}_g^0 W_g^\top)$. Solving for $\vectorize(\Delta\tilde{B}^{unc})$ using the Schur complement $\tilde{S} = \tilde{H}_{22} - \tilde{H}_{21}\tilde{H}_{11}^{-1}\tilde{H}_{12}$:
\begin{align*}
\vectorize(\Delta\tilde{B}^{unc}) = \tilde{S}^{-1} \left( C_2^{\epsilon} - \tilde{H}_{21}\tilde{H}_{11}^{-1} C_1^{\epsilon} \right).
\end{align*}
The feasible estimator $\hat{\tilde{B}}$ minimizes the quadratic objective restricted to $\tilde{B} \in \tilde{\mathcal{B}}_0$. This is equivalent to minimizing the distance to the unconstrained estimator in the metric of the concentrated Hessian $\tilde{S}$:
\begin{align*}
\vectorize(\hat{\tilde{B}}) = \argmin_{b \in \vectorize(\tilde{\mathcal{B}}_0)} (b - \vectorize(\hat{\tilde{B}}^{unc}))^\top \tilde{S} (b - \vectorize(\hat{\tilde{B}}^{unc}))
\end{align*}
Consequently, the estimation error $\vectorize(\Delta\tilde{B}) = \vectorize(\hat{\tilde{B}} - \tilde{B}_0)$ is the orthogonal projection of the unconstrained error $\vectorize(\Delta\tilde{B}^{unc})$ onto the subspace $\vectorize(\tilde{\mathcal{B}}_0)$ under the inner product defined by $\tilde{S}$. Let $\mathcal{P}_{\tilde{\mathcal{B}}_0, \tilde{S}}$ denote this projection operator. Then:
\begin{align*}
\vectorize(\Delta\tilde{B}) = \mathcal{P}_{\tilde{\mathcal{B}}_0, \tilde{S}} \left[ \tilde{S}^{-1} \left( C_2^{\epsilon} - \tilde{H}_{21}\tilde{H}_{11}^{-1} C_1^{\epsilon} \right) \right].
\end{align*}
We now consider the probability limits as $G \to \infty$. Under Assumption~\ref{as:simple_model}, the sample averages converge in probability to their population counterparts:
$\frac{1}{G}\tilde{H}_{ij} \xrightarrow{p} H_{ij,plim}$, $\frac{1}{G}C_1^{\epsilon} \xrightarrow{p} \E[U^\top \tilde{A}_g \bm{\epsilon}_g^0] = \mathbf{0}$, and $\frac{1}{G}C_2^{\epsilon} \xrightarrow{p} \vectorize(\E[U^\top \tilde{A}_g \bm{\epsilon}_g^0 W_g^\top])$.

The probability limit of the scaled Schur complement is 
\begin{align*}
    S_{plim} = H_{22,plim} - H_{21,plim}H_{11,plim}^{-1}H_{12,plim}.
\end{align*}
This matrix is positive definite if the full Hessian is positive definite, which is guaranteed by Assumption~\ref{as:simple_model}(a) ($\Var[W_g]$ is positive definite). The projection operator converges to the population projection $\mathcal{P}_{\tilde{\mathcal{B}}_0, S_{plim}}$. Thus:
\begin{align*}
\plim_{G\to\infty} \vectorize(\Delta\tilde{B}) &= \mathcal{P}_{\tilde{\mathcal{B}}_0, S_{plim}} \left[ S_{plim}^{-1} \vectorize(\E[U^\top \tilde{A}_g \bm{\epsilon}_g^0 W_g^\top]) \right].
\end{align*}
Consistency for the target parameter $B_0$ requires $\plim \hat{B} = B_0$, which by Assumption~\ref{as:identification_subspace} is equivalent to $\plim \hat{\tilde{B}} = \tilde{B}_0$, or $\Delta\tilde{B} \xrightarrow{p} \mathbf{0}$. For this to hold structurally, the term inside the projection must be zero. The condition $M_\Gamma \Cov[\tilde{A}_g \bm{\epsilon}_g^0, W_g] = \mathbf{0}$ simplifies to $U^\top \E[\tilde{A}_g \bm{\epsilon}_g^0 W_g^\top] = \mathbf{0}$.
This implies the bias term is zero, ensuring consistency. Conversely, if the covariance is non-zero, the unconstrained estimator is inconsistent, and consistent recovery of $B_0$ is generally impossible. Thus, the stated condition is necessary and sufficient.
\end{proof}

\subsection{Proof of Proposition \ref{prp:md_bias_bound}}

\begin{prop}[Bias Bound]
Let $(\hat {B}_1^{MD}, \hat {\bm{\alpha}}_1^{MD}, \{\hat \lambda^{MD}_{1,g}\}) $ be the solution using $\bm{\theta}_g$ as the outcome in the weighted regression \eqref{eq:feasib_md_weighted}. Define the estimation errors $\Delta B = \hat {B}_1^{MD} - B^{\star}$ and $\Delta \alpha = \hat {\bm{\alpha}}_1^{MD} - \bm{\alpha}^{\star}$. Let $M := \frac{1}{G}\sum_{g=1}^G \omega_g (1,W_{g})^\top (1,W_g)$. Then,
\begin{equation*}
 \left\| (\Delta \alpha, \Delta B) \right\|_F \le \frac{1}{\min(1, \kappa)} \left(\frac{\sqrt{1+\max_{g} \|W_g\|_2^2}}{\lambda_{\min}(M)}  \max_g \|\bm{r}^\star_g\|_2 \right) \frac{\|\omega - 1\|_1}{G}
\end{equation*}
where $\bm{r}^\star_g$ is the oracle residual, $\|\omega - 1\|_1 = \sum_{g=1}^G (1-\omega_g)$ is the number of discarded groups, and $\kappa$ is the identification constant from Assumption~\ref{as:identification_subspace}.
\end{prop}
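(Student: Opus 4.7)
The plan is to exploit the shared quadratic structure of the oracle problem~\eqref{eq:main_object} and the feasible weighted problem~\eqref{eq:feasib_md_weighted}: both minimize the same per-group objective, and they differ only in that the oracle weights every group by one, while the feasible version weights group $g$ by $\omega_g$. The gap between their optima is therefore driven entirely by the contributions of the $\|\omega - 1\|_1$ discarded groups, and the task reduces to inverting a well-conditioned constrained quadratic.

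First, I would profile out the nuisance parameters $\{\bm{\lambda}_g\}$. For any $(\bm{\alpha}, B)$ with $\Gamma^\top \bm{\alpha} = 0$, the optimal $\bm{\lambda}_g$ leaves the residual $P_{\Gamma^\perp}(\bm{\theta}_g - \bm{\alpha} - B W_g)$, so both problems reduce to constrained weighted least squares over $(\bm{\alpha}, B) \in \mathcal{C} := \{\bm{\alpha} : \Gamma^\top \bm{\alpha} = 0\} \times \mathcal{B}_0$. The oracle's first-order conditions, projected onto $\mathcal{C}$, state that $\sum_g \bm{r}_g^\star = \mathbf{0}$ and that $\sum_g \bm{r}_g^\star W_g^\top$ projects to zero on $\mathcal{B}_0$. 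Subtracting these from the feasible first-order conditions evaluated at the oracle, the feasible gradient at $(\bm{\alpha}^\star, B^\star)$ equals $-2\, P_{\mathcal{C}} \sum_g (1 - \omega_g)\bigl(\bm{r}_g^\star,\, \bm{r}_g^\star W_g^\top\bigr)$. The triangle inequality combined with the bound $\|(1, W_g^\top)\|_2 \le \sqrt{1 + \|W_g\|_2^2}$ gives $\|\nabla\|_F \le 2 \max_g \|\bm{r}_g^\star\|_2 \sqrt{1 + \max_g \|W_g\|_2^2} \cdot \|\omega - 1\|_1$.

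Second, I would establish strong convexity of the feasible objective on $\mathcal{C}$. For any direction $(\dot{\bm{\alpha}}, \dot B) \in \mathcal{C}$, the quadratic form evaluates to $\sum_g \omega_g \|\dot{\bm{\alpha}} + P_{\Gamma^\perp} \dot B\, W_g\|_2^2 = \mathrm{tr}\bigl([\dot{\bm{\alpha}}, P_{\Gamma^\perp} \dot B]^\top [\dot{\bm{\alpha}}, P_{\Gamma^\perp} \dot B]\,(G M)\bigr)$, which is at least $G \lambda_{\min}(M) \cdot (\|\dot{\bm{\alpha}}\|_2^2 + \|P_{\Gamma^\perp} \dot B\|_F^2)$. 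Invoking Assumption~\ref{as:identification_subspace} in the form $\|P_{\Gamma^\perp} \dot B\|_F \ge \kappa \|\dot B\|_F$, this is bounded below by $G \lambda_{\min}(M)\, \min(1, \kappa)^2\, \|(\dot{\bm{\alpha}}, \dot B)\|_F^2$, giving the strong-convexity constant. Combining the gradient bound with strong convexity via the standard identity $\|\Delta\|_F \le \|\nabla\|_F / \lambda_{\min}(\mathrm{Hessian})$ for the minimizer of a quadratic on a linear constraint set then yields the stated bound.

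The main obstacle is careful bookkeeping around the constraint set $\mathcal{C}$: one must ensure that projections onto $\mathcal{C}$ are applied consistently on both sides of the gradient identity, and that the identification constant $\kappa$ from Assumption~\ref{as:identification_subspace} is invoked at the correct step when translating bounds phrased in terms of $P_{\Gamma^\perp} \dot B$ into bounds on $\dot B$ itself. A secondary subtlety is that the Hessian only ``sees'' $B$ through $P_{\Gamma^\perp} B$, so the identification condition is what actually makes the problem well-posed on $\mathcal{B}_0$ rather than merely on $P_{\Gamma^\perp} \mathcal{B}_0$. Once the geometry of the constraint set has been pinned down, the remainder of the argument is routine linear algebra.
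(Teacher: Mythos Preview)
Your approach is essentially the paper's: profile out $\{\bm{\lambda}_g\}$, exploit the fact that the weighted and unweighted first-order conditions differ only on the discarded groups, and then invert via a Hessian lower bound. The gradient bound and the trace identity $\sum_g \omega_g \|\dot{\bm{\alpha}} + P_{\Gamma^\perp}\dot B\,W_g\|_2^2 = \mathrm{tr}\bigl([\dot{\bm{\alpha}},P_{\Gamma^\perp}\dot B]^\top[\dot{\bm{\alpha}},P_{\Gamma^\perp}\dot B]\,(GM)\bigr)$ are both correct and match the paper.

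There is one quantitative slip. By invoking Assumption~\ref{as:identification_subspace} \emph{inside} the strong-convexity step, you obtain a curvature lower bound of $2G\lambda_{\min}(M)\min(1,\kappa)^2$ in the $(\bm{\alpha},B)$-metric, and the final inequality $\|\Delta\|_F \le \|\nabla\|_F/\lambda_{\min}(\text{Hessian})$ then delivers
\[
\|(\Delta\alpha,\Delta B)\|_F \;\le\; \frac{1}{\min(1,\kappa)^2}\left(\frac{\sqrt{1+\max_g\|W_g\|_2^2}}{\lambda_{\min}(M)}\max_g\|\bm{r}_g^\star\|_2\right)\frac{\|\omega-1\|_1}{G},
\]
which is weaker than the stated bound by one factor of $\min(1,\kappa)$. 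The paper avoids this loss by reparametrizing to $\tilde B = P_{\Gamma^\perp}B$ and working in the projected space $\tilde X=(\bm{\alpha},\tilde B)$: there the Hessian lower bound is $2G\lambda_{\min}(M)$ with no $\kappa$, yielding a clean bound on $\|\Delta\tilde X\|_F$, and $\kappa$ enters only once at the end via $\|\Delta B\|_F \le \kappa^{-1}\|P_{\Gamma^\perp}\Delta B\|_F$. Your argument can be repaired in the same spirit: note that both sides of the identity $\Delta^\top H_1 \Delta = -\Delta^\top \nabla L_1(x_0)$ depend on $\Delta B$ only through $P_{\Gamma^\perp}\Delta B$ (because $\bm{r}'_g$ already lies in $\mathrm{range}(P_{\Gamma^\perp})$), so Cauchy--Schwarz can be applied with $\|\Delta\tilde X\|_F$ rather than $\|\Delta X\|_F$, after which a single application of $\kappa$ recovers the stated constant.
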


\begin{proof}
Let $\bm{\theta}_g \in \mathbb{R}^k$ be the group-specific parameter vector. Let $\bm{\alpha} \in \mathbb{R}^k$ be the intercept and $B \in \mathcal{B}_0 \subset \mathbb{R}^{k \times p}$ be the matrix of policy effects. The objective function for the weighted regression is:
\begin{align*}
    L(\bm{\alpha}, B, \{\bm{\lambda}_g\}; \{\omega_g\}) = \sum_{g=1}^G \|\bm{\theta}_g - \bm{\alpha} - \Gamma \bm{\lambda}_g - B W_g\|_2^2 \omega_g
\end{align*}
For fixed $(\bm{\alpha}, B)$, minimizing with respect to the nuisance parameters $\bm{\lambda}_g$ yields the orthogonal projection of the residuals onto the complement of $\Gamma$. Let $P_{\Gamma^\perp} = I_k - \Gamma(\Gamma^\top \Gamma)^{-1}\Gamma^\top$. The concentrated objective function becomes:
\begin{align*}
    L'(\bm{\alpha}, B; \{\omega_g\}) = \sum_{g=1}^G \| P_{\Gamma^\perp}(\bm{\theta}_g - \bm{\alpha} - B W_g) \|_2^2 \omega_g
\end{align*}
Since $\Gamma^\top \bm{\alpha} = \mathbf{0}$, we have $P_{\Gamma^\perp} \bm{\alpha} = \bm{\alpha}$. For the policy effects, we define the \textit{projected} matrix $\tilde{B} := P_{\Gamma^\perp} B$. By Assumption~\ref{as:identification_subspace}, the mapping between $B \in \mathcal{B}_0$ and its projection $\tilde{B}$ is a bijection onto its image. We can therefore re-parameterize the objective in terms of the identifiable parameters $\tilde{X} = (\bm{\alpha}, \tilde{B})$. Let $\bm{\theta}'_g := P_{\Gamma^\perp} \bm{\theta}_g$. The objective simplifies to:
\begin{align*}
    \tilde{L}(\tilde{X}; \{\omega_g\}) = \sum_{g=1}^G \| \bm{\theta}'_g - \bm{\alpha} - \tilde{B} W_g \|_2^2 \omega_g
\end{align*}
The parameters $\tilde{X}$ operate in the subspace $\im(P_{\Gamma^\perp})$ of dimension $k'$. By choosing an orthonormal basis for this subspace, we can view $\tilde{X}$ as a $k' \times (1+p)$ matrix of coefficients.

Let $\tilde{X}_1 := (\hat{\bm{\alpha}}_1^{MD}, \tilde{B}_1^{MD})$ be the minimizer of $\tilde{L}(\tilde{X}; \{\omega_g\})$, and $\tilde{X}_0 := (\bm{\alpha}^\star, \tilde{B}^\star)$ be the minimizer of the unweighted objective ($\omega_g=1$). Define the estimation error in the projected space as $\Delta \tilde{X} = \tilde{X}_1 - \tilde{X}_0$. The oracle residual is $\bm{r}_g^\star$, and its projection is $\bm{r}'_g := P_{\Gamma^\perp}\bm{r}_g^\star = \bm{\theta}'_g - \bm{\alpha}^\star - \tilde{B}^\star W_g$.

The first-order condition for the weighted estimator is $\nabla \tilde{L}_1(\tilde{X}_1) = \mathbf{0}$. A Taylor expansion around $\tilde{X}_0$ yields:
\begin{align*}
    \nabla \tilde{L}_1(\tilde{X}_1) = \nabla \tilde{L}_1(\tilde{X}_0) + \tilde{H}_1 \Delta \tilde{X} = \mathbf{0} \implies \tilde{H}_1 \Delta \tilde{X} = - \nabla \tilde{L}_1(\tilde{X}_0)
\end{align*}
where $\tilde{H}_1$ is the Hessian. We analyze the gradient and Hessian separately.

\textbf{Gradient Bound:} The gradient $\nabla \tilde{L}_1(\tilde{X}_0)$ relates to the dropped groups. Since the unweighted gradient is zero ($\sum \nabla \tilde{L}_0^{(g)} = \mathbf{0}$), we have:
\begin{align*}
    -\nabla \tilde{L}_1(\tilde{X}_0) = - \sum_{g=1}^G (\omega_g - 1 + 1) \nabla \tilde{L}_0^{(g)}(\tilde{X}_0) = \sum_{g=1}^G (1 - \omega_g) \nabla \tilde{L}_0^{(g)}(\tilde{X}_0)
\end{align*}
The term $\nabla \tilde{L}_0^{(g)}(\tilde{X}_0)$ represents the gradient of the squared residual for group $g$ with respect to $\tilde{X}$. This is given by $-2(\bm{r}'_g, \bm{r}'_g W_g^\top)$. Let $G'_0 = 2 \sum_{g=1}^G (1 - \omega_g) (\bm{r}'_g, \bm{r}'_g W_g^\top)$. We bound its Frobenius norm:
\begin{align*}
\|G'_0\|_F &= \left\| 2 \sum_{g=1}^G (1 - \omega_g) (\bm{r}'_g, \bm{r}'_g W_g^\top) \right\|_F \\
&\le 2 \sum_{g=1}^G |1 - \omega_g| \|\bm{r}'_g\|_2 \sqrt{1 + \|W_g\|_2^2} \\
&\le 2 \left( \max_g \|\bm{r}_g^\star\|_2 \right) \sqrt{1 + \max_g \|W_g\|_2^2} \|\omega - 1\|_1
\end{align*}
where we used $\|\bm{r}'_g\|_2 = \|P_{\Gamma^\perp} \bm{r}_g^\star\|_2 \le \|\bm{r}_g^\star\|_2$ (projection is non-expansive).

\textbf{Hessian Bound:} The Hessian $\tilde{H}_1$ corresponds to the second derivatives of $\tilde{L}$. Because $\bm{\alpha}$ and $\tilde{B}$ act linearly and the quadratic loss is separable across the orthogonal dimensions of the projected space, the problem decomposes into independent weighted least squares problems sharing the same design matrix. The Hessian for each scalar component is $2 \sum \omega_g (1, W_g)^\top (1, W_g) = 2G M$.
Thus, the smallest singular value of the full Hessian $\tilde{H}_1$ acting on $\Delta \tilde{X}$ is $2G \lambda_{\min}(M)$. This implies:
\begin{align*}
    \|\Delta \tilde{X}\|_F \le \frac{1}{2G \lambda_{\min}(M)} \|G'_0\|_F
\end{align*}
Substituting the gradient bound:
\begin{align*}
\|\Delta \tilde{X}\|_F \le \left(\frac{\sqrt{1+\max_{g} \|W_g\|_2^2}}{\lambda_{\min}(M)}  \max_g \|\bm{r}^\star_g\|_2 \right) \frac{\|\omega - 1\|_1}{G}.
\end{align*}
\textbf{Structural Parameter Bound:} We must recover the bound for the structural parameters $\Delta X = (\Delta \bm{\alpha}, \Delta B)$. Note that $\|\Delta X\|_F^2 = \|\Delta \bm{\alpha}\|_F^2 + \|\Delta B\|_F^2$.
Since $\bm{\alpha}$ lies in the image of $P_{\Gamma^\perp}$, we have $\|\Delta \bm{\alpha}\|_F = \|P_{\Gamma^\perp} \Delta \bm{\alpha}\|_F = \|\Delta \tilde{\bm{\alpha}}\|_F$.
For the policy effects, Assumption~\ref{as:identification_subspace} implies $\|P_{\Gamma^\perp} \Delta B\|_F \ge \kappa \|\Delta B\|_F$. Since $\Delta \tilde{B} = P_{\Gamma^\perp} \Delta B$, we have $\|\Delta B\|_F \le \frac{1}{\kappa} \|\Delta \tilde{B}\|_F$.
Combining these:
\begin{align*}
    \|\Delta X\|_F^2 &\le \|\Delta \tilde{\bm{\alpha}}\|_F^2 + \frac{1}{\kappa^2} \|\Delta \tilde{B}\|_F^2 \\
    &\le \frac{1}{\min(1, \kappa^2)} \left( \|\Delta \tilde{\bm{\alpha}}\|_F^2 + \|\Delta \tilde{B}\|_F^2 \right) \\
    &= \frac{1}{\min(1, \kappa)^2} \|\Delta \tilde{X}\|_F^2
\end{align*}
Taking the square root and using the bound for $\|\Delta \tilde{X}\|_F$ completes the proof.
\end{proof}

\subsection{Proof of Proposition \ref{prop:md_inference}}
\begin{proof}
The feasible MD estimator $\hat{B}^{MD}$ can be decomposed based on its relationship with the oracle estimator $B^\star$. We write the total estimation error as:
\begin{align*}
    \hat{B}^{MD} - B_0 = (\hat{B}_1^{MD} - B^\star) + \hat{B}_0^{MD} + (B^\star - B_0).
\end{align*}
The third term, $\sqrt{G}(B^\star - B_0)$, converges to the limiting normal distribution $N(0, V_{B^\star})$ by assumption. We need to show that the first two components, when scaled by $\sqrt{G}$, vanish in probability:
\begin{enumerate}
    \item[(A)] $\sqrt{G}(\hat{B}_1^{MD} - B^\star) \xrightarrow{p} \mathbf{0}$
    \item[(B)] $\sqrt{G}\hat{B}_0^{MD} \xrightarrow{p} \mathbf{0}$
\end{enumerate}

\paragraph{Part (A):} Proposition~\ref{prp:md_bias_bound} provides a bound for the Frobenius norm of the identifiable parameters. Let $\Delta \tilde{X} = (\Delta \bm{\alpha}, \Delta \tilde{B})$ be the estimation error in the projected space. The proposition establishes:
\begin{align*}
\normF{\Delta \tilde{X}} \le K \frac{\sum_{g=1}^G (1-\omega_g)}{G}
\end{align*}
where $K = \frac{\sqrt{1+M_W^2}}{\lambda_{\min}(M)}(\max_g \normtwo{\bm{r}_g^\star})$. Under standard regularity conditions (bounded regressors, bounded residuals, and positive definite design matrix), $K = O_p(1)$.

Let $N_{excl} = \sum_{g=1}^G (1-\omega_g)$ be the number of excluded groups. The expectation is $\E[N_{excl}] = \sum_{g=1}^G \mathbb{P}[\omega_g=0]$. By assumption, $\sum_{g=1}^G \mathbb{P}[\omega_g=0] \ll \sqrt{G}$, which implies $\E[N_{excl}] = o(\sqrt{G})$. By Markov's inequality, for any $\epsilon > 0$, $P(N_{excl} \ge \epsilon\sqrt{G}) \le \E[N_{excl}] / (\epsilon\sqrt{G})$. Since the RHS converges to 0, it follows that $N_{excl} = o_p(\sqrt{G})$.
Therefore,
\begin{align*}
\normF{\Delta \tilde{X}} = O_p(1) \cdot \frac{o_p(\sqrt{G})}{G} = o_p\left(\frac{1}{\sqrt{G}}\right).
\end{align*}
This implies $\sqrt{G} \normF{\Delta \tilde{X}} = o_p(1)$.
By Assumption~\ref{as:identification_subspace} $\|\hat{B}_1^{MD} - B^\star\|_F \sim \|\Delta \tilde{B}\|_F \le \|\Delta \tilde{X}\|_F$. Consequently, $\sqrt{G}(\hat{B}_1^{MD} - B^\star) \xrightarrow{p} \mathbf{0}$.

\paragraph{Part (B):}
Assume $\frac{1}{G}\sum_{g=1}^G W_g \xrightarrow{p} \mu_W$ and $\frac{1}{G}\sum_{g=1}^G W_g W_g^\top \xrightarrow{p} \Sigma_{WW}$ for deterministic limits, with $\Sigma_{WW} - \mu_{W}\mu_{W}^\top > 0$.
The component $\hat{B}_0^{MD}$ corresponds to the estimator solving the weighted MD problem where the outcome variable is the first-stage estimation error $\bm{\varepsilon}_g = \hat{\bm{\theta}}_g - \bm{\theta}_g$. The target parameters for this regression are $\mathbf{0}$. We analyze the estimator in the projected space, $\hat{\tilde{B}}_0^{MD}$, which solves the weighted normal equations:
\begin{align*}
\left(\sum_{g=1}^G \omega_g I_{k'}\right) \hat{\tilde{\bm{\alpha}}}_0^{MD} + \left(\sum_{g=1}^G \omega_g I_{k'} \hat{\tilde{B}}_0^{MD} W_g\right) &= \sum_{g=1}^G \omega_g U^\top \bm{\varepsilon}_g \\
\left(\sum_{g=1}^G \omega_g (W_g \otimes I_{k'})\right) \hat{\tilde{\bm{\alpha}}}_0^{MD} + \left(\sum_{g=1}^G \omega_g (W_g W_g^\top \otimes I_{k'})\right) \vectorize(\hat{\tilde{B}}_0^{MD}) &= \vectorize\left(\sum_{g=1}^G \omega_g U^\top \bm{\varepsilon}_g W_g^\top\right).
\end{align*}
Let $\tilde{H}_{11}^\omega, \tilde{H}_{12}^\omega, \tilde{H}_{21}^\omega, \tilde{H}_{22}^\omega$ denote the blocks of the Hessian as defined in the previous proof. Let $C_1^{\varepsilon,\omega}$ and $C_2^{\varepsilon,\omega}$ denote the score terms on the RHS.

First, we establish the probability limits of the Hessian. Since $\E[N_{excl}] = o(\sqrt{G})$, we have $\frac{1}{G}\sum_{g=1}^G (1-\omega_g) \xrightarrow{p} 0$, so $\frac{1}{G}\sum \omega_g \xrightarrow{p} 1$.
It follows that $\frac{1}{G}\tilde{H}_{11}^\omega \xrightarrow{p} I_{k'} =: H_{11,plim}$.
Similarly, for the cross term $\tilde{H}_{21}^\omega$:
\begin{align*}
    \normF{\frac{1}{G}\sum_{g=1}^G \omega_g (W_g \otimes I_{k'}) - \frac{1}{G}\sum_{g=1}^G (W_g \otimes I_{k'})} &\le \sqrt{k'}C_W \left(\frac{1}{G}\sum_{g=1}^G (1-\omega_g)\right) \xrightarrow{p} 0.
\end{align*}
Thus, $\frac{1}{G}\tilde{H}_{21}^\omega \xrightarrow{p} \mu_W \otimes I_{k'}$. Likewise, $\frac{1}{G}\tilde{H}_{22}^\omega \xrightarrow{p} \Sigma_{WW} \otimes I_{k'}$.
The Schur complement $\tilde{S}^\omega$ converges in probability to $S_{plim} = (\Sigma_{WW} - \mu_W\mu_W^\top) \otimes I_{k'}$, which is positive definite by assumption.

Next, we examine the scaled score terms $\frac{1}{\sqrt{G}}C_1^{\varepsilon,\omega}$ and $\frac{1}{\sqrt{G}}C_2^{\varepsilon,\omega}$.
Define $\bm{\xi}_{1g} := \omega_g U^\top \bm{\varepsilon}_g$. This has mean zero since $\E[\bm{\varepsilon}_g | \omega_g=1] = \mathbf{0}$. The variance is:
\begin{align*}
    \Var\left[\frac{1}{\sqrt{G}}C_1^{\varepsilon,\omega}\right] &= \frac{1}{G}\sum_{g=1}^G \E[\omega_g U^\top \bm{\varepsilon}_g \bm{\varepsilon}_g^\top U] = \frac{1}{G}\sum_{g=1}^G \E\left[\omega_g U^\top \left(\frac{1}{n_g}\check{\Sigma}_{0,g}\right) U \right] \\
    &\le \frac{C_{\Sigma}}{G} \sum_{g=1}^G \frac{\mathbb{P}(\omega_g=1)}{n_g} \le C_{\Sigma} \left( \frac{1}{G} \sum_{g=1}^G \frac{1}{n_g} \right).
\end{align*}
By Lemma~\ref{lem:ng_growth_implication}, the condition $\sum_{g=1}^G e^{-cn_g} \ll \sqrt{G}$ implies that $\frac{1}{G}\sum_{g=1}^G \frac{1}{n_g} \to 0$. Thus, $\Var[\frac{1}{\sqrt{G}}C_1^{\varepsilon,\omega}] \to \mathbf{0}$.
By Chebyshev's inequality, $\frac{1}{\sqrt{G}}C_1^{\varepsilon,\omega} \xrightarrow{p} \mathbf{0}$.
A similar logic applies to $C_2^{\varepsilon,\omega}$, where the variance is bounded by a constant times $\frac{1}{G}\sum \frac{1}{n_g}$ due to the boundedness of $W_g$. Thus, $\frac{1}{\sqrt{G}}C_2^{\varepsilon,\omega} \xrightarrow{p} \mathbf{0}$.

Substituting these limits into the solution for $\hat{\tilde{B}}_0^{MD}$:
\begin{align*}
    \sqrt{G}\vectorize(\hat{\tilde{B}}_0^{MD}) &= \left(\frac{1}{G}\tilde{S}^\omega\right)^{-1} \left( \frac{1}{\sqrt{G}}C_2^{\varepsilon,\omega} - \left(\frac{1}{G}\tilde{H}_{21}^\omega\right) \left(\frac{1}{G}\tilde{H}_{11}^\omega\right)^{-1} \frac{1}{\sqrt{G}}C_1^{\varepsilon,\omega} \right) \\
    &\xrightarrow{p} S_{plim}^{-1} \times \mathbf{0} = \mathbf{0}.
\end{align*}
This implies $\sqrt{G}\hat{\tilde{B}}_0^{MD} \xrightarrow{p} \mathbf{0}$. Finally, by Assumption~\ref{as:identification_subspace}, $\sqrt{G}\hat{B}_0^{MD} \xrightarrow{p} \mathbf{0}$.
\end{proof}

\subsection{Technical lemmas}
\begin{lemma} \label{lem:ng_growth_implication}
Let $\{n_g\}_{g=1}^G$ be a sequence of positive integers such that $n_g \ge 1$ for all $g$. Let $c$ be a strictly positive constant ($c > 0$). If the condition
\begin{align*}
\lim_{G\to\infty} \frac{\sum_{g=1}^G e^{-cn_g}}{\sqrt{G}} = 0
\end{align*}
holds, then it implies that
\begin{align*}
\lim_{G\to\infty} \frac{1}{G}\sum_{g=1}^G \frac{1}{n_g} = 0.
\end{align*}
\end{lemma}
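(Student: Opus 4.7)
The natural approach is a truncation argument that splits the groups according to whether $n_g$ is ``small'' or ``large'' relative to a threshold $M = M(G)$ chosen at the end. For any $M \ge 1$, define $K_M := \#\{g \le G : n_g \le M\}$. Then decompose
\begin{equation*}
\frac{1}{G}\sum_{g=1}^G \frac{1}{n_g} \;=\; \frac{1}{G}\sum_{g : n_g \le M} \frac{1}{n_g} \;+\; \frac{1}{G}\sum_{g : n_g > M} \frac{1}{n_g} \;\le\; \frac{K_M}{G} \;+\; \frac{1}{M},
\end{equation*}
using only $1/n_g \le 1$ on the first sum (since $n_g \ge 1$) and $1/n_g < 1/M$ on the second.

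The next step is to translate the hypothesis on $s_G := \sum_{g=1}^G e^{-c n_g} = o(\sqrt{G})$ into control of $K_M$. Since each group contributing to $K_M$ satisfies $e^{-c n_g} \ge e^{-cM}$, we immediately get $K_M \, e^{-cM} \le s_G$, hence $K_M/G \le e^{cM}\, s_G/G$.

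The key step is to choose $M = M_G$ optimally. Writing $\epsilon_G := s_G/\sqrt{G} \to 0$, set
\begin{equation*}
M_G \;:=\; \frac{1}{2c}\log\!\bigl(1/\epsilon_G\bigr),
\end{equation*}
which is well-defined and tends to infinity (eventually $\epsilon_G < 1$). With this choice $e^{c M_G} = \epsilon_G^{-1/2}$, so
\begin{equation*}
\frac{K_{M_G}}{G} \;\le\; \frac{\epsilon_G^{-1/2}\, s_G}{G} \;=\; \frac{\sqrt{G}\,\epsilon_G \cdot \epsilon_G^{-1/2}}{G} \;=\; \frac{\sqrt{\epsilon_G}}{\sqrt{G}} \;\to\; 0,
\end{equation*}
while $1/M_G \to 0$ as well. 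Combining gives $G^{-1}\sum_g 1/n_g \to 0$, as desired.

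There is no genuine obstacle here: the argument is a standard ``trade off the two sides of a truncation'' calculation, and the only subtlety is picking $M_G$ so that both $K_{M_G}/G$ and $1/M_G$ vanish. The choice above works because the hypothesis $s_G = o(\sqrt{G})$ leaves enough room to take $e^{c M_G}$ growing like $\epsilon_G^{-1/2}$ while still killing the ratio $e^{c M_G} s_G / G$.
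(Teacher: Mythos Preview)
Your proof is correct and rests on the same truncation idea as the paper's: split groups by whether $n_g \le M$ and relate the count $K_M$ to the exponential sum via $K_M\, e^{-cM} \le s_G$. The execution differs: the paper argues by contradiction with a \emph{fixed} threshold $M_0 > 1/\delta$ (if the average of $1/n_g$ stays above $\delta$ along a subsequence, a fixed fraction $\eta$ of groups must have $n_g \le M_0$, forcing $s_G/\sqrt{G} \to \infty$), whereas you give a direct argument with a $G$-dependent threshold $M_G = \frac{1}{2c}\log(1/\epsilon_G)$ chosen to balance the two pieces. Your version is slightly more informative in that it yields an explicit rate, $\frac{1}{G}\sum_g 1/n_g \lesssim \sqrt{\epsilon_G}/\sqrt{G} + 2c/\log(1/\epsilon_G)$, while the paper's contradiction argument avoids optimizing the threshold but gives no quantitative bound. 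Both are short and complete; the underlying mechanism is identical.
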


\begin{proof}
Assume that the implication does not hold. This means that there exists a $\delta > 0$ and a subsequence $G_k \to \infty$ such that for all $k$:
\begin{align} \label{eq:lemma_contradiction_assumption}
\frac{1}{G_k}\sum_{g=1}^{G_k} \frac{1}{n_g} \ge \delta
\end{align}
Since $n_g \ge 1$, we have $1/n_g \le 1$. Thus, $0 < \delta \le 1$.
Let $M_0$ be a constant chosen such that $M_0 > 1/\delta$.

For each $G_k$ in the subsequence, partition the indices $\{1, \dots, G_k\}$ into two sets:
$S_k = \{g \in \{1, \dots, G_k\} \mid n_g \le M_0\}$
$L_k = \{g \in \{1, \dots, G_k\} \mid n_g > M_0\}$
Let $|S_k|$ denote the number of elements in $S_k$. So, $|L_k| = G_k - |S_k|$. From \eqref{eq:lemma_contradiction_assumption}, we have:
\begin{align*}
\delta &\le \frac{1}{G_k} \left( \sum_{g \in S_k} \frac{1}{n_g} + \sum_{g \in L_k} \frac{1}{n_g} \right) 
\le \frac{|S_k|}{G_k} \left(1 - \frac{1}{M_0}\right) + \frac{1}{M_0}
\end{align*}
Let $f_k = |S_k|/G_k$ be the fraction of groups for which $n_g \le M_0$. Rearranging the inequality:
\begin{align*}
f_k \left(\frac{M_0-1}{M_0}\right) &\ge \delta - \frac{1}{M_0} = \frac{\delta M_0 - 1}{M_0}
\end{align*}
Since we chose $M_0 > 1/\delta$, $\delta M_0 - 1 > 0$. Also, since $M_0 > 1$, $M_0-1 > 0$. Let $\eta = \frac{\delta M_0 - 1}{M_0 - 1} >0$. For all $k$, $f_k = |S_k|/G_k \ge \eta$, which means $|S_k| \ge \eta G_k$, and thus a fixed positive fraction $\eta$ of the groups have their $n_g$ bounded by $M_0$.

Now consider the sum $\sum_{g=1}^{G_k} e^{-cn_g}$:
\begin{align*}
\sum_{g=1}^{G_k} e^{-cn_g} \ge \sum_{g \in S_k} e^{-cn_g} \ge |S_k| e^{-cM_0} \ge \eta G_k e^{-cM_0} \Rightarrow \frac{\sum_{g=1}^{G_k} e^{-cn_g}}{\sqrt{G_k}} \ge \eta \sqrt{G_k} e^{-cM_0}
\end{align*}
As $k \to \infty$, $G_k \to \infty$, so $\sqrt{G_k} \to \infty$. 
Therefore,
\begin{align*}
\lim_{k\to\infty} \eta \sqrt{G_k} e^{-cM_0} = \infty
\end{align*}
This implies that $\liminf_{k\to\infty} \frac{\sum_{g=1}^{G_k} e^{-cn_g}}{\sqrt{G_k}} = \infty$ leading to a contradiction. 
\end{proof}

\clearpage

\section{Additional examples}
    \label{app:examples}
    
\subsection{Composition Effects}
\label{sec:aggregate_interpretation}

As foreshadowed by Remark~\ref{rem:did_heterogeneity}, aggregate parameters may reflect not only direct policy effects but also policy-induced shifts in the composition of heterogeneous micro-units. This section uses the child penalty example to elaborate on this compositional challenge, showing how it complicates causal inference.

Consider again the DiD setup for estimating the child penalty (Section \ref{subsec:motivating_example}), but now allow for explicit unit-level heterogeneity in the penalty itself. Suppose the outcome $Y_{g,i,t}$ follows a linear model:
\begin{align*}
 Y_{g,i,t} = \gamma_{g,i} + \delta_{g,t} + \tau_{g,i} E_{g,i} \mathbf{1}\{t=2\} + \epsilon_{g,i,t}, \quad \mathbb{E}_{F_g}[\epsilon_{g,i,t} | E_{g,i}] = 0,
\end{align*}
where $E_{g,i}$ indicates the first birth event between $t=1$ and $t=2$, and $\tau_{g,i}$ is the individual-specific child penalty for unit $i$ in group $g$. A common aggregate parameter of interest in group $g$ is the average penalty among those who experience the event (i.e., the Average Treatment Effect on the Treated, ATT):
\begin{align*}
 \tau_g = \mathbb{E}_{F_g}[\tau_{g,i} | E_{g,i}=1].
\end{align*}
This parameter represents the average child penalty for mothers in group $g$ (e.g., municipality) and is identified by population moments discussed in Section \ref{subsec:motivating_example}.

Consider a two-dimensional group-level policy $W_g = (W_{g,1}, W_{g,2})^\top$, with randomly assigned, yet potentially correlated, components. Suppose $W_{g,1}$ (e.g., local labor market conditions or social norms) influences selection into parenthood but not the individual child penalty $\tau_{g,i}$. Conversely, suppose $W_{g,2}$ (e.g., generosity of family policies like paid leave) directly affects $\tau_{g,i}$ but not selection into parenthood.

Under these conditions, the potential outcome for the aggregate ATT parameter $\tau_g(w)$ for a given policy vector $w=(w_1, w_2)$ is derived from its definition:
\begin{align*}
 \tau_g(w) = \mathbb{E}_{F_g(w)}[\tau_{g,i}(w_2) | E_{g,i}(w_1)=1].
\end{align*}
This expectation averages the micro-level penalties $\tau_{g,i}(w_2)$ over the subpopulation of parents, defined by the condition $E_{g,i}(w_1)=1$. Crucially, because $W_{g,1}$ affects selection into motherhood via its influence on $F_g(w)$, the aggregate parameter $\tau_g(w)$ generally depends on $W_{g,1}$ solely through this compositional channel, even though the underlying micro-level penalties $\tau_{g,i}$ only respond to $W_{g,2}$. This implies that the structural relationship for the aggregate parameter is more appropriately modeled as $\tau_g = \alpha + \beta_1 W_{g,1} + \beta_2 W_{g,2}$, where $\beta_1$ captures the composition effect and $\beta_2$ captures the direct effect of family policy generosity on the average penalty. 

This dependence structure has critical implications for estimation strategies. First, consider approaches that misspecify the model by omitting $W_{g,1}$. A researcher, correctly believing that the individual penalty $\tau_{g,i}$ only depends causally on the family policy $W_{g,2}$, might estimate a single equation using OLS:
\begin{align*}
\min_{\{\gamma_{g,i}\}, \{\delta_{g,t}\}, \alpha, \beta_{2}} \sum_{g,i,t} \left( Y_{g,i,t} - \gamma_{g,i} + \delta_{g,t} - (\alpha + \beta_{2} W_{g,2}) E_{g,i} \mathbf{1}\{t=2\} \right)^2.
\end{align*}
This approach generically produces an inconsistent estimate $\hat{\beta}_{2}$, suffering from standard omitted variable bias (since $W_{g,1}$ influences $\tau_g$ through composition and is likely correlated with $W_{g,2}$). Similarly, implementing a two-stage MD approach by first estimating $\hat{\tau}_g$ within each group and then running the second-stage regression,
\begin{align*}
 \min_{\alpha, \beta_2} \sum_{g} (\hat{\tau}_g - \alpha - \beta_2 W_{g,2})^2,
\end{align*}
also yields an inconsistent estimate $\hat{\beta}_2$ due to the same omitted variable issue. This occurs despite the fact that the individual-level penalty $\tau_{g,i}$ causally depends only on $W_{g,2}$.

Now consider approaches that correctly specify the model for the aggregate $\tau_g$ by including both policy components. One might be tempted to estimate the appropriately specified equation using OLS:
\[
\min_{\{\gamma_{g,i}\}, \{\delta_{g,t}\}, \alpha, \beta_{1}, \beta_{2}} \sum_{g,i,t} \left( Y_{g,i,t} -  \gamma_{g,i} - \delta_{g,t} - (\alpha + \beta_{1} W_{g,1} + \beta_{2} W_{g,2}) E_{g,i} \mathbf{1}\{t=2\} \right)^2.
\]
However, as we formally demonstrate in Section \ref{sec:identification}, this one-step GMM/OLS approach still generally yields inconsistent estimates $\hat{\beta}_1$ and $\hat{\beta}_2$. The reason is the endogenous weighting bias: the implicit GMM weights depend on $W_{g,1}$ (via its effect on the distribution of $E_{g,i}$, which determines the sample moments used in the equivalent GMM formulation), violating the conditions required for consistency.

Finally, consider the two-stage MD approach using the correctly specified second stage: first estimate $\hat{\tau}_g$ (the average penalty for mothers in group $g$) in the first stage, then run the second-stage regression:
\[
\min_{\alpha, \beta_1, \beta_2}\sum_{g}\left(\hat{\tau}_g - \alpha - \beta_1 W_{g,1} - \beta_2 W_{g,2}\right)^2.
\]
Under certain regularity conditions (including sufficiently large $n_g$, as discussed in Section \ref{sec:md_estimator}), this approach yields consistent estimates $\hat{\beta}_1$ and $\hat{\beta}_2$. The first stage consistently estimates the parameter $\tau_g(W_{g,1}, W_{g,2})$ as defined by the population moments. The second stage then correctly relates this parameter to the exogenously assigned policy variables $W_{g,1}$ and $W_{g,2}$. This avoids both the omitted variable bias (present in models omitting $W_{g,1}$) and the endogenous weighting bias (present in the OLS/GMM).

\subsection{IV analysis} \label{sec:extensions}

This section discusses an extension of our analysis to models where a policy variable of interest is endogenous, necessitating an instrumental variable (IV) approach. To illustrate concretely the endogenous weighting issues analogous to those discussed for GMM estimators in Section~\ref{sec:identification}, we focus our discussion on the estimation of how local policies affect returns to schooling. 

We start by augmenting the framework from Section~\ref{sec:framework}. For each group $g$ (e.g., a local labor market), assume the data-generating process for individual outcomes, $F_g$, can depend on the group-level instrument $Z_g$, i.e., $F_g = F_g(Z_g)$. A local policy of interest, $W_g$, is a group-specific function of $Z_g$, i.e., $W_g = W_g(Z_g)$. We assume that the triplet $(Z_g, W_g(\cdot), F_g(\cdot))$, is i.i.d. across groups and $Z_g$ is independent of $W_g(\cdot)$ and $F_g(\cdot)$.

Consider individuals $i$ in market $g$. We observe log wages $Y_{g,i}$ and years of schooling $S_{g,i}$. The local returns to schooling, $\tau_g$, are determined by a Mincer-style equation:
\begin{equation}\label{eq:iv_mincer_model}
Y_{g,i} = \delta_g + \tau_g S_{g,i} + u_{g,i}, \quad \E_{F_g}[u_{g,i} | Q_{g,i}] = 0,
\end{equation}
where potentially endogenous schooling $S_{g,i}$ (e.g., due to ability bias) is instrumented via $Q_{g,i}$ (e.g., quarter of birth). The data for group $g$ are $\{(Y_{g,i}, S_{g,i}, Q_{g,i})\}_{i=1}^{n_g}$, drawn i.i.d. from $F_g$. For simplicity, we assume $n_g=n$. Model \eqref{eq:iv_mincer_model} fits within our general setup, with $\bm{\theta}_g := (\delta_g, \tau_g)^\top$.

The local returns to schooling $\tau_g$ are modeled as a function of an endogenous local policy $W_g$:
\begin{equation*}
\tau_g =  \alpha_g + \beta W_g,
\end{equation*}
where $\beta$ is the structural effect of policy $W_g$ on returns, and $\alpha_g$ captures unobserved market-level heterogeneity in returns, potentially correlated with $W_g$. We employ a group-level instrument $Z_g$ for the policy $W_g$, which by assumption satisfies
\begin{align*}
    \E\left[(Z_g - \mu_Z)\begin{pmatrix} 1 \\ \alpha_g \end{pmatrix}\right] = \bm{0}_2,
\end{align*}
where $\mu_Z := \E[Z_g]$. A key premise is that the policy instrument $Z_g$ (via $W_g$) may affect not only $\tau_g$ but also the distribution of schooling $S_{g,i}$ or its relationship with the instrument $Q_{g,i}$ within market $g$. In contrast, we assume the conditional distribution of $Q_{g,i}$ (given group $g$ and individual characteristics other than $S_{g,i}$) does not directly depend on $Z_g$.

A common empirical strategy is then to estimate the equation:
\begin{equation*}
Y_{g,i} = \delta_g + (\theta_0 + \beta W_g) S_{g,i} + \epsilon_{g,i},
\end{equation*}
via TSLS.  The terms $S_{g,i}$ and $W_g S_{g,i}$ are treated as endogenous and instrumented using $Q_{g,i}$ and $Z_g Q_{g,i}$, respectively. Let $\beta^{TSLS}$ denote the probability limit of this TSLS estimator for $\beta$ as $G \rightarrow \infty$ and $n \rightarrow \infty$. One can show that it has the following representation:
\begin{equation*}
    \beta^{TSLS} = \beta + \frac{\mathbb{C}\text{ov}^{C_g(Z_g)}[\alpha_g, Z_g]}{\mathbb{C}\text{ov}^{C_g(Z_g)}[W_g, Z_g]},
\end{equation*}
where the weighted covariance relies on weights
\begin{align*}
    C_g := \mathbb{C}\text{ov}_{F_g}[S_{g,i}, Q_{g,i}],
\end{align*}
which generically depend on $Z_g$, and thus we write $C_g = C_g(Z_g)$.

The TSLS estimator $\beta^{TSLS}$ is thus generically inconsistent for $\beta$. The bias arises if the $C_g$-weighted covariance between $\alpha_g$ (unobserved heterogeneity in returns) and $Z_g$ is non-zero. This can occur even though the unweighted covariance, $\mathbb{C}\text{ov}[\alpha_g, Z_g]$, is zero by assumption, provided $C_g$ is itself a function of $Z_g$. Such dependence of $C_g$ on $Z_g$ emerges if the policy instrument $Z_g$ (via $W_g$) affects the joint distribution of $S_{g,i}$ and $Q_{g,i}$ (e.g., by altering schooling levels or the composition of individuals for whom $Q_{g,i}$ strongly predicts $S_{g,i}$). This introduces an endogenous weighting bias, analogous to the one discussed in Section~\ref{sec:identification}.

These findings may seem at odds with design-based identification results for shift-share IV estimators \citep{adao2019shift, borusyak2022quasi} or, more generally, for "formula IV" estimators \citep{Borusyak2023, Borusyak2024a}, which show consistency of appropriately recentered IV estimators.  The apparent contradiction is resolved by recognizing that our probability model allows the distribution of $S_{g,i}$ to vary with $Z_g$, whereas conventional design-based analyses often treat such characteristics or their distributions as fixed with respect to instrument assignment. If $S_{g,i}$'s distribution indeed covaries with $Z_g$, comprehensive recentering of $Z_g$ requires conditioning on $S_{g,i}$. Standard correction by $\mu_Z$ is insufficient to eliminate the bias, as our analysis demonstrates.

\begin{remark}
    The discussion above focuses on GMM estimation. The MD estimation is standard in this case with one caveat: moments induced by \eqref{eq:iv_mincer_model} do not satisfy the restrictions in Section \ref{sec:md_estimator} and thus unbiased estimation of $\tau_g$ is generally impossible with finite $n_g$. This puts additional restrictions on the magnitude of $n_g$ and may require using bias correction methods, analogous to those discussed in Remark \ref{rem:uncondition_moments}.
\end{remark}

\subsection{Additional examples}

\subsubsection{Job Displacement, Average Wage Premia, Policy Interactions, and Sorting}\label{sec:akm_example}

Our framework can be used to analyze the effects of group-level policies on local labor market structures, drawing inspiration from models in the style of \citet{Abowd1999}. We consider a setting where policies implemented at the level of a local labor market (LLM), indexed by $g$, might influence not only average wage parameters but also patterns of worker sorting.

Consider log-wages $Y_{g,i,t}$ for worker $i$ in LLM $g$ at time $t \in \{1, 2\}$. Workers are employed by firms classified into $J=2$ types (e.g., small and large), denoted by $j(g,i,t) \in \{1, 2\}$. A standard specification decomposes wages as:
\begin{align*}
    Y_{g,i,t} = \gamma_{g,i} + \psi_{g,j(g,i,t)} + \epsilon_{g,i,t}, \quad \E_{F_{g}}[ \epsilon_{g,i,t}| \{j(g,i,t')\}_{t'=1,2}] = 0.
\end{align*}
where $\gamma_{g,i}$ is a worker-specific effect, and $\psi_{g,j}$ represents the wage premium associated with firm type $j$ in market $g$. For identification purposes we normalize $\psi_{g,1}=0$ for each $g$.

Consider a researcher interested in how a job displacement event alters the firm premia effectively experienced by workers, and whether this alteration varies with policy $W_g$ . Let $T_{g,i}$ be an indicator for worker $i$ in group $g$ being affected by a job displacement occurring between periods $t=1$ and $t=2$. The research question centers on how this layoff event, interacts with the firm premium structure in the post-layoff period ($t=2$), and how policy $W_g$ changes this interaction. We can formalize this analysis using the following DiD-type specification:
\begin{align*}
\psi_{g,j(i,t)} =\mu_{g,i} + \delta_{g,t} + \tau_g T_{g,i}\mathbf{1}\{t=2\} + \nu_{g,i,t}, \quad \mathbb{E}_{F_g}[\nu_{g,i,t}|T_{g,i}] = 0.
\end{align*}
Using the same notation as in the previous section we define $\bm{\theta}_g := (\tilde\delta_g, \tau_g)^\top$. The goal is to relate $\bm{\theta}_g$, in particular the effect of the job displacement event $\tau_g$, to a policy of interest $W_g$. For instance, in \cite{Daruich2023}, the authors are interested in how workers' firm wage premia after a job-displacement event ($\tau_g$) depends on lifting constraints on the employment of temporary contract workers.\footnote{In other studies, the authors relate the same parameter to unemployment rates \citep{Schmieder2023} or variation in labor-market structures across countries \citep{Bertheau2023}}

In practice, such an investigation often proceeds in two stages. First, the firm-type premium $\hat{\psi}_{g,2}$ is estimated for each group $g$ from individuals moving between firm types. Subsequently, these group-specific premium estimates inform an individual-level outcome, $\hat{\psi}_{g,j(i,t)}$. This constructed outcome is then used in a panel regression to estimate the policy effect. Specifically, parameters $(\alpha, \beta)$ along with nuisance parameters $(\{\gamma_{g,i}\}, \{\delta_{g,t}\})$ are estimated via OLS:
\begin{align*}
(\hat{\alpha}, \hat{\beta}, \{\hat{\gamma}_{g,i}\}, \{\hat{\delta}_{g,t}\}) = \underset{\alpha, \beta, \{\mu_{g,i}\}, \{\delta_{g,t} \}}{\operatorname{argmin}} \sum_{g,i,t} (\hat{\psi}_{g,j(i,t)} - \mu_{g,i} - \delta_{g,t}  - (\alpha + \beta W_g)T_{g,i}\mathbf{1}\{t=2\})^2.
\end{align*}
The coefficient on the interaction term $T_{g,i}\mathbf{1}\{t=2\}$ effectively models the group-specific layoff impact as $\tau_g = \alpha + \beta W_g$, thereby capturing the policy-dependent differential effect. 

This empirical strategy fits within our moment-based framework. The first moment condition, identifying $\psi_{g,2}$ from the wage changes of movers ($\Delta Y_{g,i}$), is standard:
\begin{align*}
\mathbb{E}_{F_g}[(\Delta Y_{g,i} - \psi_{g,2})\mathbf{1}\{h(i)=(1,2)\} + (\Delta Y_{g,i} + \psi_{g,2})\mathbf{1}\{h(i)=(2,1)\}] = 0.
\end{align*}
The second set of moment conditions identifies the change in the group-time effect, $\Delta \delta_g := \delta_{g,2}-\delta_{g,1}$, and the parameter $\tau_g$. Differencing the population relationship for $\psi_{g,j(i,t)}$ with respect to time implies that the experienced change in an individual's firm premium, $\Delta\psi^{\text{exp}}_{g,i} := \psi_{g,j(i,2)} - \psi_{g,j(i,1)}$, satisfies the following moment restriction:
\begin{align*}
\mathbb{E}_{F_g}\left[\begin{pmatrix}
    1 \\ T_{g,i}
\end{pmatrix}(\Delta\psi^{\text{exp}}_{g,i} - \tilde\delta_g - \tau_g T_{g,i})\right] = \mathbf{0}_2.
\end{align*}

This two-stage procedure---initial estimation of $\psi_{g,2}$ followed by its use in a panel OLS regression---can be represented as a specific one-step GMM estimator \eqref{eq:gmm_est}. The corresponding GMM weighting matrix $A_g$ reflects this sequential estimation logic, featuring a block-diagonal structure. The block corresponding to the $(\tilde\delta_g, \tau_g)$ moments would be based on the sample covariance of the regressors.

We can further extend this example to discuss sorting and heterogeneity. Analyzing the full impact of the policy often requires looking beyond its effect on average premia $\psi_{g,k}$. For instance, does policy $W_{g,k}$ targeted at type $k$ firms disproportionately attract high-ability workers ($\gamma_{g,i}$) to that type, or does it alter the dispersion of worker abilities within type $k$?

Simplifying to two firm types ($j=1, 2$), define a worker's mobility history
\begin{align*}
    h(i) := (j(g,i,1), j(g,i,2)),
\end{align*}
with four possible histories $h \in H = \{(1,1), (2,2), (1,2), (2,1)\}$. Using previous moment conditions we can identify the mean worker effect conditional on history, $\mu_{g,\gamma,h} = \E_{F_g}[\gamma_{g,i} | h(i)=h]$, which reveals patterns of worker sorting across different mobility paths. To identify the dispersion patterns  we need stronger restrictions:
\begin{align}
\label{eq:akm_sorting_assumptions}
    \E_{F_{g}}[ \epsilon_{g,i,t}|\gamma_{g,i},h(i)] = 0, \quad 
    \E_{F_{g}}[\epsilon_{g,i,1}\epsilon_{g,i,2}|h(i)] = 0.
\end{align}
The first condition strengthens the conditional moment restriction necessary for identifying the wage premium by conditioning on the worker effect $\gamma_{g,i}$, while the second assumes idiosyncratic errors are serially uncorrelated conditional on the mobility path. Armed with these restrictions we can identify the variance of worker effects conditional on history, $\sigma^2_{g,\gamma,h} = \Var_{F_g}[\gamma_{g,i} | h(i)=h]$, which captures the degree of heterogeneity within specific worker-path groups. Estimating the causal effect of a policy $W_g$ on this expanded set of parameters allows for a more comprehensive analysis of the policy's impact on market structure. As detailed in Remark \ref{rem:akm_linear_moments} below, this full vector $\bm{\theta}_g$ (including $\psi_{g,2}$, the four means $\{\mu_{g,\gamma,h}\}_h$, and the four variances $\{\sigma^2_{g,\gamma,h}\}_h$) is identified via a system of linear moment conditions. However, the reliance on worker mobility for identification persists, and the potential for the policy $W_g$ to influence these mobility patterns remains a central challenge for estimation via either GMM or MD methods.

The extended analysis under assumptions \eqref{eq:akm_sorting_assumptions} fits within our general moment-based framework $\mathbb{E}_{F_g}[h(D_{g,i}, \bm{\theta}_g)] = \mathbf{0}_k$. This example highlights how identifying richer features of the economic structure often necessitates stronger assumptions on the underlying microdata process. Furthermore, it illustrates how the policy of interest $W_g$ might affect the very variation (worker mobility) needed for identification, motivating the subsequent detailed analysis of estimation procedures in Sections \ref{sec:identification} and \ref{sec:md_estimator}.

\begin{remark}[Linear Moment Conditions for Full Vector]
\label{rem:akm_linear_moments} 
The full 9-dimensional parameter vector $\bm{\theta}_g = (\psi_{g,2}, \{\mu_{g,\gamma,h}\}_h, \{\sigma^2_{g,\gamma,h}\}_h)$ for the $J=2$ case is identified via a system of 9 linear moment conditions consistent with the structure $\E_{F_g}[h_1(D_{g,i}) - h_2(D_{g,i})\bm{\theta}_g] = \mathbf{0}$. These arise from: (1) wage changes of movers $1 \to 2$ (identifying $\psi_{g,2}$); (2) average period-1 wages conditional on history $h$ (identifying the four $\mu_{g,\gamma,h}$, given $\psi_{g,2}$); and (3) conditional covariance information $\sigma^2_{g,\gamma,h} = \E_{F_{g}}[(Y_{g,i,1} - \E[Y_{g,i,1}|h(i) = h]) Y_{g,i,2} | h(i)= h]$ for each history $h$ (identifying the four $\sigma^2_{g,\gamma,h}$, linearly given the mean parameters).
\end{remark}

\subsubsection{Supply-Side Effects in Financial Markets} \label{sec:example_banking_deregulation}

Our framework also informs analyses of policies reshaping local financial markets, such as U.S. interstate banking deregulation (e.g., \citealp{jayaratne1996finance}). This state-level policy, intended to alter market competition, allows studying effects on local credit supply (e.g., bank interest rates). A key challenge is that observed market outcomes (loan rates, quantities) are equilibrium objects reflecting both supply and demand. Controlling for firm-specific demand shocks is crucial to isolate supply-side policy effects, and our framework allows for this. 

Specifically, consider banks in state $g$ classified into three types relevant to deregulation: type 1 (e.g., small, incumbent in-state banks, often a benchmark), type 2 (e.g., large, incumbent in-state banks), and type 3 (e.g., newly entering or expanding out-of-state banks, whose presence is directly influenced by $W_g$). Let $Y_{g,i,b}$ be the outcome (e.g., interest rate) for firm $i$ from bank type $b$ in state $g$. Following seminal work by  \cite{khwaja2008tracing} we model these outcomes as:
\begin{equation*}
    Y_{g,i,b} = \gamma_{g,i} + \psi_{g,b} + \epsilon_{g,i,b}, \quad \mathbb{E}_{F_g}[\epsilon_{g,i,b}|K_{g,i}] = 0,
\end{equation*}
where $\gamma_{g,i}$ is a firm-specific effect capturing its idiosyncratic credit demand and overall creditworthiness (constant across bank types for firm $i$), and $\psi_{g,b}$ represents the average supply-side terms offered by bank type $b$ in state $g$. $K_{g,i}$ denotes the set of bank types with which firm $i$ interacts. Normalizing $\psi_{g,1} \equiv 0$, the parameters $\psi_{g,2}$ and $\psi_{g,3}$ represent the supply conditions of Type 2 and Type 3 banks relative to Type 1 banks. The state-level supply-side outcome vector is $\bm{\theta}_g \equiv (\psi_{g,2}, \psi_{g,3})'$. These parameters are identified from firms interacting with multiple bank types, as $\gamma_{g,i}$ is differenced out.\footnote{We assume that in each state there are firms that interact with multiple banks, in particular types 1 and 2, and types 1 and 3. This assumption is analogous to the existence of movers in standard AKM-type models.}

Suppose the goal is to quantify a causal effect of deregulation $W_g$ on the relative bank-type effects, postulating, for example, a linear relationship 
\begin{equation*}
    \psi_{g,b} = \alpha_{b} + \lambda_g + \beta W_{g,b},
\end{equation*}
where $\alpha_b$ are bank-type fixed effects, $\lambda_g$ are state-level fixed effects, and $W_{g,b}$  measures the exposure of bank type $b$ to the state-level policy $W_g$ (e.g., $W_g \mathbf{1}\{b=3\}$). A common empirical strategy then involves substituting this model for $\psi_{g,b}$ directly into the outcome equation for $Y_{g,i,b}$.\footnote{Historically, this strategy has not been used to asses the effect of the deregulation in the US due to the lack of access to credit registry data, but similar approaches have been used to investigate the effects of macro-prudential policies in Europe, e.g., \cite{jimenez2017macroprudential}.} The parameters, including $\beta$, are estimated by solving the following minimization problem:
\begin{equation*}
    \min_{\{\gamma_{g,i}\}, \{\alpha_b\}, \{\lambda_g\}, \beta} \sum_{g,i,b}(Y_{g,i,b} - \gamma_{g,i} - \mathbf{1} \{b \ne 1\}(\alpha_b + \lambda_g + \beta W_{g,b}))^2.
\end{equation*}
 This least squares estimator is an instance of a one-step GMM estimator \eqref{eq:gmm_est}.\footnote{Our abstract model relates $\bm{\theta}_g$ to $W_g$ via $\bm{\theta}_g = \bm{\alpha} + \Gamma \lambda_g + B W_g$ with $\Gamma=(1,1)^\top$ and $B=\text{diag}(\beta, \beta)$. Although $\Gamma^\top B \neq 0$ if $\beta \neq 0$, we can accommodate it by defining $B = \beta P_{\Gamma^\perp}$, where $P_{\Gamma^\perp} = \frac{1}{2}\begin{psmallmatrix} 1 & -1 \\ -1 & 1 \end{psmallmatrix}$ projects orthogonal to $\Gamma$. Effectively, $\beta$ is identified from the relationship between the policy difference ($W_{g,3}-W_{g,2}$) and the premium difference ($\psi_{g,3}-\psi_{g,2}$).}

\subsubsection{TFP and Trade Policy}
\label{sec:tfp_trade_example}

Our framework extends beyond linear regression to parameters identified within a structural model using general linear moment restrictions. Consider the estimation of firm productivity in industry $g$. Let $Y_{g,i,t}$ denote log output, and let $L_{g,i,t}$, $K_{g,i,t}$, and $M_{g,i,t}$ denote log inputs (labor, capital, and materials) for firm $i$. A standard Cobb-Douglas production function takes the form:
\begin{align*}
Y_{g,i,t} = \theta_{g}^l L_{g,i,t} + \theta_{g}^k K_{g,i,t} + \theta_{g}^m M_{g,i,t} + \omega_{g,i,t} + \eta_{g,i,t},
\end{align*}
where $\omega_{g,i,t}$ is unobserved firm-specific productivity and $\eta_{g,i,t}$ is an idiosyncratic error. Because firms choose variable inputs (labor and materials) based on their productivity $\omega_{g,i,t}$, standard OLS estimates of the elasticities are inconsistent. Identification requires solving a system of GMM moment conditions using lagged inputs as instruments (e.g., \cite{ackerberg2015identification}, see discussion below for details). Once the elasticities are identified, one can calculate Total Factor Productivity ($TFP_{g,i,t} := \omega_{g,i,t} + \eta_{g,i,t}$) as the residual.

Extensive literature suggests that access to imported intermediate inputs can boost firm productivity through learning effects, higher quality, and greater variety. For example, \cite{amiti2007trade} argue that reducing tariffs on intermediate inputs can have a significantly larger impact on productivity than reducing tariffs on final goods. In their setting (Indonesian manufacturing), the key mechanism is that cheaper imported inputs enable firms to access foreign technology embodied in those inputs. To be specific, let $\Delta FM_{g,i,t}$ indicate a change in the firm's import status and consider the following relationship:
\begin{align*}
\Delta TFP_{g,i,t} =  \delta_{g,t} + \tau_{g,t} \Delta FM_{g,i,t} + \epsilon_{g,i,t}.
\end{align*}
Here, $\tau_{g,t}$ represents the effect of entering global supply chains on productivity growth in industry $g$ at time $t$.

The causal link between importing and productivity ($\tau_{g,t}$) is a central object of interest. Specifically, researchers want to know if trade liberalization ($W_{g,t}$) effectively raises this ``import premium,'' thereby driving aggregate productivity growth. If $\tau_{g,t}$ were directly observed, the oracle analysis would estimate the effect of tariffs using a standard two-way fixed effects regression:
\begin{align*}
    \tau_{g,t} = \lambda_g + \alpha_t + \beta W_{g,t} + \nu_{g,t}.
\end{align*}
Assuming the standard parallel trends assumption holds, $\beta$ captures the causal effect of trade policy on the productivity gains from importing. 

Just as researchers stack DiD regressions, standard practice here often collapses the policy evaluation into a single step. For instance, one might regress estimated TFP growth on policy interactions without accounting for the first-stage estimation structure:
\begin{align*}
\Delta TFP_{g,i,t} =  \delta_{g,t} + (\lambda_g + \alpha_t + \beta W_{g,t}) \Delta FM_{g,i,t} + \varepsilon_{g,i,t},
\end{align*}
or an even more restrictive version of the same regression. The implicit weights in this regression depend on the variation of the micro-level treatment (here, the change in import status $\Delta FM_{g,i,t}$). If the trade policy $W_{g,t}$ affects the likelihood of importing—as is typically the case—it corrupts these weights. This correlation between the policy and the effective weights renders the one-step estimator inconsistent. A natural alternative is to follow the identification logic outlined above and directly relate the estimated $\hat{\tau}_{g,t}$ to policy variation.

We now provide additional details about the theoretical details. To simplify exposition, we focus on a single dynamic input, $L_{g,i,t}$ (the logic extends to multiple inputs with appropriate timing assumptions). The econometric specification for the production function is:
\begin{align*}
    Y_{g,i,t} &= \theta_{g}^l L_{g,i,t}+\omega_{g,i,t} + \eta_{g,i,t}, \\
    \omega_{g,i,t} &= \theta^{0}_{g} + \theta^{\omega}_g \omega_{g,i,t-1}  + \xi_{g,i,t}.
\end{align*}
The unobserved productivity component $\omega_{g,i,t}$ follows an AR(1) process. Key identifying assumptions for the $t \in \{1,2,3\}$ panel, particularly for estimation using data from $t=3$, include:
\begin{itemize}
    \item[(a)] The innovation to productivity $\xi_{g,i,3}$ is unpredictable by past inputs and output:
    \begin{align*}
        \mathbb{E}_{F_g}[\xi_{g,i,3}|L_{g,i,2},L_{g,i,1}, Y_{g,i,1}] = 0.
    \end{align*}
    \item[(b)] The idiosyncratic error $\eta_{g,i,t}$ (e.g., measurement error or unanticipated shock) is contemporaneously uncorrelated with inputs and unpredictable by past variables: 
    \begin{align*}
        \mathbb{E}_{F_g}[\eta_{g,i,3}|L_{g,i,2},L_{g,i,1}, Y_{g,i,1}] = 0, \quad  \mathbb{E}_{F_g}[\eta_{g,i,2}|L_{g,i,2}, L_{g,i,1}, Y_{g,i,1}] = 0.
    \end{align*}
\end{itemize}
Substituting $\omega_{g,i,t}$ from the AR(1) process into the production function and then expressing $\omega_{g,i,t-1}$ in terms of observables (i.e., $\omega_{g,i,2} = Y_{g,i,2} - \theta_g^l L_{g,i,2} - \eta_{g,i,2}$), we obtain an equation for $Y_{g,i,3}$:
\begin{equation*}
    Y_{g,i,3} = \theta^{0}_g(1-\theta^{\omega}_g) + \theta^{\omega}_g Y_{g,i,2} +  \theta_{g}^l L_{g,i,3} - \theta_{g}^l \theta^{\omega}_g L_{g,i,2} + \nu_{g,i},
\end{equation*}
where the composite error term is $\nu_{g,i} := \xi_{g,i,3} + \eta_{g,i,3} - \theta_{g}^{\omega} \eta_{g,i,2}$. Under assumptions (a) and (b), it follows that $\mathbb{E}_{F_g}[\nu_{g,i}|L_{g,i,2},L_{g,i,1},Y_{g,i,1}] =0$. 
Letting $\tilde{\theta}^{0}_g = \theta^{0}_g(1-\theta^{\omega}_g)$ and $\tilde{\theta}_{g}^l := -\theta_{g}^l \theta^{\omega}_g$, the equation becomes:
\begin{equation*}
    Y_{g,i,3} = \tilde{\theta}^{0}_g + \theta^{\omega}_g Y_{g,i,2} +  \theta_{g}^l L_{g,i,3} + \tilde{\theta}_{g}^l L_{g,i,2} + \nu_{g,i}, \quad \mathbb{E}_{F_g}[\nu_{g,i}|L_{g,i,2},L_{g,i,1},Y_{g,i,1}] =0.
\end{equation*}
Parameters $(\tilde{\theta}^{0}_g, \theta^{\omega}_g, \theta_{g}^l, \tilde{\theta}_{g}^l)$ are then identified using linear moment conditions with $L_{g,i,2}$, $L_{g,i,1}$, $Y_{g,i,1}$ (and a constant) as instruments. 

This identification of $(\theta_g^l, \theta_g^\omega, \theta_g^0)$ implies that TFP, defined as
\begin{align*}
    TFP_{g,i,t} =  Y_{g,i,t} - \theta_{g}^l L_{g,i,t},
\end{align*}
is identified. Key model-based parameters, $\delta_{g,t}$ and $\tau_{g,t}$, related to TFP growth, are then identified using moment conditions arising from the linear regression:
\begin{align*}
\Delta TFP_{g,i,t} =  \delta_{g,t} + \tau_{g,t} \Delta FM_{g,i,t} + \epsilon_{g,i,t}, \quad \mathbb{E}_{F_{g,t}}[\epsilon_{g,i,t} | \Delta FM_{g,i,t}] =0.
\end{align*}
The structural parameters of the production function ($\theta_g^l, \theta_g^\omega, \theta_g^0$) are separately identified from the coefficients in the GMM regression (e.g., $\theta_g^l$ is directly estimated, and $\theta_g^\omega$ allows recovery of $\theta_g^0$ from $\tilde{\theta}^{0}_g$).

\begin{remark}[Markups as Model-Based Outcomes]
The input elasticities estimated for TFP analysis are also central to firm markup estimation. Following insights from \cite{loecker2012markups}, we can link markups to a variable input's output elasticity and its observed expenditure share. This relationship defines average group-level markups as solutions to linear population moment conditions, thereby allowing researchers to use our framework to investigate the effects of policy on market competition.
\end{remark}

\clearpage

\section{Figures and tables}
    \label{app:figAndTableAppendix}
    \begin{figure}[ht]
    \centering
    \caption{Distribution of number of people with $x$ at $g$: $n_g(x)$}
    \label{fig:hist_ngx}
        \centering
        \includegraphics[width=0.95\textwidth]{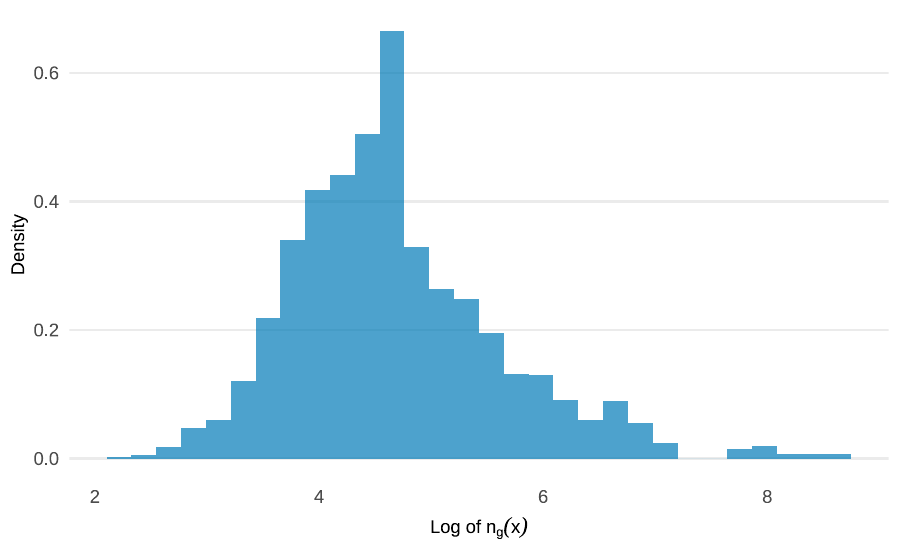}
    
    \caption*{
        \footnotesize \textit{Notes:}
        This figure presents the histogram of population with covariates $x$ (gender, birth year) who gave birth at municipality $g$. The x-axis shows the log of the population with $x$ and $g$ and y-axis shows its density. Due to the CBS confidentiality guidelines, $n_g(x) < 10$ are not included in the figure. The trimmed cells in the figure is 39 out of 4068 cells (0.96\%).
    }
    
\end{figure}

\begin{figure}[ht]
    \centering
    \caption{Child penalties by age at first childbirth}
    \label{fig:agg_CP}
    \begin{subfigure}[b]{0.9\textwidth}
        \centering 
        \caption{Earnings}
        \label{fig:agg_earn}
            \includegraphics[width=.95\linewidth, keepaspectratio=true]{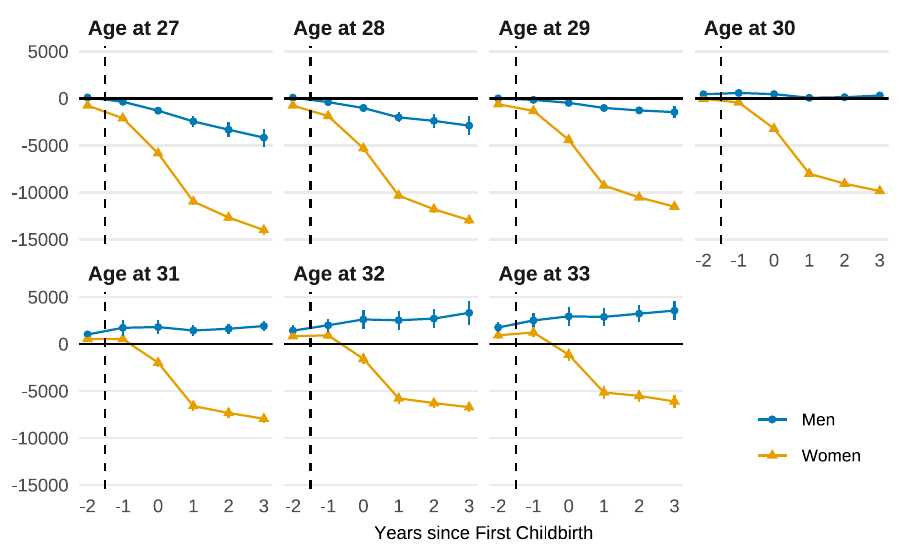} 
    \end{subfigure}

    \begin{subfigure}[b]{0.9\textwidth}
        \centering
        \caption{Participation}
        \label{fig:agg_particip}
            \includegraphics[width=.95\linewidth, keepaspectratio=true]{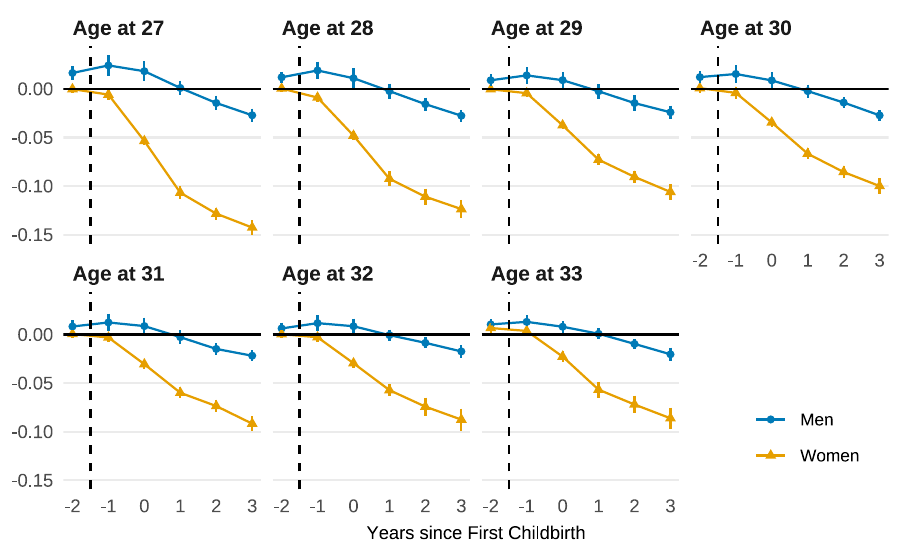} 
    \end{subfigure}
    \caption*{
        \footnotesize \textit{Notes:} This figure presents CP estimates for yearly earnings, aggregated by birth cohort, gender, municipality of residence at time of pregnancy, and the age of first childbirth, as described in Section \ref{sec:measurement}. Figure \ref{fig:agg_earn} reports the average CP for total yearly earnings. Similarly, Figure \ref{fig:agg_particip} reports the CP for the labor-market participation margin.
    }
    
\end{figure}

\begin{table}[ht]
    \centering
    \caption{Rich MD specification -- the childcare provision expansion on CP (fathers)}
    \label{tab:reg_flex_m}
    \begin{minipage}{.7\linewidth}
        \subcaption{Earnings}
        \label{tab:reg_flex_earn_m}
        \resizebox{\columnwidth}{!}{%
            \resizebox{\ifdim\width>\linewidth 1\linewidth\else\width\fi}{!}{
\begin{talltblr}[         
entry=none,label=none,
note{}={+ p < 0.1, * p < 0.05, ** p < 0.01},
]                     
{                     
colspec={Q[]Q[]Q[]Q[]Q[]Q[]Q[]},
column{2,3,4,5,6,7}={}{halign=c,},
column{1}={}{halign=l,},
hline{14}={1,2,3,4,5,6,7}{solid, black, 0.05em},
}                     
\toprule
& $\hat{\tau}_{g, -2}$ & $\hat{\tau}_{g, -1}$ & $\hat{\tau}_{g, 0}$ & $\hat{\tau}_{g, 1}$ & $\hat{\tau}_{g, 2}$ & $\hat{\tau}_{g, 3}$ \\ \midrule 
$CCI_{g, b+e-2}$ & 590.5 & -148.5 & 189.5 & 674.9 & 1026.0 & 1494.5 \\
& (397.8) & (804.4) & (1113.7) & (1116.5) & (1111.9) & (1349.7) \\
$CCI_{g, b+e-1}$ &  & 1704.8** & 1532.7+ & 1550.1 & 940.2 & 638.2 \\
&  & (592.7) & (804.7) & (989.2) & (923.6) & (985.9) \\
$CCI_{g, b+e}$ &  &  & 1153.8* & 575.2 & 1394.8+ & 780.2 \\
&  &  & (560.5) & (666.6) & (709.5) & (840.9) \\
$CCI_{g, b+e+1}$ &  &  &  & 1209.2+ & 859.0 & 1976.2* \\
&  &  &  & (720.0) & (718.8) & (880.4) \\
$CCI_{g, b+e+2}$ &  &  &  &  & 1356.1* & 986.1 \\
&  &  &  &  & (675.0) & (981.8) \\
$CCI_{g, b+e+3}$ &  &  &  &  &  & 208.0 \\
&  &  &  &  &  & (850.9) \\
N & 11,739 & 11,739 & 11,739 & 11,739 & 11,739 & 11,739 \\
$R^2$ & 0.126 & 0.189 & 0.205 & 0.219 & 0.227 & 0.251 \\
FE: Municipality $g$ & X & X & X & X & X & X \\
FE: $B_{g, i} \times E_{g, i}$ & X & X & X & X & X & X \\
FE: $(B_{g, i} \times E_{g, i})S_g$ & X & X & X & X & X & X \\
\bottomrule
\end{talltblr}
}

        }
    \end{minipage}

    \begin{minipage}{.7\linewidth}
        \subcaption{Participation}
        \label{tab:reg_flex_particip_m}
        \resizebox{\columnwidth}{!}{%
            \resizebox{\ifdim\width>\linewidth 1\linewidth\else\width\fi}{!}{
\begin{talltblr}[         
entry=none,label=none,
note{}={+ p < 0.1, * p < 0.05, ** p < 0.01},
]                     
{                     
colspec={Q[]Q[]Q[]Q[]Q[]Q[]Q[]},
column{2,3,4,5,6,7}={}{halign=c,},
column{1}={}{halign=l,},
hline{14}={1,2,3,4,5,6,7}{solid, black, 0.05em},
}                     
\toprule
& $\hat{\tau}_{g, -2}$ & $\hat{\tau}_{g, -1}$ & $\hat{\tau}_{g, 0}$ & $\hat{\tau}_{g, 1}$ & $\hat{\tau}_{g, 2}$ & $\hat{\tau}_{g, 3}$ \\ \midrule 
$CCI_{g, b+e-2}$ & -0.010 & -0.008 & -0.018 & -0.010 & -0.022 & -0.007 \\
& (0.010) & (0.014) & (0.014) & (0.015) & (0.015) & (0.016) \\
$CCI_{g, b+e-1}$ &  & 0.000 & 0.012 & 0.008 & 0.013 & -0.009 \\
&  & (0.014) & (0.014) & (0.017) & (0.015) & (0.015) \\
$CCI_{g, b+e}$ &  &  & -0.025 & -0.004 & 0.002 & 0.007 \\
&  &  & (0.018) & (0.019) & (0.018) & (0.014) \\
$CCI_{g, b+e+1}$ &  &  &  & -0.015 & -0.030* & -0.014 \\
&  &  &  & (0.009) & (0.012) & (0.011) \\
$CCI_{g, b+e+2}$ &  &  &  &  & 0.010 & 0.003 \\
&  &  &  &  & (0.012) & (0.012) \\
$CCI_{g, b+e+3}$ &  &  &  &  &  & -0.003 \\
&  &  &  &  &  & (0.011) \\
N & 11,739 & 11,739 & 11,739 & 11,739 & 11,739 & 11,739 \\
$R^2$ & 0.070 & 0.107 & 0.095 & 0.076 & 0.076 & 0.083 \\
FE: Municipality $g$ & X & X & X & X & X & X \\
FE: $B_{g, i} \times E_{g, i}$ & X & X & X & X & X & X \\
FE: $(B_{g, i} \times E_{g, i})S_g$ & X & X & X & X & X & X \\
\bottomrule
\end{talltblr}
}

        }
    \end{minipage}

    \vspace{0.5em}
    \caption*{
        \footnotesize \textit{Notes:} These tables present the effect of the childcare provision expansion on child penalties (CP) of fathers in earnings (Panel (a)) and labor force participation (Panel (b)). See Section \ref{sec:rich_md} for more details and Equation~\eqref{eq:rich_md} for the specification. Parenthesis shows the clustered standard errors by municipality $g$.
    }
\end{table}

\clearpage

\section{Data}
    \label{sec:data}

\subsubsection*{Data Sources}
\label{sec:dataSources}

We use administrative data from the Central Bureau of Statistics Netherlands (CBS) on the universe of Dutch residents. Different data sources, such as municipality registers or tax records, are matched through unique individual or household anonymized identifiers. The following section presents the main variables used and sample construction. 

\paragraph{Tax and Employment Records}
Our primary data source is an extensive annual-level employer-employee data set derived from tax records (\textit{baansommentab}) covering 1999 to 2016. We analyze two labor market outcomes: unconditional earnings and employment. Employment is specified as having a job based on an employment contract between a firm and a person, excluding self-employment. Second, earnings data consist of yearly gross earnings after social security contributions but before taxes and health insurance contributions from official tax data.

\paragraph{Demographic and Education Information}
To enrich our understanding of the workforce, we incorporate demographic data into our analysis (\textit{gbapersoontab}). This includes birth year, date of death, sex, and annual information on the municipality of residence, household composition, marital status, and migration spells (\textit{gbaadresobjectbus} and \textit{vslgwbtab}). A unique aspect of our demographic data is the inclusion of a parent-child key (\textit{kindoudertab}). We use information on birthdates and the linkage between parents and their children to determine the first child for all legal parents, which may include both adoptive and biological parents. 

\paragraph{Childcare Provision Data}
An integral part of our study involves examining the role of childcare in labor market participation. To this end, we use records on childcare service providers using the firm's job classification (\textit{betab}), and data on job location that we use to compute our index of childcare supply per municipality (given from \textit{gemstplaatsbus/gemtplbus/ngemstplbus}). The job location data set contains each worker's municipality and firm ID, which we merge with the firm classification data.

\subsubsection*{Sample Definition}
\label{sec:sampDef}

A key aspect of our study is the examination of labor market outcomes around the time of first childbirth. We restrict the sample to individuals born in 1993 or earlier to ensure we observe labor market outcomes at sufficiently mature ages. To capture transitions into parenthood, we include only those whose age at first birth was below 44, as observed from 2003 onward. To ensure adequate labor market attachment before parenthood, we further restrict the sample to individuals who became parents at least 6 years after the typical graduation age for their highest educational attainment: 24 years for high school graduates, 26 years for vocational degree holders, and 27 years for those with a bachelor’s degree. This approach provides a balanced panel of pre-parenthood labor market trajectories while minimizing censoring concerns.

\clearpage

\section{Dutch Childcare Act Reform}
    \label{sec:childcareAct}
    
\subsection{Key Features of the Reform}
The \textit{Dutch Childcare Act of 2005} (\textit{Wet kinderopvang}) radically reformed childcare funding, provision, and regulation. It replaced a patchwork of local subsidies with a national, demand-driven system. The core expansion mechanisms included:

\paragraph{Demand-Side Funding.} The Act shifted from supply-side grants to demand-led financing, giving parents direct subsidies or tax credits for childcare. Costs are now shared by parents, employers, and the state, with compulsory employer contributions \citep{ChildcareCanada2013}. This tripartite funding greatly reduced out-of-pocket fees for families—on average cutting the effective parental fee by half over 2005--2009 \citep{Bettendorf2015}. All forms of formal care (daycare centers, childminders, after-school programs) became eligible for the same central subsidy, ending the old differentiation between “subsidized” and “unsubsidized” spots. Even informal care by relatives or licensed in-home providers (so-called \textit{gastouder} or “guestparent” care) was brought under the subsidy scheme \cite{ChildcareCanada2013}.

\paragraph{Market Liberalization.} By decoupling funding from public provision, the Act introduced market forces into childcare. For-profit providers were explicitly allowed to enter and expand, competing on an equal footing with non-profits in a newly privatized market \cite{NoaillyVisser2009}. The expectation was that increased competition and profit incentives would spur new capacity. Indeed, following the reform, childcare supply grew fastest in high-demand areas and was led largely by for-profit firms, while non-profit providers saw their market share decline (especially in less affluent regions). This outcome reflected providers gravitating toward municipalities with greater demand and purchasing power, raising some concerns about equitable access in low-income or rural areas. Overall, however, the liberalization unlocked rapid growth in the number of childcare facilities nationwide \cite{NoaillyVisser2009}.

\paragraph{Regulatory Consistency.} The Act sought to harmonize and simplify regulation of childcare. It introduced a single national quality framework (with light-touch regulation) applying to all providers \cite{ChildcareCanada2013}. This replaced the prior mix of local rules and employer-based arrangements, ensuring consistent standards (e.g., safety, staff qualifications, and child-to-staff ratios) across the country. Notably, legally mandated child-to-staff ratios (set in 1996) remained in place to safeguard minimum quality \cite{Decree1996}, but administrative burdens were kept low to encourage new entrants. In essence, the reform fully privatized childcare provision but under a uniform set of basic rules. Subsequent legislation in 2010 further cemented this consistency with a single statutory quality code for all early-childhood education and care services.

\subsection{Implementation Timeline and Childcare Expansion}
The Childcare Act took effect on \textbf{1 January 2005}, marking the start of a rapid expansion in Dutch childcare usage and supply. Key stages in its rollout include:

\begin{itemize}
  \item \textbf{2005 -- Transition Year:} The immediate impact was modest. The new funding system unified subsidies, slightly raising subsidies for previously “unsubsidized” parents and reducing them for the highest-income group, largely balancing out. Public spending on childcare actually dipped slightly in 2005, and the growth in childcare slots did not yet accelerate. Thus, no major change in labor supply was expected in this first year of reform implementation. The groundwork was laid, however, for broader participation: all formal childcare now qualified for support, and awareness of the new scheme grew among parents and providers.
  \item \textbf{2006--2007 -- Surge in Generosity and Supply:} In the two years after 2005, the government dramatically increased the childcare subsidy rates. By 2007, the average parental co-payment share dropped from about 37\% of the true cost to just 18\%. Middle-income families saw subsidy rates rise by 20--40 percentage-points, and even higher-income families gained substantially. These changes effectively halved the cost of childcare for families within a short period. As a result, demand responded sharply: enrollment in formal childcare surged, and providers raced to expand capacity. The number of childcare places grew rapidly after 2006, with especially fast growth in daycare and out-of-school care participation. To support school-aged children’s care, a 2007 mandate required all primary schools to ensure out-of-school care was available (often by coordinating with childcare organizations).
  \item \textbf{Post-2007 -- Consolidation and Further Growth:} Generous funding continued through the late 2000s. Public expenditure on childcare subsidies climbed from about €1.0 billion in 2004 to €3.0 billion by 2009, roughly 0.5\% of GDP. The take-up of childcare subsidies expanded across all income groups (with low-income families eligible for nearly free care). The increased demand was met predominantly by private centers and childminders entering the market under the light regulatory regime. By the end of 2009, the system had matured into a full-fledged childcare market with substantially higher coverage than pre-reform. The joint introduction of an expanded in-work tax credit for parents (the \textit{combinatiekorting}) during 2005--2009 also provided additional incentive for parental employment \cite{Bettendorf2015}.
\end{itemize}

\end{document}